\documentclass[]{article}
\usepackage{amsmath, amssymb, dsfont, mathrsfs,amsthm}
\usepackage{float}
\usepackage{mdframed}
\usepackage{appendix}
\usepackage[hidelinks]{hyperref}
\usepackage{graphicx}
\usepackage{caption}
\usepackage[margin=1in]{geometry}
\allowdisplaybreaks

\title{Coded Trace Reconstruction
}

\author{Mahdi Cheraghchi\thanks{Department of Computing, Imperial College London, UK. Email: m.cheraghchi@imperial.ac.uk} \and Ryan Gabrys\thanks{Department of Electrical and Computer Engineering, University of Illinois, Urbana-Champaign, USA. Email: ryan.gabrys@gmail.com}\and Olgica Milenkovic\thanks{Department of Electrical and Computer Engineering, University of Illinois, Urbana-Champaign, USA. Email: milenkov@illinois.edu}\and Jo\~ao Ribeiro\thanks{
Department of Computing, Imperial College London, UK. Email: j.lourenco-ribeiro17@imperial.ac.uk}
\thanks{Research supported by the DARPA Molecular Informatics Program, the NSF and SRC SemiSynBio Award, and NSF grant 1618366. This work was supported in part by the Center for Science of Information (CSoI), an NSF Science and Technology Center, under grant agreement CCF-0939370.}
}

\hyphenation{op-tical net-works semi-conduc-tor}

\date{}

\newcommand{\cC}{\mathcal{C}}

\newcommand{\cS}{\mathcal{S}}
\newcommand{\cT}{\mathcal{T}}

\newcommand{\E}{\mathds{E}}

\newcommand{\eps}{\epsilon}

\newtheorem{thm}{Theorem}
\newtheorem{prop}[thm]{Property}

\newtheorem{lem}[thm]{Lemma}

\newtheorem{defn}[thm]{Definition}

\newcommand{\poly}{\textnormal{poly}}
\newcommand{\bits}{\{0,1\}}
\newcommand{\Egood}{E_{\textsf{good}}}
\newcommand{\Enc}{\mathsf{Enc}}
\newcommand{\Dec}{\mathsf{Dec}}

\let\originalleft\left
\let\originalright\right
\renewcommand{\left}{\mathopen{}\mathclose\bgroup\originalleft}
\renewcommand{\right}{\aftergroup\egroup\originalright}

\usepackage{url}
\begin{document}

\maketitle

\begin{abstract}
Motivated by average-case trace reconstruction and coding for portable DNA-based storage systems, we initiate the study of \emph{coded trace reconstruction}, the design and analysis of high-rate efficiently encodable codes that can be efficiently decoded with high probability from few reads (also called \emph{traces}) corrupted by edit errors. Codes used in current portable DNA-based storage systems with nanopore sequencers are largely based on heuristics, and have no provable robustness or performance guarantees even for an error model with i.i.d.\ deletions and constant deletion probability. 
Our work is a first step towards the design of efficient codes with provable guarantees for such systems. 
We consider a constant rate of i.i.d.\ deletions, and perform an analysis of marker-based code-constructions. 
This gives rise to codes with redundancy $O(n/\log n)$ (resp.\ $O(n/\log\log n)$) that can be efficiently reconstructed from $\exp(O(\log^{2/3}n))$ (resp.\ $\exp(O(\log\log n)^{2/3})$) traces, where $n$ is the message length.
Then, we give a construction of a code with $O(\log n)$ bits of redundancy that can be efficiently reconstructed from $\poly(n)$ traces if the deletion probability is small enough.
Finally, we show how to combine both approaches, giving rise to an efficient code with $O(n/\log n)$ bits of redundancy which can be reconstructed from $\poly(\log n)$ traces for a small constant deletion probability.
\end{abstract}

\section{Introduction}\label{sec:intro}

Trace reconstruction was originally introduced by Batu, Kannan, Khanna, and McGregor~\cite{BKKM04}, motivated by problems in sequence alignment, phylogeny, and computational biology. The setting for the problem is as follows: There is an unknown string $x\in\{0,1\}^n$, and our goal is to reconstruct it. Towards this goal, we are allowed to ask for \emph{traces} of $x$, which are obtained by sending $x$ through a deletion channel. This channel independently deletes bits of $x$ with a given deletion probability $d$. As a result, each trace corresponds to a subsequence of $x$. We wish to minimize the number of traces required for reconstructing $x$ with high probability.

Since its introduction, the problem of trace reconstruction has been studied from several different perspectives. Two of the main perspectives correspond to \emph{worst-case} trace reconstruction~\cite{BKKM04, HMPW08, MPV14, DOS17, NP17}, where the reconstruction algorithm must work simultaneously for all strings in $\{0,1\}^n$, and \emph{average-case} trace reconstruction~\cite{BKKM04, HMPW08, MPV14, HHP17, PZ17, HPP18}, where the reconstruction algorithm is only required to work with high probability, taken over the choice of string and the randomness of the reconstruction algorithm for a uniformly random string. The number of traces required for average-case trace reconstruction is, as expected, much smaller than that required for worst-case trace reconstruction. The problem in question has also been studied from a combinatorial coding perspective~\cite{Lev01,SGSD17,GY18,HY19}.

The above described results on average-case trace reconstruction can be interpreted from a coding-theoretic perspective: They state that there exist very large codebooks which can be reconstructed efficiently from relatively few traces. However, no efficient encoders are known for such codes, and it may be possible to further reduce the number of traces required for reconstruction by relaxing the size of the code.

This point of view naturally leads to the problem of \emph{coded trace reconstruction}: The goal is to design high rate, efficiently encodable codes whose codewords can be efficiently reconstructed with high probability from very few traces with constant deletion probability. Here, ``high rate'' refers to a rate approaching $1$ as the block length increases. We remark that in such a case, the number of traces must grow with the block length of the code. Coded trace reconstruction is also closely related to and motivated by the read process in portable DNA-based data storage systems, which we discuss below.

\paragraph{Motivation}
A practical motivation for coded trace reconstruction comes from portable DNA-based data storage systems using DNA nanopores, first introduced in~\cite{YGM17}. 
In DNA-based storage, a block of user-defined data is first encoded over the nucleotide alphabet $\{A,C,G,T\}$, and then transformed into moderately long strands of DNA through a DNA synthesis process. For ease of synthesis, the DNA strands are usually encoded to have balanced $GC$-content, so that the fraction of $\{{A,T\}}$ and $\{{G,C\}}$ bases is roughly the same. To recover the block of data, the associated strand of DNA is sequenced with nanopores, resulting in multiple corrupted reads of its encoding. Although the errors encountered during nanopore sequencing include both deletions/insertions as well as substitution errors, careful read preprocessing alignment~\cite{YGM17} allows the processed reads to be viewed as traces of the data block's encoding. 
As a result, recovering the data block in question can be cast in the setting of trace reconstruction. Due to sequencing delay constraints\footnote{Current Oxford nanopore sequencers can process roughly 450 nucleotides per second.}, it is of great interest to minimize the number of reads required to reconstruct the data block.

The trace reconstruction procedures associated to the codes used by practical portable DNA-based storage systems~\cite{YGM17, OAC+18} are largely based on heuristics. The trace reconstruction algorithm proposed in~\cite{YGM17} operates on carefully designed coded strings, 
but makes use of multiple sequence alignment algorithms which are notoriously difficult to analyze rigorously. The trace reconstruction algorithm from~\cite{OAC+18} does not make use of specific read-error correction codes and is a variation of the Bitwise Majority Alignment (BMA) algorithm originally introduced in~\cite{BKKM04}. However, the BMA algorithm is only known to be robust when the errors correspond to i.i.d.\ deletions and the fraction of errors is at most $O(1/\log n)$, where $n$ denotes the blocklength of the code. 
Moreover, the proposed codes have been designed only for a fixed blocklength. As a result, the codes from~\cite{YGM17,OAC+18} have no robustness or performance guarantees for trace reconstruction even under i.i.d.\ deletions with constant deletion probability. Consequently, our work on coded trace reconstruction is a first step towards the development of codes with provable robustness guarantees and good performance for trace reconstruction in portable DNA-based data storage systems.

\subsection{Related work}

Recently, there has been significant interest both in trace reconstruction and coding for settings connected to DNA-based data storage.

Regarding trace reconstruction, a chain of works~\cite{BKKM04,HMPW08,MPV14,HHP17,PZ17,HPP18} has succeeded in substantially reducing the number of traces required for average-case trace reconstruction. The state-of-the-art result, proved by Holden, Pemantle, and Peres~\cite{HPP18}, states that $\exp(\log^{1/3} n)$ traces suffice to reconstruct a random $n$-bit string under arbitrary constant deletion probability. Much less is known about the worst-case trace reconstruction problem. The current best upper bound of $\exp(O(n^{1/3}))$ traces was proved concurrently by De, O'Donnell, and Servedio~\cite{DOS17} and Nazarov and Peres~\cite{NP17} through algorithms that rely only on single-bit statistics from the trace. They also showed that this upper bound is tight for this restricted type of algorithms. The gap between upper and lower bounds for trace reconstruction for both the average- and worst-case settings is still almost exponential. The state-of-the-art lower bounds, obtained by Chase~\cite{Cha19} and improving on the previous best lower bounds by Holden and Lyons~\cite{HL18}, are close to $\log^{5/2} n$ traces for average-case trace reconstruction and $n^{3/2}$ traces in the worst-case setting. Trace reconstruction has also been studied over a large class of sticky channels~\cite{MDSG16}, motivated by nanopore sequencers. A channel is said to be \emph{sticky} if it preserves the block structure of the input, i.e., no input runs are completely deleted, and no new runs are added to it. In particular, the deletion channel is not a sticky channel. 

Another line of work has focused on combinatorial settings related to coded trace reconstruction~\cite{Lev01,SGSD17,GY18,HY19}. The reported works study the number of traces required for exact reconstruction when each trace is subjected to a bounded number of adversarial edit errors, and the string may be assumed to belong to a code from some general class. We note that Levenshtein~\cite{Lev01} also studies a version of probabilistic trace reconstruction where the string is sent through a memoryless channel, but the deletion channel is not included in this family of channels. Other closely related combinatorial reconstruction problems consist in recovering a string from a subset of its substrings~\cite{kiah2016codes,GM17,GM18}.

Coded trace reconstruction is also related to the multi-use deletion channel. In fact, the decoding problem for this channel can be interpreted as coded trace reconstruction with a fixed number of traces. Some results are known about the capacity of this channel for small deletion probability~\cite{HM14}, and about maximum likelihood decoding~\cite{SDDF18}.

Some recent works have focused on coding for other channel models inspired by DNA-based storage. 
The model studied in~\cite{LSWY18,SC18,SRB18,LSWY19} views codewords as sets comprised of several sequences over some alphabet, and the (adversarial) errors consist of erasure and insertion of sequences in the set, as well as deletions, insertions, or substitutions within some of the sequences.
Each sequence in the set corresponds to a different DNA strand whose contents are reconstructed via high throughput sequencing-based procedures that introduce errors. The goal is to recover the whole set from the erroneous sequences. This is fundamentally different from our model, as we focus on the correct reconstruction of a long single strand of DNA from multiple sequencing attempts and high probabilistic deletion error rates, which is especially relevant for portable nanopore DNA-based data storage systems.
Another model similar to the one described above, motivated by the permutation channel, is studied in~\cite{KT18}. 
However, codewords there consist of multisets of symbols in some alphabet, and errors are comprised of deletions, insertions, substitutions, and erasures of symbols in the multiset. 
An information-theoretic treatment of related but abstracted models of DNA-based data storage may be found in~\cite{HSRT17,SH19}. Very recently, a model for clustering sequencing outputs according to the relevant DNA strand and codes that allow for correct clustering have been studied in~\cite{SYLW19}.

Although we focus on portable DNA-based storage systems with nanopore sequencers~\cite{YGM17,OAC+18}, there has also been significant activity on practical aspects of other types of DNA-based storage, e.g.,~\cite{GBC+13,GHP+15,YYM+15}. General overviews of the field can be found in~\cite{YKG+15,MGKY18}.

To conclude this section, we note that some novel results on trace reconstruction have been developed concurrently to this work. 
Davies, Racz, and Rashtchian~\cite{DRR19} study trace reconstruction over trees (in the graph-theoretic sense), where standard trace reconstruction corresponds to recovering a path. 
Ban et al.\ \cite{BCFSS19} consider the problem of trace reconstruction of distributions over small sets of strings (a setting commonly known as population recovery). 
Krishnamurthy et al.\ \cite{KMMP19} explore alternative and generalized forms of trace reconstruction (e.g., trace reconstruction over matrices and sparse strings). 
As mentioned before, Chase~\cite{Cha19} improved on the trace reconstruction lower bounds by Holden and Lyons~\cite{HL18}. Finally, the problem of coded trace reconstruction has also been studied concurrently and independently in a similar setting by Abroshan et al.\ \cite{AVDG19}. 
They consider a code obtained by concatenating several blocks, each block being a codeword of a Varshamov-Tenengolts code correcting one deletion. 

\subsection{Channel model}

The channel model used for representing nanopore systems may be summarized as follows. For a given input string $x\in\bits^n$, a deletion probability $d$, and an integer $t(n)$, the channel returns $t(n)$ \emph{traces} of $x$. Each trace of $x$ is obtained by sending $x$ through a deletion channel with deletion probability $d$, i.e., the deletion channel deletes each bit of $x$ independently with probability $d$, and outputs a subsequence of $x$ containing all bits of $x$ that were not deleted in order. The $t(n)$ traces are independent and identically distributed as outputs of the deletion channel for input $x$. 

Given a code $\cC\subseteq\bits^n$, we say that $\cC$ can be \emph{efficiently reconstructed from $t(n)$ traces} if there exists a polynomial $p(n)$ and a polynomial-time algorithm that recovers every $c\in\cC$ from $t(n)$ traces of $c$ with probability at least $1-1/p(n)$ over the probability distribution of the traces.

\subsection{Our contributions}

We initiate the study of coded trace reconstruction for efficient, high-rate codes against a constant rate of deletions. More specifically:
\begin{itemize}
	\item We analyze the performance of marker-based constructions coupled with worst-case trace reconstruction algorithms. These constructions have the advantage that they can be easily adapted to work with a large range of inner codes. 
	
At a high level, the construction operates by splitting an $n$-bit message into short blocks of length $O(\log^2 n)$, 
encoding each block with an inner code satisfying a certain constraint, and adding markers of length $O(\log n)$ between the blocks. 
The structure of the markers and the property of the inner code imply that, with high probability, we can split the traces into many shorter sub-traces associated with substrings of length $O(\log^2 n)$, and then apply the worst-case trace reconstruction algorithm on the sub-traces. Our main result in this context is Theorem~\ref{thm:tracesmarker1}.
	\begin{thm}\label{thm:tracesmarker1}
		For every constant deletion probability $d<1$, there exists an efficiently encodable code $\cC\subseteq\bits^{n+r}$ with redundancy $r=O(n/\log n)$ that can be efficiently reconstructed from $\exp(O(\log^{2/3}n))$ traces.
	\end{thm}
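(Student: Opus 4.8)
The plan is to implement the marker-based scheme sketched in the contributions section. Partition the $n$ message bits into $B = O(n / \log^2 n)$ blocks of length $\ell = \Theta(\log^2 n)$ each. Encode every block separately with an inner code $\cC_{\mathrm{in}}\subseteq\bits^{\ell}$ which has two properties: first, it has only a small redundancy overhead per block (we can afford $o(\ell)$ redundant bits per block, say $O(\log\log n)$ or even a constant fraction, since $B\cdot o(\ell) = o(n)$—but to get $r=O(n/\log n)$ overall we need the per-block inner redundancy to be $O(\log n)$, which is easily enough room); and second, every codeword is ``marker-free'' in the sense that it contains no substring equal to a fixed marker pattern $m$ of length $L = \Theta(\log n)$. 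Between consecutive encoded blocks insert the marker $m$. The marker can be taken to be something like $0^{L/2}1^{L/2}$ or a run-heavy pattern chosen so that (a) it is statistically easy to locate the surviving copy of a marker inside a trace with high probability, and (b) forbidding it inside inner codewords costs only $O(\log n)$ bits of redundancy per block (the number of strings of length $\ell$ avoiding a fixed length-$L$ pattern is $2^{\ell}(1-2^{-\Theta(L)})^{\Theta(\ell/L)}$, and choosing $L$ a suitable multiple of $\log n$ makes this at least $2^{\ell - O(\log n)}$; alternatively use an explicit substitution/escaping encoder). Total redundancy is $B\cdot O(\log n) + B\cdot L = O(n/\log n)$, and the encoder is clearly polynomial time.

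For reconstruction: each trace of the full codeword is, with high probability over the $d$-deletion channel, a string in which every marker region still contains an identifiable ``scar''—a long nearly-pure run structure that is extremely unlikely to be produced by a deletion-corrupted inner codeword, precisely because inner codewords avoid $m$ and short deletions cannot manufacture the forbidden pattern. First I would prove a decomposition lemma: there is a polynomial-time procedure that, given a single trace, outputs a partition of it into $B$ sub-traces together with $B-1$ marker fragments, such that with probability $1 - 1/\poly(n)$ the $j$-th sub-trace is exactly a trace (under the same deletion channel, with a slightly inflated effective deletion probability $d' < 1$ to absorb the boundary effects) of the $j$-th inner codeword. This uses a union bound over the $B = \poly(n)$ blocks, so it suffices that each individual marker is correctly located with failure probability $1/\poly(n)$—achievable because the marker has length $\Theta(\log n)$ and we only need the probability that a $d$-fraction-deleted length-$\Theta(\log n)$ marker fails to be distinguishable to be polynomially small, which standard concentration gives once the hidden constant in $\Theta(\log n)$ is large enough.

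Next, invoke the worst-case trace reconstruction algorithm of De–O'Donnell–Servedio / Nazarov–Peres, which reconstructs any string of length $\ell$ from $\exp(O(\ell^{1/3}))$ traces of it in polynomial time. With $\ell = \Theta(\log^2 n)$ this is $\exp(O((\log^2 n)^{1/3})) = \exp(O(\log^{2/3} n))$ traces. So: take $t(n) = \exp(O(\log^{2/3} n))$ traces of the codeword; decompose each into $B$ sub-traces as above; for each block $j$, collect the $t(n)$ sub-traces of inner codeword $j$ and run worst-case reconstruction to recover that block; then invert the inner encoder. A union bound over the (at most) $t(n)\cdot B = \poly(n)$ decomposition events and the $B$ inner-reconstruction events (each of which we can drive to failure probability $1/\poly(n)$ by taking the constant in the exponent slightly larger) gives overall success probability $1 - 1/\poly(n)$, and the total running time is polynomial. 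Concatenating the decoded blocks yields the message; this proves Theorem~\ref{thm:tracesmarker1}.

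The main obstacle is the decomposition lemma—specifically, making the marker-detection argument fully rigorous against an adversarial inner codeword while the deletion channel acts on everything at once. The subtleties are: the surviving marker is itself a random subsequence (it may be short or even—with exponentially small probability in $L$—entirely deleted), so the detector must tolerate a range of marker images; deletions at the seam can make a marker's scar merge with the tail of the preceding block or the head of the following block, shifting the cut point by a bounded amount and thereby slightly corrupting the boundary of the adjacent sub-trace (handled by charging those few bits to the inflated deletion probability $d' < 1$, which is still a constant, so worst-case reconstruction still applies); and one must verify that a deletion-corrupted inner codeword truly cannot imitate the marker pattern, which is where the ``no forbidden substring'' property of $\cC_{\mathrm{in}}$ is used together with the observation that the channel only deletes, never inserts, so it cannot create the pattern $m$ if $m$ was not present as a subsequence-with-the-right-run-structure to begin with. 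Choosing the marker to be run-based (so that ``$m$ as a substring'' and ``$m$ robustly visible after deletions'' coincide) is what makes this clean. Getting these constants to line up—marker length, inner block length, effective deletion probability, number of traces—is routine once the lemma is in place.
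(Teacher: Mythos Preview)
Your high-level plan---blocks of length $\Theta(\log^2 n)$, markers of length $\Theta(\log n)$, a decomposition lemma, then worst-case reconstruction on each block---is exactly the paper's, and the redundancy and trace-count arithmetic are correct.

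The genuine gap is in your inner-code constraint. You require only that inner codewords avoid the marker $m$ as a \emph{substring}, and then argue that ``the channel only deletes, never inserts, so it cannot create the pattern $m$.'' This is false: deletions certainly create new substrings. If $m=0^L$ and the codeword is $(0^{L-1}1)^k$, which avoids $0^L$, then deleting the $k$ isolated $1$'s manufactures a run of $k(L-1)$ zeros in the trace. More generally, forbidding a length-$\Theta(\log n)$ run as a substring only bounds the maximum run length to $\Theta(\log n)$, so a fake marker can appear after deleting $O(1)$ specific bits---a constant-probability event---and your decomposition lemma cannot meet the required $1-1/\poly(n)$ guarantee. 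The paper's fix is a strictly stronger \emph{density} constraint: every length-$\Theta(\log n)$ window of an inner codeword must have relative Hamming weight at least $1/3$. Under this, a spurious run of $10\log n$ zeros in the trace requires deleting $\Omega(\log n)$ consecutive $1$'s from the codeword, which has probability $d^{\Omega(\log n)}=n^{-\Omega(1)}$. The paper also sidesteps your ``inflated effective deletion probability $d'$'' at the seams---which is not literally correct, since boundary slop is not distributed as i.i.d.\ deletions---by taking the marker to be $0^\ell 1^\ell$, detecting its internal $0\to 1$ transition, and declaring the $i$-th sub-trace to be a genuine trace of $1^\ell\,\|\,\overline{x}^{(i)}\,\|\,0^\ell$; worst-case reconstruction then applies directly to a known-form string with the original deletion probability $d$.
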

	We significantly improve on this simple construction for small constant deletion probabilities with a more involved approach described in the second part of the paper. The above construction is relevant as it shows that we can instantiate the marker-based construction with any inner code satisfying a simple constraint and iterate the marker-based construction by further dividing each block of length $\log^2 n$ into blocks of length $(\log\log n)^2$ and adding markers of length $O(\log\log n)$ between them. In this setting, it is almost guaranteed that reconstruction of a small fraction of blocks will fail. Nevertheless, this problem can be easily resolved by adding error-correction redundancy to the string to be encoded. This leads to the following result, which can be extended beyond two marker levels.
	\begin{thm}\label{thm:leveltwo}
		For every constant deletion probability $d<1$, there exists an efficiently encodable code $\cC\subseteq\bits^{n+r}$ with redundancy $r=O(n/\log\log n)$ that can be efficiently reconstructed from $\exp(O(\log\log n)^{2/3})$ traces.
	\end{thm}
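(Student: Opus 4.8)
The plan is to iterate the marker construction behind Theorem~\ref{thm:tracesmarker1} for one more level, and to pay for the reconstruction failures that the finer granularity inevitably creates with an outer error‑correcting code whose rate loss is negligible. Concretely, I would first encode $x\in\bits^n$ with an explicit, efficiently encodable and decodable binary code $\cC_{\mathrm{out}}$ correcting an $O(\rho)$‑fraction of bit errors, where $\rho=\exp(-\Theta((\log\log n)^{2/3}))$; standard concatenated/expander constructions achieve this with rate $1-O(\rho\log(1/\rho))=1-o(1/\log\log n)$, i.e.\ $o(n/\log\log n)$ redundancy. I then partition the resulting string into sub‑blocks of length $L_2=\Theta((\log\log n)^2)$, encode each with an inner code $\cC_{\mathrm{in}}$ obeying a suitably scaled‑down version of the constraint used in Theorem~\ref{thm:tracesmarker1} (so that length‑$\mu_2=\Theta(\log\log n)$ markers stay identifiable inside traces and cannot be forged by sub‑block content), at cost $O(\log\log n)$ redundancy per sub‑block; insert a length‑$\mu_2$ marker between consecutive encoded sub‑blocks; group $\Theta(L_1/L_2)$ of these into level‑1 blocks of length $L_1=\Theta(\log^2 n)$; and insert a length‑$\mu_1=\Theta(\log n)$ marker between consecutive level‑1 blocks, chosen so the level‑2 structure cannot forge a level‑1 marker. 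The redundancy is then $O(n/\log n)$ from level‑1 markers, $O(n/\log\log n)$ from level‑2 markers and $\cC_{\mathrm{in}}$, and $o(n/\log\log n)$ from $\cC_{\mathrm{out}}$, hence $O(n/\log\log n)$ overall.

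For reconstruction from $t=\exp(C_0(\log\log n)^{2/3})$ traces with $C_0$ a large constant, I would proceed in stages. Since $\mu_1=\Theta(\log n)=\Omega(\log(tn))$, the argument of Theorem~\ref{thm:tracesmarker1} already gives that, except with probability $1/\poly(n)$, every level‑1 marker is located in every trace, so each trace splits into sub‑traces, one per level‑1 block. Fix a level‑1 block $B$. Each of $B$'s $\Theta(L_1/L_2)$ level‑2 markers survives identifiably in a given sub‑trace of $B$ with probability $1-(\log n)^{-\Omega(1)}$, and by $\cC_{\mathrm{in}}$ no sub‑block content forges one except with probability $(\log n)^{-\Omega(1)}$; taking the hidden constants in $\mu_2$ large, with probability $1-o(1)$ the sub‑trace is \emph{clean} (all markers survive, none forged), so the decoder splits it correctly. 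By a Chernoff bound and a union bound over the $\Theta(n/\log^2 n)$ level‑1 blocks, with probability $1-1/\poly(n)$ every $B$ has at least $t/2$ clean sub‑traces. For each sub‑block $S$ of $B$, I would extract its portion from every clean sub‑trace and run the worst‑case trace reconstruction algorithm of~\cite{DOS17,NP17}; since $|S|=\Theta((\log\log n)^2)$, running it with $\exp(C_1(\log\log n)^{2/3})$ traces (which also suffices, by a tail bound on the statistics it estimates, to make its failure probability at most $\rho$) recovers $S$, and $t/2$ exceeds this number for $C_0$ large; then decode $S$ with $\cC_{\mathrm{in}}$. Finally apply the decoder of $\cC_{\mathrm{out}}$ to the recovered sub‑block messages.

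The correctness of the last step reduces to bounding the number of failed sub‑blocks. Conditioned on the high‑probability event that every level‑1 block has at least $t/2$ clean sub‑traces, distinct sub‑blocks occupy disjoint coordinate ranges in every trace and the cleanness conditioning factorizes over these ranges, so the failure events of distinct sub‑blocks are independent Bernoullis, each of parameter at most $\rho$ up to a $1+o(1)$ factor coming from conditioning a sub‑trace on being clean. There are $N=\Theta(n/(\log\log n)^2)$ sub‑blocks and $\rho N=n^{1-o(1)}\to\infty$, so a Chernoff bound gives at most $2\rho N$ failed sub‑blocks except with probability $\ll 1/\poly(n)$; since each failed sub‑block corrupts at most $L_2$ bits of the input to $\cC_{\mathrm{out}}$, only an $O(\rho)$‑fraction of those bits are wrong, which $\cC_{\mathrm{out}}$ corrects. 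The whole decoder runs in polynomial time and succeeds with probability $1-1/\poly(n)$.

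The hard part is the interplay that does not arise in Theorem~\ref{thm:tracesmarker1}: the short length‑$\Theta(\log\log n)$ markers cannot be made to survive in \emph{all} traces at once, so reconstruction at the finest level must be allowed to fail, and the three quantitative facts that have to line up are (a) each level‑1 block still keeps a $1-o(1)$ fraction of clean sub‑traces, (b) the per‑sub‑block failure probability can be driven down to $\exp(-\Theta((\log\log n)^{2/3}))$ — exactly the regime where the worst‑case algorithm still tolerates the $\poly(\log\log n)$‑factor boost without leaving the $\exp(O((\log\log n)^{2/3}))$ trace budget — and (c) this probability is small enough that the repairing code $\cC_{\mathrm{out}}$ costs only $o(n/\log\log n)$ redundancy, so the total redundancy stays dominated by the level‑2 markers. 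One must also be mildly careful that conditioning a sub‑trace on cleanness distorts the distribution fed to the worst‑case algorithm by only a $1+o(1)$ multiplicative factor. The same scheme iterates beyond two levels: recursing with block lengths $(\log n)^2,(\log\log n)^2,\dots,(\log^{(\ell)}n)^2$ and matching marker lengths yields redundancy $O(n/\log^{(\ell)}n)$ from $\exp(O((\log^{(\ell)}n)^{2/3}))$ traces for any constant $\ell$.
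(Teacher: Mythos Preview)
Your overall plan---iterate the marker construction once more and repair the resulting local failures with an outer error-correcting code---is exactly the paper's approach. However, there is a quantitative gap. You claim that ``by a Chernoff bound and a union bound over the $\Theta(n/\log^2 n)$ level-1 blocks, with probability $1-1/\poly(n)$ every $B$ has at least $t/2$ clean sub-traces.'' This step fails: a given level-1 block has fewer than $t/2$ clean sub-traces with probability $\exp(-\Theta(t))$, but $t=\exp(O((\log\log n)^{2/3}))=o(\log n)$, so $\exp(-\Theta(t))=n^{-o(1)}$ rather than $n^{-\Omega(1)}$. The union bound over $\Theta(n/\log^2 n)$ blocks therefore yields $n^{1-o(1)}\to\infty$, not $1/\poly(n)$. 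Your subsequent conditioning, and with it the independence argument for sub-block failures across the whole codeword, rests on this false global event.

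The paper sidesteps the issue by carrying out the failure accounting one level up. It packages the entire level-2 procedure for a single level-1 block into an algorithm $\mathcal{A}'$ that takes all $t$ sub-traces of that block (clean or not); $\mathcal{A}'$ fails with probability at most $2/m=2/\log^2 n$ (at most $t/m^2<1/m$ for ``some sub-trace is not clean,'' plus $m\cdot\exp(-\Omega(\log^2 m))$ for ``worst-case reconstruction fails on some sub-block''). Because the level-1 sub-traces of different level-1 blocks are genuinely independent once all level-1 markers are located---and \emph{that} event does hold with probability $1-1/\poly(n)$ since $\mu_1=\Theta(\log n)$---a Chernoff bound over the $n/\log^2 n$ level-1 blocks shows that at most an $O(1/\log^2 n)$ fraction of them fail, which an outer code of relative distance $\Theta(1/\log^2 n)$ and redundancy $O(n\log\log n/\log n)=o(n/\log\log n)$ then repairs. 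Your scheme is easily patched the same way: do not insist that every level-1 block has $t/2$ clean sub-traces, but instead declare the (independently) rare level-1 blocks that lack them as failed and let the outer code absorb them together with the worst-case reconstruction failures.
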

	
	\item We take advantage of the fact that we can instantiate the marker-based constructions with a large range of inner codes to construct high-rate marker-based codes over the $\{A,C,G,T\}$ alphabet with two important properties: The codes have balanced $GC$-content and provably require few traces to be efficiently reconstructed. We follow the same ideas as the marker-based constructions above, but with different markers and an inner code over a larger alphabet and with stronger constraints. In this context, we obtain the following results.
	\begin{thm}\label{thm:quatbasic}
		For every constant deletion probability $d<1$, there exists an efficiently encodable code $\cC\subseteq\{A,C,G,T\}^{n+r}$ with redundancy $r=O(n/\log n)$ and balanced $GC$-content that can be efficiently reconstructed from $\exp(O(\log^{2/3}n))$ traces.
	\end{thm}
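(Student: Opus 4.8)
The plan is to follow the marker-based blueprint behind Theorem~\ref{thm:tracesmarker1} --- split the message into blocks of length $\Theta(\log^2 n)$, encode each block with an inner code, and separate consecutive inner codewords by easily recognizable markers of length $\Theta(\log n)$ --- but now working over $\{A,C,G,T\}$ and choosing the markers and inner-code constraints so that the codewords are exactly $GC$-balanced while retaining everything the reconstruction analysis needs. I would take each marker to be a single long run $M=A^{m}$ with $m=\Theta(\log n)$ (the constant depending on $d$); since this is all-$\{A,T\}$, the markers skew the global $GC$-content, which I compensate for inside the inner code (alternatively one may use a $GC$-balanced two-run marker such as $G^{m/2}A^{m/2}$). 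The inner code $\cC_\mathsf{in}\subseteq\{A,C,G,T\}^{\ell}$, $\ell=\Theta(\log^2 n)$, is cut out by three local constraints: (i) a \emph{window constraint} --- every window of $\Theta(\log n)$ consecutive symbols contains at least a constant fraction of non-$A$ symbols, which a random string satisfies with probability $1-n^{-\Omega(1)}$ and which therefore costs negligible rate; (ii) a short, $\Theta(\log n)$-symbol, $A$-free buffer at each end of the inner codeword; and (iii) a prescribed number $\ell/2+m/2$ of $\{G,C\}$-symbols per inner codeword, so that each marker-plus-block unit has $GC$-content exactly $1/2$. Being defined by local constraints, $\cC_\mathsf{in}$ has rate $1-o(1)$ and admits a $\poly(n)$-time enumerative encoder.

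For the decoder, the first ingredient is that worst-case trace reconstruction of a length-$\ell$ string over a constant-size alphabet needs only $\exp(O(\ell^{1/3}))$ traces: for each symbol $a$, applying $y\mapsto\mathds{1}[y=a]$ coordinatewise maps every trace of a string $w$ to a trace of the binary string $(\mathds{1}[w_i=a])_i$ with the same deletion pattern, so running the binary algorithm of~\cite{DOS17,NP17} for all four symbols --- amplifying by a $\poly(\log n)$ factor of traces to push the failure probability below $n^{-\omega(1)}$ --- recovers $w$ from $\exp(O(\ell^{1/3}))$ traces. The second ingredient is parsing: given $t=\exp(O(\log^{2/3}n))$ traces of a codeword, locate in each trace its maximal $A$-runs of length exceeding a threshold $\tau$ chosen so that, by a Chernoff bound, the surviving part of any marker exceeds $\tau$ with probability $1-n^{-\Omega(1)}$, while by the window constraint every $A$-run occurring inside a block's trace is below $\tau$ with probability $1-n^{-\Omega(1)}$; the $A$-free buffers ensure the surviving part of each marker is delimited by non-$A$ symbols, so splitting a trace at its long $A$-runs yields, for each block, exactly a trace of that block (unless an entire buffer was deleted, an event of probability $n^{-\Omega(1)}$ folded into the overall failure probability). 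Since $\ell=\Theta(\log^2 n)$ gives $\exp(O(\ell^{1/3}))=\exp(O(\log^{2/3}n))$, feeding the $t$ sub-traces of block $j$ to the length-$\ell$ reconstruction algorithm recovers block $j$ with probability $1-n^{-\Omega(1)}$. A union bound over the $B=\Theta(n/\log^2 n)$ blocks and the $n^{1+o(1)}$ marker and buffer events (note $\exp(O(\log^{2/3}n))=n^{o(1)}$) makes the total failure probability $o(1)$, indeed $\le 1/p(n)$ for a polynomial $p$ of our choosing by enlarging the $d$-dependent constants; the decoding is plainly polynomial time.

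Accounting for redundancy: the $\Theta(n/\log^2 n)$ markers of length $\Theta(\log n)$ contribute $\Theta(n/\log n)$ symbols; the $A$-free buffers contribute $\Theta(n/\log n)$; the per-block $\{G,C\}$-count constraint costs $\tfrac12\log\ell+O(1)=O(\log\log n)$ symbols per block, hence $o(n/\log n)$ in total (the fraction of length-$\ell$ strings with a fixed $\{G,C\}$-count is $\Theta(1/\sqrt\ell)$); and the window constraint is essentially free. Thus $r=O(n/\log n)$, and the code is $GC$-balanced by construction and efficiently encodable via enumerative coding for the local constraints.

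The crux --- and the only substantive departure from the binary case --- is engineering the inner code so that all requirements hold at once: quaternary alphabet, rate $1-o(1)$, exact $GC$-balance, robustly detectable markers, clean trace-splitting at block boundaries, and efficient encodability. The resolution above is to realize each of them as a local constraint (the window constraint plus buffers give detectability and clean splitting; the prescribed $\{G,C\}$-count gives exact balance; locality gives efficient encoding) and to make the marker an all-$A$ run so that detection is governed by one Chernoff bound. The remaining technical work is quantitative: the threshold $\tau$, the marker length $m$, the window size, and the buffer length must be fixed with constants depending on $d$ so that, for every constant $d<1$, the marker's surviving run reliably exceeds the longest $A$-run a block's trace can create. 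I expect verifying these relations to be the bulk of the effort, though it is routine given the setup.
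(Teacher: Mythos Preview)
Your approach is correct and reaches the same conclusion, but the route is genuinely different from the paper's. The paper keeps the marker itself $GC$-balanced by taking $M=(AC)^\ell\|(TG)^\ell$, and obtains the inner code via a coordinatewise bijection $\Psi:\{0,1\}^n\times\{0,1\}^n\to\{A,C,G,T\}^n$ that sends a pair $(c_1,c_2)$---with $c_1$ from a balanced binary code and $c_2$ from the binary code of Section~\ref{sec:inner code}---to a quaternary string whose $GC$-count equals $w(c_1)$ and whose $\{T,G\}$-count in every window equals $w(c_2)$ in that window; Property~\ref{prop:manyones} for $c_2$ then yields the quaternary window constraint (Property~\ref{prop:quat}), and parsing is done by locating long $\{A,C\}$-only substrings and the subsequent $\{A,C\}\to\{T,G\}$ transition inside each marker. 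By contrast, you take an all-$A$ marker (so the marker is \emph{not} balanced), restore exact balance through a per-block $\{G,C\}$-count constraint, and add $A$-free buffers to guarantee that long $A$-runs in the trace coincide exactly with the markers; your inner code is then carved out by these explicit local/global constraints and encoded enumeratively. The paper's $\Psi$-trick is more modular---it reduces everything to off-the-shelf binary codes (Knuth-type balanced codes and the Section~\ref{sec:inner code} code) with no enumerative machinery---whereas your version is more self-contained and has the pleasant feature that the parsed sub-traces are \emph{exactly} traces of the inner codewords rather than of $(\text{marker half})\|c\|(\text{marker half})$. One point you should make explicit when you write this out: the efficiency of your enumerative encoder hinges on the DP state space being polynomial, which holds because the sliding-window constraint has window $\Theta(\log n)$ and you only need the $A$/non-$A$ pattern of the window (so $2^{\Theta(\log n)}=n^{\Theta(1)}$ states) together with a running $\{G,C\}$-counter of size $O(\log^2 n)$; this is routine but worth saying, since ``local constraints'' alone does not immediately give $\poly(n)$ encoding when the window is $\Theta(\log n)$.
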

	\begin{thm}\label{thm:quatleveltwo}
		For every constant deletion probability $d<1$, there exists an efficiently encodable code $\cC\subseteq\{A,C,G,T\}^{n+r}$ with redundancy $r=O(n/\log\log n)$ and balanced $GC$-content that can be efficiently reconstructed from $\exp(O(\log\log n)^{2/3})$ traces.
	\end{thm}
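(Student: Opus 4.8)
The plan is to run the iterated two-level marker construction underlying Theorem~\ref{thm:leveltwo}, but over the alphabet $\{A,C,G,T\}$ and instantiated with the $GC$-balanced markers and the more strongly constrained inner code used to pass from Theorem~\ref{thm:tracesmarker1} to Theorem~\ref{thm:quatbasic}; the only genuinely new work is to check that every ingredient preserves $GC$-balance and that none of the parameter bounds degrades. Concretely, given an $n$-symbol message over $\{A,C,G,T\}$ I would first apply an outer error-correcting code: regard the message as a string over an alphabet $\Sigma$ of prime-power size $q=n^{o(1)}$ and encode it with a Reed--Solomon code of rate $1-\Theta(\gamma)$ correcting a $\Theta(\gamma)$ fraction of symbol errors, where $\gamma=1/\poly(\log\log n)$ is fixed later. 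The $\Theta(n)$ resulting $\Sigma$-symbols are laid out in two nested levels: they are split into outer blocks of length $\Theta(\log^2 n)$, each outer block is split into inner blocks, each inner block is mapped injectively and efficiently (by ranking) into a codeword of an inner code $\cC_{\mathrm{in}}\subseteq\{A,C,G,T\}^{L_2}$, $L_2=\Theta((\log\log n)^2)$, that is $GC$-balanced and contains no substring equal to a marker pattern (we take $q\le|\cC_{\mathrm{in}}|$), and finally a $GC$-balanced inner marker of length $\Theta(\log\log n)$ is placed between consecutive inner blocks and a $GC$-balanced outer marker of length $\Theta(\log n)$ between consecutive outer blocks. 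Since $\cC_{\mathrm{in}}$ and all markers are $GC$-balanced, so is the whole codeword, and the encoder is polynomial time because $\cC_{\mathrm{out}}$ and the inner ranking map are.

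For reconstruction from $T=\exp(O((\log\log n)^{2/3}))$ traces I would mimic the decoders of Theorems~\ref{thm:quatbasic} and~\ref{thm:leveltwo}. First, using the length-$\Theta(\log n)$ outer markers and the marker-avoidance property of $\cC_{\mathrm{in}}$, parse each trace into subtraces aligned with the outer blocks: a single parse step fails with probability $1/\poly(n)$, so a union bound over the $\poly(n)$ (trace, outer block) pairs shows that, with probability $1-1/\poly(n)$, every outer block is correctly delimited in every trace; call this event $G$. Conditioned on $G$, the collections of subtraces belonging to distinct outer blocks are independent. Next, inside each outer-block subtrace use the length-$\Theta(\log\log n)$ inner markers to parse into per-inner-block subtraces, discarding any that are inconsistent with the marker structure; after tuning constants, for a fixed inner block this yields at least $T_0=\exp(O((\log\log n)^{2/3}))$ genuine subtraces except with probability at most $\gamma$. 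Then run the worst-case trace reconstruction algorithm of~\cite{DOS17,NP17} on the surviving subtraces of each inner block: from $T_0$ traces it recovers a length-$L_2$ block except with probability at most $\gamma$ (boosting the constant-probability guarantee of~\cite{DOS17,NP17} to $1-\gamma$ costs only a $\poly(\log\log n)$ factor in the number of traces since $L_2^{1/3}=\Theta((\log\log n)^{2/3})$). Finally, read off candidate $\Sigma$-symbols from the recovered inner blocks and run the decoder of $\cC_{\mathrm{out}}$. Each inner block is therefore wrongly recovered with marginal probability at most $2\gamma$; since (given $G$) the failure indicators of inner blocks in distinct outer blocks are independent and each outer block contributes at most $L_1/L_2=\Theta(\log^2 n/(\log\log n)^2)$ of them, a Chernoff bound for sums of independent bounded variables shows that the number of wrongly recovered blocks exceeds $\Theta(\gamma)\cdot(n/L_2)$ with probability only $1/\poly(n)$, hence stays within the correction radius of $\cC_{\mathrm{out}}$, and the message is recovered with probability $1-1/\poly(n)$.

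It remains to bound the parameters. The trace count is $\exp(O((\log\log n)^{2/3}))$: this is exactly the cost of worst-case reconstruction on length-$\Theta((\log\log n)^2)$ inner blocks, and the boosting to failure probability $\gamma=1/\poly(\log\log n)$ and the location of the $\Theta(\log n)$-length outer markers contribute at most a $\poly(\log\log n)=\exp(O(\log\log\log n))$ factor, which is absorbed because $\log\log\log n=o((\log\log n)^{2/3})$. The redundancy is $O(n/\log\log n)$: the inner markers contribute $\Theta((n/L_2)\log\log n)=\Theta(n/\log\log n)$, the outer markers $\Theta((n/\log^2 n)\log n)=O(n/\log\log n)$, the $GC$-balance and marker-avoidance constraints on $\cC_{\mathrm{in}}$ cost $O(\log L_2)$ bits per inner block, i.e.\ $O((n/L_2)\log\log\log n)=o(n/\log\log n)$, and $\cC_{\mathrm{out}}$ costs $\Theta(\gamma n)=o(n/\log\log n)$. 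Decoding is polynomial time: parsing is near-linear, there are $O(n)$ calls to the worst-case reconstruction subroutine, each running in time $\poly(\log\log n)\cdot\exp(O((\log\log n)^{2/3}))=n^{o(1)}$, and $\cC_{\mathrm{out}}$ decodes in polynomial time.

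The part I expect to be the main obstacle is not the marker or concentration bookkeeping, which is inherited essentially verbatim from Theorems~\ref{thm:tracesmarker1}, \ref{thm:leveltwo} and~\ref{thm:quatbasic}, but verifying that the $GC$-balance requirement is simultaneously compatible with all the other constraints: one must exhibit quaternary markers that are both $GC$-balanced and reliably locatable inside deletion-corrupted traces, and a $GC$-balanced inner code $\cC_{\mathrm{in}}\subseteq\{A,C,G,T\}^{\Theta((\log\log n)^2)}$ that avoids those marker patterns, has at least $q=n^{o(1)}$ codewords, and admits efficient ranking and unranking. Showing that these stronger constraints cost only $O(\log L_2)$ bits of redundancy per inner block---so that the $O(n/\log\log n)$ bound is undisturbed---is the crux; with that in hand, the argument is precisely the alphabet-agnostic combination of the two-level marker construction with an outer error-correcting code already used for Theorem~\ref{thm:leveltwo}.
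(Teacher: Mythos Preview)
Your overall architecture---two nested levels of $GC$-balanced markers, a constrained $GC$-balanced inner code, worst-case trace reconstruction on the innermost blocks, and an outer error-correcting code absorbing the $1/\poly(\log\log n)$ per-block failure probability via a Chernoff bound---is exactly the paper's approach: the paper literally says Theorem~\ref{thm:quatleveltwo} follows by rerunning Section~\ref{sec:leveltwo} with the quaternary markers $(AC)^\ell\|(TG)^\ell$ and the quaternary inner code of Section~\ref{sec:synthesisconstraints}.

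There is, however, a concrete parameter bug in your outer code. You take a Reed--Solomon code over an alphabet of size $q=n^{o(1)}$ (indeed you need $q\le |\cC_{\mathrm{in}}|\le 4^{L_2}=2^{\Theta((\log\log n)^2)}$ so that one $\Sigma$-symbol fits in one inner block). But a Reed--Solomon code over $\F_q$ has block length at most $q$, whereas you need to encode $\Theta(n/(\log\log n)^2)$ $\Sigma$-symbols; since $2^{\Theta((\log\log n)^2)}\ll n/(\log\log n)^2$, no such RS code exists. The paper avoids this by using a \emph{binary} (or quaternary) outer code on the Zyablov bound---a Reed--Solomon code concatenated with a Gilbert--Varshamov inner code---with relative distance $\Theta(1/\log^2 n)$ and redundancy $O(n\log\log n/\log n)=o(n/\log\log n)$; see $\cC_{\mathsf{Ham}}$ in Section~\ref{sec:leveltwo} and Appendix~\ref{app:hamming}. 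Replacing your RS code by such a concatenated code (or any efficiently decodable quaternary code with relative distance $\Theta(1/\poly(\log\log n))$ and rate $1-o(1/\log\log n)$) fixes the gap without touching the rest of your argument.

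On what you flag as the ``main obstacle''---exhibiting a $GC$-balanced inner code compatible with marker detection at negligible redundancy---the paper gives a clean explicit construction you may want to adopt rather than leave abstract. It defines a bijection $\Psi:\bits^k\times\bits^k\to\{A,C,G,T\}^k$ sending $(a_i,b_i)$ to $A,T,C,G$ according to the four bit-pairs, and sets $\Enc'(x^{(1)}\|x^{(2)})=\Psi(\Enc_1(x^{(1)}),\Enc_2(x^{(2)}))$, where $\Enc_1$ is a Knuth-type balanced binary code and $\Enc_2$ is the binary code from Section~\ref{sec:inner code} whose every length-$\sqrt{m}$ window has weight $\ge |s|/3$. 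The first coordinate controls $GC$-content exactly, the second guarantees enough $T$'s and $G$'s in every short window (Property~\ref{prop:quat}), which is precisely what is needed to rule out spurious long $\{A,C\}$-only substrings in traces. Both component codes have $O(\log m)$ redundancy, so the cost is indeed $O(\log L_2)$ per inner block as you claimed.
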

	
	\item The result of Theorem~\ref{thm:tracesmarker1} may be further improved by considering a more careful design of the high-rate inner code to be used in the marker-based constructions, provided that the deletion probability is a small enough constant. This allows for using a modified version of an algorithm for average-case trace reconstruction described in~\cite{HMPW08} which leads to a substantial reduction in the number of traces required for reconstruction and barely any rate changes. As a first step towards achieving this goal, we first design a low-redundancy code that can be efficiently reconstructed from polynomially many traces. The proposed coding scheme relies on the fact that we can efficiently encode $n$-bit messages into strings that are almost \emph{subsequence-unique} (see Definition~\ref{def:subseq}) via explicit constructions of \emph{almost $k$-wise independent spaces} (see Section~\ref{sec:almostind}). The average-case trace reconstruction algorithm from~\cite{HMPW08} operates on subsequence-unique strings, and a simple adaptation of the algorithm suffices for our approach. In summary, we have the following result.
	\begin{thm}\label{thm:polytraces}
		If the deletion probability is a small enough constant, there exists an efficiently encodable code $\cC\subseteq\bits^{n+r}$ with redundancy $r=O(\log n)$ that can be efficiently reconstructed from $\poly(n)$ traces.
	\end{thm}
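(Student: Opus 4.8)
The plan is to combine two ingredients. First, the average‑case reconstruction algorithm of~\cite{HMPW08} is really an \emph{alignment‑and‑voting} algorithm whose analysis needs only one structural property of the string being reconstructed: that it is subsequence‑unique in the sense of Definition~\ref{def:subseq}; for random strings this property holds with high probability, which is how~\cite{HMPW08} obtains its $\poly(n)$‑trace guarantee for sufficiently small constant $d$. Second, explicit generators for almost $k$‑wise independent distributions (Section~\ref{sec:almostind}) let us \emph{deterministically} map an arbitrary message to a subsequence‑unique string while spending only $O(\log n)$ bits of redundancy. Concretely, fix the window length $\ell=\Theta(\log n)$ dictated by Definition~\ref{def:subseq}, and let $G\colon\bits^s\to\bits^n$ be an explicit generator for an $\eps$‑almost $k$‑wise independent distribution with $k=2\ell=\Theta(\log n)$ and $\eps=n^{-\Omega(1)}$; such generators exist with seed length $s=O(k+\log(1/\eps)+\log\log n)=O(\log n)$.

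The encoder, on input $x\in\bits^n$, searches over all $2^s=\poly(n)$ seeds for the first $\sigma$ such that $y:=x\oplus G(\sigma)$ is (almost) subsequence‑unique; this is checkable in polynomial time, being a conjunction of $\poly(n)$ events each depending on $O(\log n)$ coordinates of $y$. The codeword is $y$ followed by a short marker of length $\Theta(\log n)$ and then the raw seed $\sigma$, for total redundancy $r=O(\log n)$; the map is injective since messages with the same chosen seed have distinct $y$‑parts and messages with different seeds have distinct suffixes. To decode, the receiver uses the marker to locate, in each trace, the sub‑trace of the $\Theta(\log n)$‑bit suffix carrying $\sigma$, and recovers $\sigma$ by running a worst‑case trace reconstruction algorithm on these sub‑traces---since the suffix has length $O(\log n)$ this needs only $\exp(O((\log n)^{1/3}))=n^{o(1)}$ traces, for any constant $d<1$. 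It then runs the (adapted) algorithm of~\cite{HMPW08} on the $y$‑portions of the traces to recover $y$, and outputs $x=y\oplus G(\sigma)$.

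Two things must be verified. (i) \emph{The encoder's search always succeeds:} for every fixed $x$, a $1-o(1)$ fraction of seeds $\sigma$ make $x\oplus G(\sigma)$ subsequence‑unique. This follows from a union bound, using that XORing by the constant $x$ preserves $\eps$‑almost $k$‑wise independence (the uniform distribution is shift‑invariant), that each ``confusable pair of length‑$O(\log n)$ windows'' event underlying a failure of subsequence‑uniqueness depends on at most $2\ell\le k$ coordinates and hence has probability at most its value under a uniform string plus $\eps$, which is $n^{-\Omega(1)}$ once the constant in $k=\Theta(\log n)$ is chosen large enough, and that there are only $\poly(n)$ such events. (ii) \emph{The algorithm of~\cite{HMPW08} goes through on any subsequence‑unique string:} its alignment‑and‑voting analysis never used randomness of the string beyond Definition~\ref{def:subseq}, so it applies verbatim, with at most a minor modification to absorb the few ``bad'' windows allowed by the word \emph{almost} (skip a bad window and realign on the next good one, at the cost of a constant‑factor increase in the number of traces), still fitting within the $\poly(n)$‑trace budget and an inverse‑polynomial failure probability.

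The main obstacle is point (ii) together with pinning down the right notion of ``almost subsequence‑unique'': one must formulate it so that it is simultaneously (a) implied by almost $k$‑wise independence via the clean union bound of (i) and (b) strong enough for the bootstrapping argument of~\cite{HMPW08} to run, and then carry out and verify the small adaptation of the algorithm and its analysis under a constant deletion probability below the~\cite{HMPW08} threshold. Once the definition is fixed, the remaining pieces---the seed‑length bound for almost $k$‑wise independence, the XOR‑preservation argument, and the marker‑delimited protection and worst‑case reconstruction of the $O(\log n)$‑bit seed---are standard.
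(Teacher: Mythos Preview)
Your proposal shares the paper's core insight: use an explicit $\eps$-almost $k$-wise independent generator to deterministically map any message $x$ to a string $y=x\oplus G(\sigma)$ that is $w$-subsequence-unique for $w=\Theta(\log n)$, then apply the algorithm of~\cite{HMPW08}. Where you diverge is in how the seed $\sigma$ is packaged and recovered. The paper simply prepends it, writing $\Enc(x)=z||y$, and recovers $z$ together with the first $O(\log n)$ bits of $y$ via the bit-by-bit bootstrapping of Lemma~\ref{lem:recoverfirst} (this already costs $\poly(n)$ traces, so nothing is saved by being cleverer here); then, because the prefix $z$ is not part of the subsequence-unique portion, it restricts attention to ``good'' matchings that begin after the first $|z|$ positions of the trace (Lemmas~\ref{lem:goodmatchprob} and~\ref{lem:goodmatchstruct}). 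No marker is used.

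Your variant---appending a marker and the seed as $y||M||\sigma$ and recovering $\sigma$ by worst-case reconstruction on an isolated $O(\log n)$-bit suffix---has a genuine gap at the step ``run the (adapted) algorithm of~\cite{HMPW08} on the $y$-portions of the traces.'' You cannot cleanly extract a trace of $y$ alone: the only synchronisation point a marker such as $0^\ell 1^\ell$ offers is the $01$-boundary inside $M$, so the prefix of the trace up to that point is a trace of $y||0^\ell$, not of $y$. The matching lemma underlying~\cite{HMPW08} (Lemma~\ref{lem:matchsubsequnique} here) requires the traced string to be $w$-subsequence-unique, which $y||0^\ell$ and $y||M||\sigma$ are not; a length-$w$ substring of $y$ may well appear as a subsequence of a $1.1w$-window that straddles $y$ and the suffix, producing spurious matchings. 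The paper runs into exactly this trailing-suffix issue in Section~\ref{sec:mainred} and resolves it by imposing an additional property on $y$ (Property~\ref{prop:trailing}, established via the same almost-$k$-wise argument in Lemma~\ref{lem:trailingsubsequnique}); you have not done anything analogous. A secondary issue is that locating the marker at all requires the trace of $y$ not to contain the marker's signature (e.g., a long $0$-run); subsequence-uniqueness only bounds runs in $y$ by about $1.1w=\Theta(\log n)$, the same order as $\ell$, so the constants need care you have not supplied.

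Two smaller remarks. First, your hedge about ``almost'' subsequence-uniqueness with occasional bad windows to be skipped is unnecessary: Lemma~\ref{lem:subsequnique} shows that for $k=3w$ and $\eps=2^{-10w}$ a positive fraction of seeds make $y$ \emph{exactly} $w$-subsequence-unique, so the encoder simply finds one and no window-skipping is needed. Second, $k=2\ell$ is slightly too small; the union-bound argument in Lemma~\ref{lem:subsequnique} fixes up to roughly $2.1w$ coordinates simultaneously, which is why the paper takes $k=3w$.
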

	An important step in our analysis is to show how to adapt this code for use as an inner code in the marker-based construction. Some care is needed, since the global structure of the strings we deal with changes significantly due to the presence of the markers. In particular, the bootstrapping method in the trace reconstruction algorithm from~\cite{HMPW08} no longer works, and we must find a way to circumvent this issue. Our findings for this scenario are described in the next theorem.
	\begin{thm}\label{thm:redtrace}
		If the deletion probability is a small enough constant, there exists an efficiently encodable code $\cC\subseteq\bits^{n+r}$ with redundancy $r=O(n/\log n)$ that can be efficiently reconstructed from $\poly(\log n)$ traces.
	\end{thm}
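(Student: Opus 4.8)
The plan is to combine the marker-based construction behind Theorem~\ref{thm:tracesmarker1} with the low-redundancy code of Theorem~\ref{thm:polytraces}, the latter now used as an \emph{inner} code, together with an outer error-correcting code to repair the few blocks whose reconstruction fails, exactly as in the discussion preceding Theorem~\ref{thm:leveltwo}. Concretely: split the $n$-bit message into $N=\Theta(n/\log^2 n)$ blocks of length $\Theta(\log^2 n)$; protect the sequence of blocks with an efficient outer code over an alphabet of block-sized symbols --- say a Reed--Solomon code --- tolerating a small $1/\poly(\log n)$ fraction of symbol errors; encode each of the resulting $\Theta(N)$ blocks with (a variant of) the code of Theorem~\ref{thm:polytraces} instantiated at message length $\Theta(\log^2 n)$; and finally insert the length-$\Theta(\log n)$ markers between consecutive encoded blocks exactly as for Theorem~\ref{thm:tracesmarker1}. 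The redundancy accounting is favorable: the markers cost $\Theta(N\log n)=\Theta(n/\log n)$ bits; the inner code costs $O(\log(\log^2 n))=O(\log\log n)$ bits per block, hence $O(n\log\log n/\log^2 n)=o(n/\log n)$ bits overall; and, taking the outer-code error fraction small enough, the outer code costs $o(n/\log n)$ bits. Thus $r=O(n/\log n)$, and the code is efficiently encodable since each of its three ingredients is. (The deletion probability is taken to be a small enough constant, as required by the inner code and by the reconstruction algorithm below, whereas the marker layer works for every constant $d<1$.)

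For decoding I would first reuse the marker analysis of Theorem~\ref{thm:tracesmarker1}: with probability $1-1/\poly(n)$ over the channel, every marker in every one of the $t=\poly(\log n)$ received traces is correctly located, so each full trace can be cut into one sub-trace per block. Since the deletion channel acts independently bit by bit and distinct blocks occupy disjoint regions, the $t$ sub-traces of a fixed block behave --- up to a short piece of controlled garbage at each end coming from the neighboring markers --- like $t$ independent traces of that block's encoding. For each block one then runs a modified version of the average-case trace-reconstruction algorithm of~\cite{HMPW08}; since the inner code outputs (almost) subsequence-unique strings, $\poly(\log^2 n)=\poly(\log n)$ traces suffice, and the block is recovered with probability $1-1/\poly(\log n)$. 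A standard concentration bound over the blocks (distinct blocks' sub-traces depending on essentially disjoint parts of the traces) then shows that at most a $1/\poly(\log n)$ fraction of blocks fail with overwhelming probability, and the outer code corrects them. Everything runs in polynomial time: marker splitting, $\Theta(N)=\poly(n)$ invocations of the inner decoder, and outer decoding.

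The one genuine difficulty --- already flagged in the introduction --- is that the sub-trace fed to the inner decoder is \emph{not} a clean trace of its block: the left marker is located only approximately, so each sub-trace carries a short unpredictable prefix, and, worse, the overall string is not subsequence-unique (the markers recur), so the ``bootstrapping'' step of the algorithm of~\cite{HMPW08} --- which reconstructs a string left to right while using the already-recovered prefix to realign every trace, and which needs to pin down position~$1$ in each trace --- no longer applies directly. I would circumvent this by hard-wiring a short known \emph{seed} string of length $\Theta(\log\log n)$ at the front (and, symmetrically, the back) of every inner codeword, and choosing the markers, the seed, and the inner-code constraints jointly so that, within a given block's region, the seed occurs as a subsequence, early on, only at the true block boundary. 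Then in every sub-trace a greedy search for a descendant of the seed reliably recovers where the block starts, despite the leading garbage and the channel noise on the seed itself; from that anchor onward, the bootstrapping of~\cite{HMPW08} proceeds as usual, using the seed concatenated with the portion of the payload recovered so far as the realignment prefix. Establishing this seeded anchoring --- that it is robust to the leading garbage, that it is compatible with the subsequence-uniqueness guarantee of the inner code, and that it dovetails with the approximate marker positions of Theorem~\ref{thm:tracesmarker1} --- is the technical core of the argument; the remaining steps are routine given Theorems~\ref{thm:tracesmarker1} and~\ref{thm:polytraces}.
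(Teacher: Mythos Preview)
Your high-level plan---markers plus the subsequence-unique inner code of Theorem~\ref{thm:polytraces}---is the paper's plan, and you correctly identify the bootstrapping of the HMPW algorithm as the crux. Two points where your proposal and the paper diverge:

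\textbf{No outer code is needed.} The paper pushes the per-block success probability all the way to $1-n^{-10}$ (not merely $1-1/\poly(\log n)$) from $\poly(\log n)$ traces, so a union bound over the $O(n/\log^2 n)$ blocks already gives global success $1-O(1/n)$. Your outer-code layer is harmless but unnecessary.

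\textbf{Bootstrapping.} Here your proposal is too vague and differs materially from the paper's mechanism. The paper does three things: (i) prepend $\ell'=\Theta(\sqrt m)=\Theta(\log n)$ zeros to the message \emph{before} the randomization $x'\mapsto x'+g(z)$, so that once $z$ is known the first $\ell'$ payload bits are literally $g(z)[1,\ell']$ and can be computed for free---this supplies the $\Theta(v)=\Theta((\log\log n)/d)$ known bits the HMPW matching step needs; (ii) protect $z$ by a systematic edit code, setting $z'=\Enc_{\mathsf{edit}}(0\|z)$ correcting a constant fraction of edits; (iii) exploit that the left marker tail is $1^\ell$ and $z'$ begins with $0$, so in each sub-trace the first $0$ is (with probability $1-d$) the first bit of $z'$, after which edit-decoding the next $(1-d)|z'|$ trace bits yields $z$ with constant probability, boosted to $1-n^{-10}$ by majority vote over $O(\log n)$ traces. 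Your ``known seed of length $\Theta(\log\log n)$ with greedy subsequence search'' does not obviously achieve precise localization against the $\Theta(\log n)$-long all-ones marker prefix, and a fixed seed does not by itself hand you the first $\Theta(v)$ bits of the \emph{payload}, which is what HMPW actually needs.

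One further detail you skip: the trailing $0^\ell$ destroys subsequence-uniqueness at the right end, so a matching window for $y[i-v-w,i-v)$ could spill into the zeros. The paper rules this out by additionally enforcing Property~\ref{prop:trailing} (Lemma~\ref{lem:trailingsubsequnique}) on $y'$, not by placing a seed at the back.
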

\end{itemize}

For simplicity, our exposition mostly focuses on constructions of \emph{binary} codes, although it provides some guidelines and simple coding procedures for quaternary codes. One should note that coded trace reconstruction is inherently harder for smaller alphabets.

\subsection{Organization}

The paper is organized as follows: In Section~\ref{sec:prelims}, we define relevant notation and discuss known results that we find useful in our subsequent derivations. We describe and analyze general marker-based constructions in Section~\ref{sec:marker}. Then, we show how to reduce the number of traces required for a small deletion probability in Section~\ref{sec:reducetrace}. We discuss some open problems in Section~\ref{sec:open}.

\section{Notation and preliminaries}\label{sec:prelims}

\subsection{Notation}

We denote the length of a string $x$ by $|x|$, and its Hamming weight by $w(x)=|\{i:x_i\neq 0\}|$. Given two strings $x$ and $y$ over the same alphabet, we denote their concatenation by $x||y$. For a string $x$, we define $x[a,b)=(x_a,x_{a+1},\dots,x_{b-1})$ and $x[a,b]=(x_a,x_{a+1},\dots,x_b)$. If $|x|=n$, we define $x[a,\cdot]=(x_a,x_{a+1},\dots,x_n)$. We say that $y$ is a subsequence of $x$ if there exist indices $i_1<i_2<\cdots<i_{|y|}$ such $x_{i_j}=y_j$. Moreover, $y$ is said to be a substring of $x$ if $y=x[a,a+|y|)$ for some $1\leq a\leq |x|-|y|+1$. Given two strings $x,y\in\bits^n$, we write $x+y$ for the bitwise XOR of $x$ and $y$. A \emph{run of length $\ell$} in a string $x$ is a substring of $x$ comprising $\ell$ identical symbols. Sets are denoted by calligraphic letters such as $\cS,\cT$. Random variables are denoted by uppercase letters such as $X$, $Y$, and $Z$. The uniform distribution over $\bits^t$ is denoted by $U_t$, and the binomial distribution on $n$ trials with success probability $p$ is denoted by $\mathsf{Bin}(n,p)$. The binary entropy function is denoted by $h$ and all logarithms $\log$ are taken with respect to the base $2$.

\subsection{Almost $k$-wise independent spaces}\label{sec:almostind}

We start by defining almost $k$-wise independence and present a related result that we will find useful in our future derivations.

\begin{defn}[$\eps$-almost $k$-wise independent random variable]
	A random variable $X\in\{0,1\}^m$ is said to be \emph{$\eps$-almost $k$-wise independent} if for all sets of $k$ distinct indices $i_1, i_2,\dots,i_k$ we have
	\begin{equation*}
		|\Pr[X_{i_i}=x_1,\dots,X_{i_k}=x_k]-2^{-k}|\leq \eps
	\end{equation*}
	for all $(x_1,\dots,x_k)\in\{0,1\}^k$.
\end{defn}

The following result gives an efficient construction of an $\eps$-almost $k$-wise independent space which can be generated from few uniformly random bits. 
\begin{lem}[\cite{AGHP92}]\label{lem:almostind}
	For every $m$, $k$, and $\eps$ there exists an efficiently computable function $g:\{0,1\}^t\to\{0,1\}^m$ with $t=O\left(\log\left(\frac{k\log m}{\eps}\right)\right)$ such that $g(U_t)$ is an $\eps$-almost $k$-wise independent random variable over $\{0,1\}^m$, where $U_t$ denotes the uniform distribution over $\{0,1\}^t$. 
\end{lem}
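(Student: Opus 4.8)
The plan is to deduce the lemma from the theory of small-bias distributions, in two steps. Call a random variable $Z\in\bits^L$ \emph{$\delta$-biased up to order $k$} if $\left|\E\!\left[(-1)^{\innr{S,Z}}\right]\right|\le\delta$ for every nonempty $S\subseteq[L]$ with $|S|\le k$, where $\innr{S,Z}=\sum_{i\in S}Z_i$ over $\F_2$. The two ingredients are: (i) a distribution on $\bits^m$ that is $\eps$-biased up to order $k$ is automatically $\eps$-almost $k$-wise independent --- this implication is lossless precisely because the definition above measures the deviation of each $k$-wise marginal from uniform \emph{pointwise} (in $\ell_\infty$) rather than in statistical distance; and (ii) such a distribution on $m$ bits is obtained from an ordinary $\eps$-biased distribution on only $L=O(k\log m)$ bits by applying a suitable linear map, and the latter is generated from a short, explicit seed via~\cite{AGHP92}.

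For step (i), I would fix $k$ distinct coordinates $I=\{i_1,\dots,i_k\}$ and expand the marginal of $X$ on $I$ in the Fourier basis: writing $\chi_S(x)=(-1)^{\sum_{i_j\in S}x_j}\in\{\pm1\}$ for $S\subseteq I$ and $x=(x_1,\dots,x_k)\in\bits^k$,
\[
\Pr[X_{i_1}=x_1,\dots,X_{i_k}=x_k]-2^{-k}=2^{-k}\sum_{\emptyset\neq S\subseteq I}\chi_S(x)\,\E\!\left[(-1)^{\innr{S,X}}\right].
\]
If every Fourier coefficient on the right has magnitude at most $\eps$, then the left-hand side has absolute value at most $2^{-k}(2^k-1)\eps<\eps$; note that no factor exponential in $k$ enters.

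For step (ii), I would take $M\in\F_2^{L\times m}$ to be a parity-check matrix of a binary linear code of length $m$, minimum distance at least $k+1$, and redundancy $L=O(k\log m)$ --- for instance (a restriction of) a primitive BCH code, or the binary image of a Reed--Solomon parity-check matrix, both of which are fully explicit with this redundancy. Let $v_1,\dots,v_m\in\F_2^L$ be the columns of $M$; the distance property says exactly that $\sum_{j\in T}v_j\neq 0$ for every nonempty $T\subseteq[m]$ with $|T|\le k$. Let $h:\bits^t\to\bits^L$ be the explicit generator of an $\eps$-biased distribution on $\bits^L$ from~\cite{AGHP92}, which uses $t=O(\log(L/\eps))=O(\log(k\log m/\eps))$ bits, and define $g(s)=M^{\mathsf T}h(s)\in\bits^m$. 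Both $h$ and the linear map are polynomial-time computable, so $g$ is efficiently computable with the claimed seed length. To verify that $X:=g(U_t)$ is $\eps$-biased up to order $k$: writing $Z=h(U_t)$, for any $S\subseteq[m]$ with $1\le|S|\le k$ one has $\innr{S,M^{\mathsf T}Z}=\innr{\sum_{j\in S}v_j,\,Z}$, hence
\[
\left|\E\!\left[(-1)^{\innr{S,X}}\right]\right|=\left|\E\!\left[(-1)^{\innr{\sum_{j\in S}v_j,\,Z}}\right]\right|\le\eps,
\]
since $\sum_{j\in S}v_j\neq 0$ and $Z$ is $\eps$-biased. Combining with step (i) finishes the proof.

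The only real content sits in step (ii): producing an explicit low-redundancy parity-check matrix of a code with distance exceeding $k$, and invoking an explicit $\eps$-biased generator with seed $O(\log(L/\eps))$ --- both standard. I therefore do not anticipate a genuine obstacle; the care required is purely bookkeeping, namely confirming that $L=O(k\log m)$ propagates to the advertised seed length $t=O(\log(k\log m/\eps))$ and that every map involved remains polynomial-time computable.
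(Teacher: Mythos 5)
Your proof is correct and is precisely the standard argument behind the cited result: the paper itself imports Lemma~\ref{lem:almostind} from~\cite{AGHP92} without proof, and your two-step derivation (an explicit $\eps$-biased generator on $L=O(k\log m)$ bits composed with the transpose of a distance-$(k{+}1)$ parity-check matrix, followed by the Fourier-expansion argument showing that $\eps$-bias up to order $k$ gives pointwise $\eps$-closeness of every $k$-wise marginal) is exactly how this lemma is established in the literature. No gaps.
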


\subsection{Nearly-optimal systematic codes for edit errors}

We require systematic codes that are robust against edit errors (deletions and insertions). Nearly-optimal systematic codes for adversarial edit errors have been recently constructed using optimal protocols for deterministic document exchange~\cite{Hae18,CJLW18}. The following result is relevant to our analysis.
\begin{lem}[\cite{Hae18,CJLW18}]\label{lem:systcode}
	For every $m$ and $t<m$ there exists an efficiently encodable and decodable systematic code $\cC_{\mathsf{edit}}\subseteq\{0,1\}^{m+r}$ with encoder $\Enc_{\mathsf{edit}}:\bits^m\to\bits^{m+r}$ and redundancy $r=O\left(t\log^2\frac{m}{t}+t\right)$ that can efficiently correct up to $t$ edit errors. In particular, if $t=\Theta(m)$ then the redundancy is $r=O(m)$.
\end{lem}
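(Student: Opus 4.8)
The plan is to derive Lemma~\ref{lem:systcode} from an optimal \emph{deterministic document exchange} protocol plus the standard reduction from such protocols to systematic edit-correcting codes. In document exchange, Alice holds $x\in\bits^m$, Bob holds a string $y$ obtained from $x$ by at most $t$ edit errors, and Alice sends a sketch $\mathsf{sk}(x)$ from which Bob, using $y$, recovers $x$; the aim is to minimize $|\mathsf{sk}(x)|$. The first step is to exhibit such a protocol with $|\mathsf{sk}(x)|=O(t\log^2(m/t))$ and with $\mathsf{sk}$ and Bob's recovery both running in polynomial time (this is exactly~\cite{Hae18,CJLW18}; I sketch the construction below). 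Given it, the systematic encoder is $\Enc_{\mathsf{edit}}(x)=x\,||\,\rho(\mathsf{sk}(x))$, where $\rho$ is an edit-correcting encoding of the short string $\mathsf{sk}(x)$ that tolerates $t$ edits and remains decodable as a suffix; since $|\mathsf{sk}(x)|$ is already small, $\rho$ contributes only $O(t\log^2(m/t)+t)$ bits, which is the claimed redundancy $r$. To decode a word $c'$ at edit distance $\le t$ from $x\,||\,\rho(\mathsf{sk}(x))$, one first recovers $\mathsf{sk}(x)$ by decoding the $\rho$-part of $c'$, then feeds the remaining (at most $t$-edit-corrupted) copy of $x$ together with $\mathsf{sk}(x)$ to the document-exchange recovery routine. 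The ``in particular'' clause is the case $t=\Theta(m)$, where $\log(m/t)=O(1)$ so $r=O(m)$.

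For the document exchange protocol I would use a recursive, hierarchical matching scheme. Fix a constant branching factor $c$; at level $i$ the still-unresolved portions of $x$ are cut into blocks of length $\approx (m/t)\cdot c^{-i}$, so after $O(\log(m/t))$ levels blocks have length $O(1)$. Call a level-$i$ block \emph{live} if it has not yet been aligned to an equal substring of $y$; at level $0$ all blocks are live. At each level Alice sends, for each live block, a short deterministic fingerprint and its boundary position; Bob, knowing that the edits displace positions by at most $t$, scans a window of size $O(t)$ around the expected location for a substring of $y$ with matching fingerprint, marks the block resolved on success, and otherwise subdivides it for the next level. Since each edit meets only $O(1)$ blocks per level, there are $O(t)$ live blocks per level; a crude count gives $O(t\log m)$ bits per level, i.e.\ $O(t\log m\log(m/t))$ total, and to push this down to $O(t\log^2(m/t))$ one shrinks the fingerprint length geometrically down the levels and charges each edit only $O(\log(m/t))$ bits per level it stays unresolved. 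After the last level, the $O(t)$ remaining constant-size discrepancies are sent verbatim. Two points need care. First, since Alice is deterministic the fingerprints cannot be fresh random hashes: one fixes, \emph{as part of the construction and not as part of the transmitted sketch}, a fingerprint family drawn from an $\eps$-almost $k$-wise independent (equivalently, small-bias) space as in Lemma~\ref{lem:almostind}, and argues by a union bound over all inputs $x$ and all (live block, candidate window) pairs arising at all levels that some fixed seed separates every block from every wrong window; exhibiting and efficiently computing such a seed is the technical heart of~\cite{Hae18,CJLW18}. Second, Bob's search windows must stay of size $O(t)$, independent of $m$, which is why positions are tracked relative to the current alignment rather than absolutely.

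Building $\rho$ is then routine: it must edit-encode a string of length $\ell=O(t\log^2(m/t))$ against $t$ edits with redundancy $O(\ell+t)$. Apply the construction above recursively to $\mathsf{sk}(x)$ with the same error parameter $t$; the new sketch has length $O(t\log^2(\ell/t))=O(t\log^2\log(m/t))$, asymptotically smaller, so a constant number of iterations drives the residual sketch length to $O(t)$, after which one protects the tiny residual string with any constant-rate code correcting a constant fraction of edit errors. Summing this geometric tower yields inner redundancy $O(t)$, hence $r=O(t\log^2(m/t))+O(t)$.

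The main obstacle lies inside the document exchange protocol and is twofold: (i) improving the communication from the easy $O(t\log m\log(m/t))$ to $O(t\log^2(m/t))$, which needs the geometric scaling of fingerprint sizes, a tight amortization of edits against unresolved levels, and the guarantee that re-synchronization windows remain of size $O(t)$; and (ii) derandomizing the fingerprints --- producing a single fixed, efficiently computable small-bias seed that simultaneously avoids collisions for all $2^m$ possible messages and all polynomially many (block, window) pairs, so that no randomness and no seed ever enters the sketch. The reduction to a systematic code, the recursive construction of $\rho$, and the final bookkeeping are all routine by comparison.
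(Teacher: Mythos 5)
The paper does not prove this lemma; it is imported verbatim as a black-box result from the cited works on deterministic document exchange \cite{Hae18,CJLW18}, so there is no in-paper argument to compare against. Your outline faithfully reproduces the strategy of those works: the reduction from a deterministic document-exchange sketch to a systematic code via $\Enc_{\mathsf{edit}}(x)=x\,\|\,\rho(\mathsf{sk}(x))$, the hierarchical block-matching protocol, the recursive protection of the sketch to get the additive $O(t)$ term, and the observation that $t=\Theta(m)$ collapses the bound to $O(m)$; you also correctly isolate the two genuinely hard steps (the improvement from $O(t\log m\log(m/t))$ to $O(t\log^2(m/t))$ and the derandomization of the fingerprints via small-bias spaces) as the technical core that your sketch does not fully carry out. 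One minor point worth tightening: after up to $t$ edits the boundary between the systematic part and the $\rho$-part is only known up to a shift of $t$, so the decoder should take a window of the last $|\rho(\mathsf{sk}(x))|+t$ symbols and build $\rho$ to tolerate $O(t)$ edits rather than exactly $t$; this changes nothing asymptotically.
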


\subsection{Trace reconstruction}

Next, we discuss several results pertaining to the worst-case and average-case trace reconstruction problem that will be useful for our constructions.

\subsubsection{Worst-case trace reconstruction}

For worst-case reconstruction, the state-of-the-art result used in Section~\ref{sec:marker} is summarized below.
\begin{lem}[\cite{DOS17,NP17}]\label{lem:worstcase}
	For every $n$ and constant deletion probability $d$ there exists an algorithm that reconstructs an arbitrary string $x\in\{0,1\}^n$ with probability at least $1-\exp(-2n)$ from $\exp(O(n^{1/3}))$ traces in time $\exp(O(n^{1/3}))$.
\end{lem}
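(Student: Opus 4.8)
The plan is to reconstruct $x$ using only the single-bit statistics of its traces, i.e., to follow a \emph{mean-based} algorithm. For $j\in\{1,\dots,n\}$ let $q_j(x)=\Pr[\text{the }j\text{-th symbol of a trace of }x\text{ exists and equals }1]$. Since the deletion channel retains each bit independently with probability $p:=1-d$, the $j$-th symbol of a trace comes from $x_k$ precisely when $x_k$ is retained and exactly $j-1$ of $x_1,\dots,x_{k-1}$ are retained, so $q_j(x)=p\sum_{k=j}^{n}x_k\binom{k-1}{j-1}p^{j-1}d^{k-j}$. The algorithm draws $T=\exp(O(n^{1/3}))$ traces, forms the empirical estimate $\hat q_j$ of $q_j(x)$ (the fraction of traces whose $j$-th symbol exists and equals $1$), and outputs any $\hat x\in\bits^n$ with $\max_j|q_j(\hat x)-\hat q_j|\le\gamma$ for a tolerance $\gamma=\exp(-\Theta(n^{1/3}))$ fixed below.

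The crux is the \emph{separation bound}: for all distinct $x,y\in\bits^n$, $\max_{1\le j\le n}|q_j(x)-q_j(y)|\ge\exp(-Cn^{1/3})$ for a constant $C=C(d)$. To prove it I would pass to generating functions: summing the binomial identity above over $j$ and reindexing yields $\sum_{j=1}^{n}q_j(x)w^{j-1}=p\sum_{k=1}^{n}x_k(d+pw)^{k-1}$. Writing $c_k=x_k-y_k\in\{-1,0,1\}$ (not all zero) and $R(z)=\sum_{k=1}^{n}c_kz^{k-1}$, a nonzero polynomial of degree at most $n-1$ with coefficients in $\{-1,0,1\}$, this gives $\sum_{j=1}^{n}(q_j(x)-q_j(y))w^{j-1}=p\,R(d+pw)$. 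If $\max_j|q_j(x)-q_j(y)|\le\eps$, bounding the left-hand side termwise on $|w|=1$ gives $\max_{z\in\Gamma}|R(z)|\le n\eps/p$, where $\Gamma=\{d+pw:|w|=1\}$ is the circle of radius $p$ centred at $d$; since $|d+pw|^2=d^2+p^2+2dp\,\mathrm{Re}(w)\le(d+p)^2=1$ with equality only at $w=1$, the curve $\Gamma$ lies in the closed unit disk and is internally tangent to the unit circle at $z=1$. So the separation bound reduces to the complex-analytic claim that a nonzero polynomial of degree $<n$ with coefficients in $\{-1,0,1\}$ cannot have modulus everywhere below $\exp(-Cn^{1/3})$ on $\Gamma$; together with the termwise bound this gives $\eps\ge\exp(-O(n^{1/3}))$.

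The main obstacle is precisely this complex-analytic lower bound. A crude argument --- estimating the Taylor coefficients of $R$ about the centre $d$ by Cauchy's inequality and then recovering the $\{-1,0,1\}$-coefficients about $0$ via a linear transform --- only yields $\max_\Gamma|R|\ge\exp(-O(n))$, which merely reproves the $\exp(O(n))$-trace bound. Getting the exponent down to $n^{1/3}$ requires exploiting that $\Gamma$ hugs the unit circle near $z=1$, where the monomial $z^k$ is damped only by a factor $\approx\exp(-\Theta(k\phi^2))$ at angular distance $\phi$ from $1$. The standard route is to pair $R$ against a carefully chosen auxiliary extremal polynomial --- a Fej\'er/Jackson-type kernel of degree $L$ that is bounded, concentrated on an arc of $\Gamma$ of angular width $\approx 1/L$ around $z=1$, and isolates a weighted combination of the low-order coefficients of $R$ --- and to balance $L$ against the damping of the degree-$\ge L$ part of $R$ on this arc; the optimum is $L\approx n^{2/3}$, arc width $\approx n^{-1/3}$, and a final bound $\exp(-O(n^{1/3}))$. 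This trade-off is a sharp Littlewood-type inequality for polynomials with bounded coefficients on subarcs, of the kind due to Borwein and Erd\'elyi, and I would import it rather than reprove it.

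It remains to fix parameters and bound the failure probability. Set $\gamma=\frac13\exp(-Cn^{1/3})$. For each fixed $j$, $\hat q_j$ is an average of $T$ i.i.d.\ $\{0,1\}$-valued random variables with mean $q_j(x)$, so Hoeffding's inequality gives $\Pr[|\hat q_j-q_j(x)|>\gamma]\le 2\exp(-2T\gamma^2)$; a union bound over $j\in\{1,\dots,n\}$ and the choice $T=\exp(O(n^{1/3}))$ (large enough) make $\max_j|\hat q_j-q_j(x)|\le\gamma$ hold with probability at least $1-2n\exp(-2T\gamma^2)\ge 1-\exp(-2n)$. On this event the true $x$ is consistent with the estimates, while the separation bound forces $\max_j|q_j(y)-q_j(x)|\ge 3\gamma$ for every $y\ne x$, hence $\max_j|\hat q_j-q_j(y)|\ge 2\gamma>\gamma$; so $x$ is the unique consistent string and is returned. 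Reading the traces and computing the $n$ empirical statistics costs $O(nT)=\exp(O(n^{1/3}))$, and the search for $\hat x$ can be organised to run within the same budget --- for instance by recovering $\hat x$ in consecutive blocks of $O(n^{1/3})$ bits and brute-forcing each block using a windowed version of the separation bound (obtained from the same polynomial argument applied to differences supported on $O(n^{1/3})$ coordinates), so that the $n^{2/3}$ blocks contribute $\exp(O(n^{1/3}))$ in total; see~\cite{DOS17,NP17}.
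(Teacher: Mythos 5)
The paper does not actually prove this lemma -- it is imported wholesale from \cite{DOS17,NP17} -- so the relevant comparison is with those works, and your reconstruction of the \emph{statistical} part of their argument is faithful and essentially correct: the single-bit statistics $q_j$, the identity $\sum_j q_j(x)w^{j-1}=p\sum_k x_k(d+pw)^{k-1}$, the reduction of the separation bound to lower-bounding a nonzero $\{-1,0,1\}$-coefficient polynomial on the circle $\Gamma=\{d+pw:|w|=1\}$ internally tangent to the unit circle at $1$, the Borwein--Erd\'elyi arc inequality with the $L\approx n^{2/3}$, $\theta\approx n^{-1/3}$ balance, and the Hoeffding-plus-union-bound accounting (which does deliver the $1-\exp(-2n)$ success probability with $T=\exp(O(n^{1/3}))$ traces) are all exactly the ingredients of the cited proofs. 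Importing the extremal polynomial inequality rather than reproving it is reasonable here, since the lemma itself is a black-box citation in the paper.

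The genuine gap is the running time, which is part of the claimed statement. Your primary algorithm -- ``output any $\hat x$ with $\max_j|q_j(\hat x)-\hat q_j|\le\gamma$'' -- requires searching over $2^n$ candidates, i.e.\ $\exp(O(n))$ time, and your proposed repair (recover $\hat x$ in consecutive blocks of $O(n^{1/3})$ bits and brute-force each block against a ``windowed'' separation bound) does not work as described. The obstruction is that the statistics are not localized at that scale: for fixed $j$ the weight of $x_k$ in $q_j$ is the negative binomial mass $\binom{k-1}{j-1}p^{j}d^{k-j}$, concentrated around $k\approx j/p$ with standard deviation $\Theta(\sqrt{j})$, so capturing all but an $\exp(-\Theta(n^{1/3}))$ fraction of the mass (the scale of your tolerance $\gamma$) requires a window of width $\Theta(\sqrt{n\cdot n^{1/3}})=\Theta(n^{2/3})$. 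Hence a candidate for a length-$n^{1/3}$ block cannot be tested against the empirical statistics without already knowing the next $\Omega(n^{2/3})$ bits, and enlarging the blocks to $n^{2/3}$ makes the per-block brute force cost $2^{\Omega(n^{2/3})}$. The efficient implementations in \cite{DOS17,NP17} avoid this by a left-to-right recovery that subtracts the contribution of the known prefix from the generating function $\sum_j\hat q_jw^{j-1}$ and then pairs the residual against the same extremal kernel used in the separation proof, isolating one coefficient at a time; some such mechanism (or at least an explicit appeal to it) is needed to legitimately claim $\exp(O(n^{1/3}))$ time.
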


\subsubsection{Trace reconstruction of subsequence-unique strings}\label{sec:tracerecuniquesubseq}

One of the key tools for our constructions in Section~\ref{sec:reducetrace} is a modified version of the efficient trace reconstruction algorithm~\cite{HMPW08} for what we refer to as \emph{subsequence-unique} strings. This algorithm may also be used for average-case trace reconstruction. We start by defining subsequence-unique strings.
\begin{defn}[$w$-subsequence-unique string]\label{def:subseq}
	A string $x\in\bits^n$ is said to be \emph{$w$-subsequence-unique} if for every $a$ and $b$ such that either $a<b$ or $b+1.1w<a+w$ we have that the substring $x[a,a+w)$ is not a subsequence of $x[b,b+1.1w)$.
\end{defn}
Note that these strings have been under the name ``substring-unique" in~\cite{HMPW08}. We proposed the name change to avoid confusion with a different definition under the same name, described in~\cite{GM18}. The following result about subsequence-unique strings was established in~\cite{HMPW08}.
\begin{lem}[\protect{\cite[Theorem 2.5]{HMPW08}}]\label{lem:tracerecsubsequnique}
	For $w=100\log n$ and a small enough constant deletion probability $d$, there exists an algorithm that reconstructs every $w$-subsequence-unique string $x\in\bits^n$ with probability $1-1/\poly(n)$ from $\poly(n)$ traces in time $\poly(n)$.
\end{lem}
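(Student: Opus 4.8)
\textbf{Proof proposal for Lemma~\ref{lem:tracerecsubsequnique}.}

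The plan is to follow the Bitwise Majority Alignment strategy of Holden, Mossel, Pedram, and Watson~\cite{HMPW08}, adapted to the $w$-subsequence-unique setting, and to show that a modest number of traces suffices by a union-bound/concentration argument over the reconstruction steps. The algorithm maintains, for each received trace, a pointer into its symbol stream; at step $i$ it has (with high probability) a collection of pointers each sitting exactly at the copy of $x_i$ in its trace. It then reads off the next symbol from each trace, takes a bitwise majority to decide $x_i$, and advances the pointers past their copy of $x_i$. The subtlety is that a trace may have had $x_i$ deleted, in which case its pointer is offset; the algorithm must detect and realign such traces. The first thing I would do is recall the BMA realignment rule: when a trace's current symbol disagrees with the current majority-decoded value, or when a run-length discrepancy is detected, that trace is declared ``lost'' at step $i$ and is re-synchronized by searching forward for the next position whose surrounding length-$w$ window matches the already-reconstructed prefix.

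Next I would set up the key structural claim that drives the analysis: because $x$ is $w$-subsequence-unique with $w=100\log n$, any length-$w$ substring $x[i,i+w)$ of the true string is \emph{not} a subsequence of any other nearby window $x[b,b+1.1w)$, so a trace's length-$w$ window of not-yet-deleted symbols can be unambiguously located against the already-reconstructed prefix of $x$. This is exactly the property that lets a lost trace be re-aligned to its correct position rather than to a spurious one. Concretely, I would argue that, conditioned on the event that the reconstructed prefix $x[1,i)$ is correct, a trace that is ``synchronized'' at step $i$ (pointer on $x_i$) produces a correct vote for $x_i$, and a trace that got desynchronized within the last $O(w)$ steps is detected and either re-synchronized correctly or else permanently discarded — and at each step only a $d^{\Theta(1)}$-fraction of traces are in the transient desynchronized state, which for small enough constant $d$ is far below $1/2$. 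Hence the bitwise majority over the surviving synchronized traces is correct.

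The heart of the argument is the probabilistic bookkeeping. Fix a position $i$. A single trace is in the ``bad'' state at step $i$ if $x_i$ was deleted from it and it has not yet re-synchronized; since re-synchronization happens within $O(w)$ reads, the relevant event is that a deletion occurred in a window of $O(w)=O(\log n)$ positions, which has probability $O(d)$ for a fixed trace (by independence of the deletion channel), and more refined estimates give that the \emph{net} number of mis-voting traces at step $i$ is, in expectation, a small constant fraction of the total. With $\poly(n)$ traces, a Chernoff bound shows the fraction of mis-voting traces at step $i$ exceeds $1/2$ with probability at most $1/\poly(n)$ — in fact $n^{-c}$ for any desired constant $c$ by choosing the polynomial number of traces large enough. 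A union bound over the $n$ reconstruction steps then yields overall success probability $1-1/\poly(n)$. The running time is $\poly(n)$ because each of the $n$ steps processes $\poly(n)$ traces and each realignment search scans $O(w)=O(\log n)$ positions per trace. The main obstacle — and the place where the $w$-subsequence-uniqueness hypothesis is essential — is proving that re-synchronization lands a lost trace at the \emph{correct} position: one must rule out that the trace's partially-deleted window coincides (as a subsequence) with a wrong window of $x$, and it is precisely Definition~\ref{def:subseq} with $w = 100\log n$, combined with the observation that a trace is an (oblivious) subsequence of $x$ so its windows are subsequences of windows of $x$, that forecloses this; the constant $100$ provides the slack needed to absorb the $1.1w$ and the $O(w)$ realignment horizon. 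I would also note that the same algorithm and analysis apply verbatim to a uniformly random string, since a random $n$-bit string is $w$-subsequence-unique with probability $1-1/\poly(n)$, which is why this also gives an average-case trace reconstruction algorithm.
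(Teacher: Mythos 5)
Your proposal is not the algorithm behind this lemma, and the route you chose has a genuine gap precisely in the regime the lemma is about. What you describe is Bitwise Majority Alignment with pointer realignment, which is the BKKM04-style approach; the paper explicitly notes (in the introduction) that BMA is only known to be robust when the deletion rate is $O(1/\log n)$. The critical unjustified step is your claim that at each step only a $d^{\Theta(1)}$-fraction of traces are in the desynchronized state. If a lost trace needs $\Theta(w)=\Theta(\log n)$ steps to realign, then at any fixed position $i$ the probability that a given trace suffered at least one deletion in the preceding window of length $\Theta(\log n)$ is $1-(1-d)^{\Theta(\log n)}=1-n^{-\Theta(d)}$, which tends to $1$ for any constant $d$. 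So essentially \emph{all} traces are transiently desynchronized at a typical step, the majority vote has no reason to be correct, and the phrase ``more refined estimates give that the net number of mis-voting traces is a small constant fraction'' is doing all the work without any argument. Making BMA-type alignment work at constant deletion rate is, as far as is known, open.

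The actual algorithm (Holenstein, Mitzenmacher, Panigrahy, and Wieder~\cite{HMPW08}, reproduced in Section~\ref{sec:tracerecuniquesubseq}) is statistical rather than alignment-based. It bootstraps the first $v=w/d$ bits via Lemma~\ref{lem:recoverfirst}, and then, to recover $x_i$, looks in each trace for an \emph{exact} substring matching of the already-known block $x[i-v-w,i-v)$; subsequence-uniqueness enters only through Lemma~\ref{lem:matchsubsequnique}, which guarantees that such a matching anchors at essentially the correct location in $x$. Conditioned on the anchor, the algorithm examines the single bit $V_j$ at a fixed offset $j=(v-0.1w)(1-3d)$ after the matching and shows, via Lemma~\ref{lem:basicpij}, that $\Pr[V_j=1]$ differs by an additive gap of at least $2^{-(9w+1)}=n^{-O(1)}$ between the cases $x_i=0$ and $x_i=1$; this inverse-polynomial gap is what makes $\poly(n)$ traces suffice, and there is no per-trace voting or realignment at all. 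Your proposal would need to be replaced by (or reduced to) an argument of this threshold type to establish the lemma for constant $d$.
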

Since a uniformly random string is $w$-subsequence-unique with high probability, Lemma~\ref{lem:tracerecsubsequnique} applies to average-case trace reconstruction. As we make explicit use of the algorithm behind Lemma~\ref{lem:tracerecsubsequnique}, for the sake of clarity we provide next a more in-depth discussion of the method. However, before we proceed to the actual description of the algorithm, we briefly introduce some definitions and basic related results. 

Given integers $i$ and $j$ and a deletion probability $d$, we denote the probability that the $i$-th bit of a string appears as the $j$-th bit of its trace by $P(i,j)$. Then, we have
\begin{equation*}
	P(i,j)=\binom{i-1}{j-1}(1-d)^jd^{i-j}.
\end{equation*}
The following lemma states some useful properties of $P(i,j)$.
\begin{lem}[\protect{\cite[Lemma 2.2]{HMPW08}}]\label{lem:basicpij}
	If $j\leq (1-3d)i$, then $P(i,j)\geq 2\sum_{i'>i} P(i',j)$. Furthermore, if $(1-4d)i<j<(1-3d)i$, we have $P(i,j)\geq \exp(-6di)$.
\end{lem}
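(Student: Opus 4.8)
The plan is to prove Lemma~\ref{lem:basicpij} by direct manipulation of the binomial expression for $P(i,j)$. Recall $P(i,j)=\binom{i-1}{j-1}(1-d)^j d^{i-j}$, which is the probability that the $i$-th input bit survives (contributing the factor $1-d$ for that bit) and exactly $j-1$ of the first $i-1$ input bits also survive (contributing $\binom{i-1}{j-1}(1-d)^{j-1}d^{i-j}$). So $P(i,\cdot)$ as a function of $j$ behaves like a shifted binomial $\mathsf{Bin}(i-1,1-d)$ shifted up by one, with mean roughly $(1-d)i$.

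For the first claim, I would fix $j$ and study the ratio $P(i',j)/P(i,j)$ for $i'>i$. Writing $P(i+1,j)/P(i,j)=\frac{\binom{i}{j-1}}{\binom{i-1}{j-1}}\cdot d=\frac{i}{i-j+1}\cdot d$, and more generally telescoping, one sees $\sum_{i'>i}P(i',j)$ is dominated by a geometric-type series whose ratio is controlled by $\frac{i'd}{i'-j+1}$. The key point is that when $j\leq(1-3d)i$, we have $i-j\geq 3di$, so $i-j+1 > 3di \geq 3dj/(1-3d)$ is comfortably larger than $di$, forcing each successive ratio to be bounded away from $1$ (in fact at most something like $\frac{d}{1-d+3d}$-ish after accounting for how $i'$ grows). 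Summing the geometric series then gives $\sum_{i'>i}P(i',j)\leq \frac{1}{2}P(i,j)$, which is exactly the claimed inequality. The mild subtlety is that the ratio $\frac{i'd}{i'-j+1}$ increases with $i'$, so I would first argue it stays below a constant $<1$ for all $i'>i$ under the hypothesis $j\le(1-3d)i$ (using that $d$ is a constant, and one may freely assume $d$ small or just $d<1/3$ as needed for the stated regime), and then bound the tail sum by a convergent geometric series.

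For the second claim, I would lower-bound a single term $P(i,j)=\binom{i-1}{j-1}(1-d)^j d^{i-j}$ when $(1-4d)i<j<(1-3d)i$. Here $j$ is within a constant factor of $(1-d)i$ (the mode), so the binomial coefficient is close to its maximum: using the entropy estimate $\binom{i-1}{j-1}\geq \frac{1}{\text{poly}(i)}\,2^{(i-1)h((j-1)/(i-1))}$ together with $(1-d)^j d^{i-j}=2^{j\log(1-d)+(i-j)\log d}$, the exponent becomes $i\big(h(\alpha)+\alpha\log(1-d)+(1-\alpha)\log d\big)+o(i)$ where $\alpha=j/i\in(1-4d,1-3d)$. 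Since the true binomial $\mathsf{Bin}(i-1,1-d)$ has total mass $1$ and its value at $\alpha$ is $2^{-i\,\KL(\alpha\,\|\,1-d)+o(i)}$ where $\KL$ is the binary relative entropy, we get $P(i,j)\geq 2^{-i\,\KL(\alpha\|1-d)-o(i)}$, and $\KL(\alpha\|1-d)=O(d)$ uniformly for $\alpha$ in this window (since $\alpha$ and $1-d$ differ by $O(d)$ and one can bound relative entropy by $O((\alpha-(1-d))^2/(d(1-d)))=O(d)$). This yields $P(i,j)\geq \exp(-O(di))$; absorbing constants and using $d$ small gives the stated $P(i,j)\geq\exp(-6di)$. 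The main obstacle is getting the constant in the exponent down to exactly $6$: this requires being a little careful with the relative-entropy bound and the polynomial prefactor from Stirling, but since we are free to take $d$ below an absolute constant, the polynomial factor $\text{poly}(i)=2^{o(i)}=2^{o(di)/d}$ is swallowed for $d$ bounded away from $0$ only if we are slightly clever — so instead I would bound the prefactor crudely by noting $\binom{i-1}{j-1}\ge \binom{i-1}{\lfloor(1-d)(i-1)\rfloor}/(i-1)$ is off from the max by at most a factor $i$, and then $\log i = o(i)$ is dominated by any $\eps i$; combined with $\KL(\alpha\|1-d)\le 5d$ in the window, one concludes. This is the step I expect to require the most care, though it remains elementary.
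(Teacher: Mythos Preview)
The paper does not give its own proof of this lemma; it is quoted verbatim as \cite[Lemma~2.2]{HMPW08} and used as a black box. So there is no ``paper's proof'' to compare against, and your proposal should be judged on its own merits.

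Your approach is essentially correct and is the natural way to establish the result. Two small corrections are worth making.

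First, for the tail bound you write that the ratio $\tfrac{i'd}{i'-j+1}$ \emph{increases} with $i'$; in fact it \emph{decreases} toward $d$ (for $j\ge 2$ the map $i'\mapsto i'/(i'-j+1)$ is decreasing). This is good news: the very first ratio $P(i{+}1,j)/P(i,j)=\tfrac{id}{i-j+1}$ is the largest, and under the hypothesis $j\le(1-3d)i$ one has $i-j+1>3di$, hence this ratio is strictly less than $1/3$. The tail is then dominated by a geometric series with ratio $1/3$, giving $\sum_{i'>i}P(i',j)<\tfrac{1}{2}P(i,j)$ directly, with no subtlety about monotonicity working against you.

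Second, for the pointwise lower bound your relative-entropy outline is the right idea, but the quadratic heuristic $\KL(\alpha\,\|\,1-d)\approx \tfrac{(\alpha-(1-d))^2}{d(1-d)}$ is not literally justified here, since $|\alpha-(1-d)|$ is of the same order as $d$ rather than $o(d)$. A direct computation is cleaner: for $\alpha\in(1-4d,1-3d)$ one checks that $(1-\alpha)\ln\tfrac{1-\alpha}{d}\le 4d\ln 4$ and $\alpha\ln\tfrac{\alpha}{1-d}\le 0$, so $\KL(\alpha\,\|\,1-d)\le 4d\ln 4<6d$ in nats, and the $\Theta(1/\sqrt{di})$ Stirling prefactor is absorbed into the slack between $4\ln 4$ and $6$ once $di$ exceeds an absolute constant (which is needed anyway for the interval $((1-4d)i,(1-3d)i)$ to contain an integer). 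With these two fixes your sketch goes through cleanly.
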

Intuitively, the second part of Lemma~\ref{lem:basicpij} means that we have a good idea of the position of $x_i$ in the trace if $i$ is small. The following result makes use of this. It states that we can recover the first $O(\log n)$ bits of an arbitrary string with $\poly(n)$ traces, which is required to bootstrap the trace reconstruction algorithm from~\cite{HMPW08}.
\begin{lem}[\protect{\cite[Theorem 2.1]{HMPW08}]}]\label{lem:recoverfirst}
	Fix a string $x\in\bits^n$, and suppose that we know $x_1,\dots,x_{h-1}$. Then, there is an algorithm that recovers $x_h$ from $\exp(O(hd\log(1/d)))$ traces of $x$ with probability $1-o(1)$, provided that $d<1/3$.
\end{lem}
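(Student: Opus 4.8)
The plan is to recover the single bit $x_h$ by probing one carefully chosen coordinate of the traces and estimating the frequency with which a $1$ appears there. First I would fix an integer $j$ in the window $(1-4d)h<j<(1-3d)h$; for $h$ above a constant threshold depending on $d$ such a $j$ exists because the window has length $dh$, and the finitely many smaller values of $h$ will be handled separately at the end. By the second part of Lemma~\ref{lem:basicpij} this choice gives $P(h,j)\ge\exp(-6dh)$, and since $j\le(1-3d)h$ the first part gives $\sum_{i>h}P(i,j)\le P(h,j)/2$. The key consequence is that at coordinate $j$ of a trace, the contribution of the bit $x_h$ dominates the combined contribution of the entire unknown suffix $x_{h+1},\dots,x_n$.

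Next I would reduce recovering $x_h$ to an estimation problem. For a trace $y$, set $Z(y)=1$ if $|y|\ge j$ and its $j$-th symbol is $1$, and $Z(y)=0$ otherwise; then $\E[Z]=\Pr[Y_j=1]=\sum_{i\,:\,x_i=1}P(i,j)$. Because the prefix $x_1,\dots,x_{h-1}$ is already known and every $P(i,j)$ is explicitly computable from $i,j,d$, the algorithm can subtract off the known part: writing $q:=\E[Z]-\sum_{i<h\,:\,x_i=1}P(i,j)=x_h\,P(h,j)+\sum_{i>h\,:\,x_i=1}P(i,j)$, the bounds above show that $q\in[0,\tfrac12 P(h,j)]$ if $x_h=0$ and $q\in[P(h,j),\tfrac32 P(h,j)]$ if $x_h=1$. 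These two ranges are separated by the threshold $\tfrac34 P(h,j)$ with a margin of $\tfrac14 P(h,j)$ on each side. So the algorithm draws $T$ traces, computes the empirical mean $\hat p$ of $Z$, and outputs $x_h=1$ exactly when $\hat p-\sum_{i<h\,:\,x_i=1}P(i,j)>\tfrac34 P(h,j)$.

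This output is correct whenever $|\hat p-\E[Z]|<\tfrac14 P(h,j)$, which by Hoeffding's inequality fails with probability at most $2\exp(-TP(h,j)^2/8)$. Using $P(h,j)\ge\exp(-6dh)$ and taking $T=\exp(\Theta(dh))$ with a large enough constant in the exponent makes this $o(1)$; and since $\log(1/d)>1$ for $d<1/3$, we have $\exp(\Theta(dh))\le\exp(O(hd\log(1/d)))$, matching the claimed trace count. For the remaining bounded range of small $h$ one runs the same decision rule with $j=1$: there $P(h,1)=(1-d)d^{h-1}$ and $\sum_{i>h}P(i,1)=d^h\le\tfrac12 P(h,1)$ precisely because $d\le\tfrac13$, so the identical gap argument works using only a constant (depending on $d$) number of traces.

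I expect the only genuinely delicate step to be the choice of the probe coordinate $j$: one needs a position at which $P(h,j)$ is large enough for polynomial-size sample concentration to kick in, yet at which the unknown suffix contributes strictly less than $x_h$ does. Both properties are exactly what Lemma~\ref{lem:basicpij} provides for $j$ in the window $((1-4d)h,(1-3d)h)$, so the real work amounts to verifying that this window contains an integer and dispatching the bounded small-$h$ corner case; everything else is a routine Hoeffding/Chernoff estimate.
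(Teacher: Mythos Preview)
Your proposal is correct and is precisely the standard argument behind this result; the paper itself does not give a proof but merely cites \cite[Theorem~2.1]{HMPW08}, so there is nothing in the paper to compare against. Your use of Lemma~\ref{lem:basicpij} to pick a probe coordinate $j$ at which $P(h,j)$ is large while the tail $\sum_{i>h}P(i,j)$ is dominated, followed by a Hoeffding estimate on the empirical frequency of $1$'s at position $j$, is exactly how the cited theorem is proved, and your handling of the small-$h$ corner case with $j=1$ is a clean way to dispatch it.
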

In the second part of the algorithm, we must look for matchings of certain strings within the traces. To this end, we introduce the following definition.
\begin{defn}[Matching]
	Fix a string $x\in\bits^n$, and let $T$ denote its trace. Then, we say that there is a \emph{matching of $x[a,b)$ in $T$} if there exists some $u$ such that $T[u-(b-a),u)=x[a,b)$.
\end{defn}
Matchings of $w$-subsequence-unique strings have useful properties, as formalized in the following lemma.
\begin{lem}[\protect{\cite[Lemma 2.4]{HMPW08}}]\label{lem:matchsubsequnique}
	If $x$ is $w$-subsequence-unique and there is a matching of $x[a,a+w)$ in $T$, say at $T[u-w,u)$, then the probability that $T_{u-1}$ does not come from $x[a+w,a+1.1w)$ is at most $nd^{0.001 w}$. 
\end{lem}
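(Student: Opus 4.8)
The plan is to condition on the matching occurring at $T[u-w,u)$, decompose that event according to which coordinates of $x$ actually realize the matched block in the trace, and show that the realizations for which the trace bit adjacent to the matched block traces back to a coordinate outside $[a+w,a+1.1w)$ carry total probability at most $nd^{0.001w}$ times that of the matching event. Concretely, given the matching, the block $T[u-w,u)$ together with the adjacent bit come from consecutive surviving coordinates $j_1<j_2<\dots<j_w<j_{w+1}$ of $x$, where $(x_{j_1},\dots,x_{j_w})=x[a,a+w)$; set $J=j_{w+1}$ and $s=j_w-j_1$. The crucial use of Definition~\ref{def:subseq} is a dichotomy: since $x[a,a+w)$ is a subsequence of $x[j_1,j_w+1)$, whenever $s<1.1w$ it is a subsequence of $x[j_1,j_1+1.1w)$, so subsequence-uniqueness forces $j_1\in[a-0.1w,a]$, hence $j_w\in[a+0.9w-1,\,a+1.1w-1]$ and $J>j_w$. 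Thus, apart from the event that $x[a,a+w)$ is realized by $w$ consecutive survivors spanning $\ge 1.1w$ coordinates, the matched block sits in a fixed window of length $O(w)$ around $a$, and the only way for $J$ to exceed $a+1.1w-1$ is for a run of roughly $0.1w$ consecutive deletions to fall just after $j_w$.

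Both bad contributions I would control by deletion counting. For the event that some realization spans $\ge 1.1w$ coordinates I would union-bound over the at most $n$ coordinates at which such a realization could begin; for a fixed starting coordinate, a realization of $x[a,a+w)$ by $w$ consecutive survivors spanning $s$ coordinates has probability exactly $(1-d)^wd^{\,s-w+1}$, and $\sum_{s\ge 1.1w}\binom{s-1}{w-2}(1-d)^wd^{\,s-w+1}$ is a tail of the identity $\sum_{k\ge 0}\binom{w-2+k}{k}d^k=(1-d)^{-(w-1)}$, which for $d$ a small enough constant is at most $(1-d)^w(Cd)^{0.1w}$ for an absolute constant $C$. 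In the complementary case $s<1.1w$ I would condition further on the exact realization $\vec j$: the coordinates beyond $j_w$ are then untouched, so the gap to the next survivor is geometric with parameter $d$ and $\Pr[J\ge a+1.1w\mid \vec j]\le d^{\,a+1.1w-j_w-1}$; multiplying by the weight $(1-d)^wd^{\,s-w+1}$ and using $s=j_w-j_1\ge j_w-a$ makes the exponents sum to at least $0.1w$, so this case contributes at most $2^{w/2}(1-d)^wd^{0.1w}$ in total, the factor $2^{w/2}$ bounding the number of embeddings of a length-$w$ string into a window of length below $1.1w$. If the left endpoint $a+w$ must also be accounted for, the event $J<a+w$ is handled by the same ingredients run in the other direction, using that it forces $j_1<a$ and hence $s\ge w$.

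The main obstacle is that these estimates are unconditional, whereas the lemma conditions on the matching event, whose probability can be as small as $(1-d)^w$. The resolution uses subsequence-uniqueness once more: the ``clean'' realization $j_\ell=a+\ell-1$ (no deletion inside the block) is the unique realization of minimal span $w-1$, because a realization with $s=w-1$ has $\vec j$ consecutive and hence makes $x[a,a+w)$ equal, as a string, to $x[j_1,j_1+w)$ with $j_1\le a$, which Definition~\ref{def:subseq} forbids unless $j_1=a$. This clean realization contributes exactly $(1-d)^w$ to the matching event, so $\Pr[\text{matching at }T[u-w,u)]\ge(1-d)^w$; dividing the bad-mass bounds by $(1-d)^w$ leaves $n\cdot 2^{w/2}\cdot(Cd)^{0.1w}$, and choosing $d$ small enough (depending only on the absolute constants) makes $2^{w/2}(Cd)^{0.1w}\le d^{0.001w}$, giving the claimed bound $nd^{0.001w}$. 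The one remaining point requiring care is the intuitively clear monotonicity that conditioning on the matching only reweights realizations in favour of the cheaper, fewer-deletion ones and so cannot inflate the relative mass of the deletion-heavy bad realizations beyond these bounds; the rest is routine constant-chasing.
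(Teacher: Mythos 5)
First, note that the paper does not actually prove Lemma~\ref{lem:matchsubsequnique}: it is imported from \cite{HMPW08}. The closest in-paper argument is the proof of the analogous Lemma~\ref{lem:goodmatchstruct}, which is exactly your span dichotomy compressed to a few lines: if every window of length $1.1w$ loses at most $0.1w$ bits, then the $w$ sources of the matched block span fewer than $1.1w$ coordinates, subsequence-uniqueness pins that window to $b\leq a$ and $b+1.1w\geq a+w$, and the bad event is therefore contained in the \emph{unconditional} event that some $1.1w$-window suffers more than $0.1w$ deletions, which has probability at most $n\binom{1.1w}{0.1w}d^{0.1w}\leq nd^{0.001w}$ for small $d$. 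Your central idea is the same and is the right one.

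Three genuine problems remain. (i) You prove a statement about $J=j_{w+1}$, the source of $T_u$, the bit \emph{after} the match; the lemma, and the way the paper uses it (the variable $R$ is explicitly ``the position in $x$ of the last bit appearing in the matching'', and the formula $\Pr[V_j=1\mid R=r]=\sum_{\ell=r+1}^nP(\ell-r,j)x_\ell$ only makes sense for that reading), concern the source $j_w$ of $T_{u-1}$, the last bit \emph{of} the match. Your own intermediate step --- $j_w\in[a+0.9w-1,\,a+1.1w-1]$ whenever the span is below $1.1w$ --- is the statement that is actually both true and needed; the printed window $[a+w,a+1.1w)$ cannot be literally correct for $j_w$, since the deletion-free realization already gives $j_w=a+w-1$. (ii) Your claim that the clean realization is the \emph{unique} minimal-span realization misreads Definition~\ref{def:subseq}: non-containment is only asserted for pairs with $a<b$ or $b+1.1w<a+w$, so for $b\in[a-0.1w,a)$ nothing forbids $x[a,a+w)=x[b,b+w)$; there may be several deletion-free realizations shifted left by up to $0.1w$, each carrying conditional mass comparable to the clean one, and consequently the left endpoint of the window genuinely cannot be pushed above $a+0.9w-1$. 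The branch of your argument handling ``$J<a+w$'' therefore fails, not for a fixable technical reason but because that half of the claim is false as stated. (iii) The conditioning step you flag as ``requiring care'' is a real gap in your formulation and a non-issue in the standard one: what the algorithm needs is a bound on the probability that \emph{some} matching has a badly-sourced last bit, which is dominated by the unconditional union bound above with no division by $(1-d)^w$ at all; if you instead condition on a matching at a \emph{specific} $u$, the correct denominator is the probability of that specific event, which need not be $\geq(1-d)^w$, and the ``monotonicity'' you appeal to at the end is precisely the step you have not established.
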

We are now in a position to describe the algorithm introduced in~\cite{HMPW08}. We begin by setting $w=100\log n$, $v=w/d$, and $j=(v-0.1w)(1-3d)$. Then, to recover a $w$-subsequence-unique string $x$ we proceed with two steps: First, we use the algorithm from Lemma~\ref{lem:recoverfirst} to recover the first $v$ bits of $x$ with $\poly(n)$ traces. 
Now, suppose we have recovered $x_1,\dots,x_{i-1}$ for $i-1\geq v$. Our next goal is to recover $x_i$ with $\poly(n)$ traces. Note that if $i$ is relatively large, we cannot use the algorithm from Lemma~\ref{lem:recoverfirst} to recover $x_i$ anymore, as it would require more than $\poly(n)$ traces. To achieve our goal, we instead focus on finding matchings of the substring $x[i-v-w,i-v)$ within the trace. Let $T$ denote a trace of $x$, and suppose there is a matching of $x[i-v-w,i-v)$ in $T$ at positions $T[u-w,u)$. Then, we set $V=T[u,\cdot]$, i.e., we let $V$ be the suffix of the trace following the matching. The key property is that $\Pr[V_j=1]$ satisfies a threshold property depending on the value of $x_i$. More precisely, there exist two positive values $B_1>B_0$ sufficiently far apart such that $\Pr[V_j=1]\leq B_0$ if $x_i=0$ and $\Pr[V_j=1]\geq B_1$ if $x_i=1$. Moreover, all terms in these inequalities can be estimated with a small error from $\poly(n)$ traces of $x$. As a result, we can reliably estimate $x_i$ by checking whether $\Pr[V_j=1]\leq B_0$ or $\Pr[V_j=1]\geq B_1$.

We prove next the threshold property for $\Pr[V_j=1]$. Let $R$ denote the position in $x$ of the bit appearing in position $u-1$ in the trace $T$ of the matching for $x[i-v-w,i-v)$. In other words, $R$ denotes the position in $x$ of the last bit appearing in the matching in $T$. We may write
\begin{align*}
	\Pr[V_j=1]&=\sum_{r=1}^n \Pr[R=r]\Pr[V_j=1|R=r]\\
	&=\eps_i(x)+\sum_{r=i-v}^{i-v+0.1w} \Pr[R=r]\Pr[V_j=1|R=r]\\
	&=\eps_i(x)+\sum_{r=i-v}^{i-v+0.1w} \Pr[R=r]\sum_{\ell=r+1}^n P(\ell-r,j)x_\ell\\
	&=\eps_i(x)+\sum_{r=i-v}^{i-v+0.1w} \Pr[R=r]\sum_{\ell=r+1}^{i-1} P(\ell-r,j)x_\ell\\
	&+\sum_{r=i-v}^{i-v+0.1w} \Pr[R=r]\left(P(i-r,j)x_i+\sum_{\ell=i+1}^{n} P(\ell-r,j)x_\ell\right),
\end{align*}
where the second equality follows from Lemma~\ref{lem:matchsubsequnique} with $0\leq \eps_i(x)\leq nd^{-0.001w}$. Using the first part of Lemma~\ref{lem:basicpij}, we conclude that $\sum_{\ell=i+1}^{n}P(\ell-r,j)\leq \frac{1}{2}P(i-r,j)$. As a result, we have
\begin{multline}\label{eq:UBoriginal}
x_i=0\implies \Pr[V_j=1]\leq \eps_i(x)+\sum_{r=i-v}^{i-v+0.1w}\Pr[R=r]\sum_{\ell=r+1}^{i-1} P(\ell-r,j)x_\ell
+\frac{1}{2}\sum_{r=i-v}^{i-v+0.1w}\Pr[R=r]P(i-r,j)
\end{multline}
and
\begin{multline}\label{eq:LBoriginal}
x_i=1\implies \Pr[V_j=1]\geq \eps_i(x)+\sum_{r=i-v}^{i-v+0.1w}\Pr[R=r]\sum_{\ell=r+1}^{i-1} P(\ell-r,j)x_\ell
+\sum_{r=i-v}^{i-v+0.1w}\Pr[R=r]P(i-r,j).
\end{multline}
By the second part of Lemma~\ref{lem:basicpij}, since $i-r\leq v$ and $v=w/d$, we have $P(i-r,j)\geq 2^{-9w}$. Combining this with Lemma~\ref{lem:matchsubsequnique} for $d$ small enough means that the gap between the right hand side of~\eqref{eq:UBoriginal} and~\eqref{eq:LBoriginal} is at least $2^{-(9w+1)}$. To finalize the argument, we note that (i) we can efficiently approximate $\Pr[V_j=1]$ to within an error of, say, $2^{-100w}$ with high probability from $\poly(n)$ traces of $x$, and (ii) we can efficiently approximate $\Pr[R=r|R<i]$ to within the same error given that we know $x_1,\dots,x_{i-1}$, provided $d$ is small enough. Since $\Pr[R<i]\geq 1-nd^{-0.001w}$ by Lemma~\ref{lem:matchsubsequnique}, we can further efficiently approximate $\Pr[R=r]$ to within an error of, say, $2^{-50w}$ with high probability. From these observations it follows that we can estimate $x_i$ correctly with high probability from $\poly(n)$ traces, where the degree of the polynomial is independent of $i$, as desired.

\section{Marker-based constructions}\label{sec:marker}

We start with simple constructions of high-rate codes that can be efficiently reconstructed from a few traces. The idea behind the approach is the following: Each codeword contains markers, consisting of sufficiently long runs of $0$'s and $1$'s. Between two consecutive markers, we add a short block containing a codeword from an inner code satisfying a mild constraint. 

Intuitively, the runs in the markers will still be long in the trace, and so we hope to be able to correctly identify the positions of all markers in a trace with high probability. After this is done, we can effectively split the trace into many shorter, independent sub-traces corresponding to a block (and possibly some bits from the two markers delimiting it). Then, we can apply worst-case trace reconstruction algorithms to the sub-traces. The savings in the number of traces required for reconstruction stem from the fact that sub-traces are short, and that each trace can be utilized simultaneously (and independently) by all blocks. This idea for reconstruction almost works as is, except that the process of identifying the markers in a trace may be affected by long runs of $0$'s originating from a block between two markers. However, this can be easily solved by requiring that all runs of $0$'s in each block are short enough. Many codes, including codes with low redundancy, satisfy the desired property, and hence make for good candidates for the inner code.

We describe and analyze a code based on the idea discussed above in Section~\ref{sec:basicconst}. 
Then, we consider a follow-up construction in Section~\ref{sec:leveltwo} which requires fewer traces, at the expense of a decrease in the rate. 
At a high-level, this second code is obtained by introducing two levels of markers and adding some simple error-correction redundancy to the message prior to other encodings.
Finally, in Section~\ref{sec:synthesisconstraints} we extend these ideas to the $\{A,C,G,T\}$ alphabet in order to obtain high-rate codes with desireable properties for use in DNA-based storage. Namely, these codes have balanced $GC$-content and can be reconstructed from few traces. Such codes are designed by exploiting the fact that the marker-based constructions can be instantiated with a large range of inner codes, and we can make the inner code satisfy stronger constraints than before.

\subsection{A simple construction}\label{sec:basicconst}
Here we provide a precise description of the encoder $\Enc$ for our code $\cC$ and prove Theorem~\ref{thm:tracesmarker1}. For simplicity, we consider $d=1/2$ throughout. 

Let $\ell=50\log n$, and define two strings $M_0=0^\ell$ and $M_1=1^\ell$. Then, a marker $M$ is a string of length $2\ell$ of the form $M=M_0||M_1=0^\ell||1^\ell$. We also require an efficiently encodable and decodable inner code $\cC'\subseteq\bits^{m+r}$ with encoder $\Enc':\bits^m\to\bits^{m+r}$, where $m=\log^2 n$ and $r$ is the redundancy, satisfying the following property.
\begin{prop}\label{prop:manyones}
	For all $c\in\cC$ and substrings $s$ of $c$ with $|s|=\sqrt{m}$, it holds that $w(s)\geq |s|/3$.
\end{prop}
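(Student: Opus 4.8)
The plan is to construct the inner code $\cC'$ explicitly by prepending a short deterministic prefix to each encoded block that forces enough $1$'s locally, and then to argue that we can afford this with only $O(m/\sqrt m) = O(\sqrt m)$ bits of redundancy per block, which is negligible compared to the block length $m = \log^2 n$. Concretely, the idea is to interleave: partition each candidate codeword of length $m+r$ into consecutive windows of length $\sqrt m/2$, and reserve, say, every other position (or a short run of positions) inside each window to be filled with a fixed pattern guaranteeing at least $\sqrt m/6$ ones in any window, hence at least $|s|/3$ ones in any length-$\sqrt m$ substring since such a substring spans at least one full window. The free positions carry the $m$ message bits plus whatever edit-error redundancy we already need from Lemma~\ref{lem:systcode}; since only a constant fraction of positions in each window are reserved, the total overhead is a constant fraction of $m$, which is within the $r = O(m)$ budget we are allowed (and which in turn contributes only $O(n/\log n)$ to the global redundancy $r$ once there are $n/m$ blocks).

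The key steps, in order, would be: (i) fix the window length to be a small constant fraction of $\sqrt m$ so that every length-$\sqrt m$ substring $s$ of $c$ fully contains at least one window; (ii) specify the reserved positions within each window and the fixed bit pattern placed there, chosen so that each window contributes at least $|s|/3$ ones on its own — e.g.\ reserve a block of $\lceil\sqrt m/6\rceil$ consecutive positions per window and set them all to $1$; (iii) verify the counting: a substring of length $\sqrt m$ contains a complete window, that window alone contributes $\geq |s|/3$ ones, so $w(s) \geq |s|/3$; (iv) define $\Enc'$ to first apply the systematic edit-error encoder of Lemma~\ref{lem:systcode} to the $m$ message bits (so the code remains efficiently decodable against the edit errors that the marker-based outer construction will introduce at block boundaries), and then spread the resulting string across the free positions while filling reserved positions with the fixed pattern; (v) bound the redundancy $r$ as the sum of the reserved-position overhead ($O(\sqrt m)\cdot O(\sqrt m) = O(m)$, or less if the pattern is sparser) and the edit-code redundancy from Lemma~\ref{lem:systcode}, confirming $r = O(m)$ and hence total outer redundancy $O(n/\log n)$; (vi) note efficiency of encoding and decoding is immediate since both the interleaving and the edit code are efficient.

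I expect the main obstacle to be a tension between two competing demands on $\cC'$: Property~\ref{prop:manyones} wants many reserved/forced positions, but the code must still be efficiently \emph{decodable} after the trace reconstruction step recovers the block contents up to a bounded number of edit errors (the bits lost or gained near the markers). If one naively reserves positions with a fixed pattern, a few deletions shift the alignment and the decoder can no longer tell which recovered bits are message bits and which are padding. The clean way around this, which I would adopt, is to push \emph{all} error-correction work into the systematic edit-error code of Lemma~\ref{lem:systcode}: encode the message with that code first, treat its output as the payload, and only then interleave the forcing pattern; after trace reconstruction returns the block (possibly with a few edit errors confined to it), one strips the known pattern positions — tolerating misalignment because the edit code corrects it — and decodes. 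The remaining routine work is checking that the constants line up ($\sqrt m/6$ ones per window really does give $|s|/3$, window length really does guarantee containment) and that the per-block redundancy sums to $O(n/\log n)$ globally, both of which are straightforward arithmetic.
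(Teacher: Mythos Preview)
Your proposal has a quantitative gap that cannot be repaired by adjusting constants. Any scheme that enforces Property~\ref{prop:manyones} by \emph{forcing} a fixed set of positions to equal $1$ (independent of the message) must reserve at least a $1/3$ fraction of the positions in every length-$\sqrt{m}$ window: if the message bits are all zero, the only $1$'s are the forced ones, so $w(s)\geq |s|/3$ for all such $s$ requires density at least $1/3$ of forced positions. Your step (v) computes this overhead correctly as $O(\sqrt m)\cdot O(\sqrt m)=O(m)$, but the conclusion ``hence total outer redundancy $O(n/\log n)$'' is wrong. With $n/m$ blocks and per-block redundancy $r=\Theta(m)$, the outer redundancy is $(n/m)\cdot r=\Theta(n)$; see~\eqref{eq:redmarker}, where the term $nr/\log^2 n$ becomes $\Theta(n)$. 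To meet the $O(n/\log n)$ target of Theorem~\ref{thm:tracesmarker1} one needs $r=O(\sqrt m)$ at most. (Your opening sentence asserts $O(\sqrt m)$ overhead, but this contradicts step (v); with only $O(\sqrt m)$ reserved positions spread over $\Theta(\sqrt m)$ windows you get $O(1)$ forced ones per window, far short of $|s|/3$.) There is also a smaller arithmetic slip in step (iii): one full window of length $\sqrt m/2$ carrying $\sqrt m/6$ ones yields $w(s)\geq |s|/6$, not $|s|/3$.

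The paper sidesteps the obstacle by not forcing positions at all. It takes an $\eps$-almost $k$-wise independent generator $g:\bits^t\to\bits^m$ from Lemma~\ref{lem:almostind} with $t=O(\log m)$ and sets $\Enc'(x)=z\,\|\,(x+g(z))$, where $z$ is a seed found by brute force (Lemma~\ref{lem:manyones} guarantees one exists) such that every length-$w$ window of $x+g(z)$ has weight at least $0.4w$. The only redundancy is the $t=O(\log m)=O(\log\log n)$ bits of the seed, so $nr/\log^2 n=o(n/\log n)$. Also note that the edit-error code of Lemma~\ref{lem:systcode} plays no role here: once the markers are located, Lemma~\ref{lem:worstcase} recovers each block $1^\ell\|\overline{x}^{(i)}\|0^\ell$ exactly with high probability, so the inner code need not correct edits, and your concern about post-reconstruction misalignment does not arise.
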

In other words, every codeword of $\cC'$ has many $1$'s in all short enough substrings. Such efficient codes exist with redundancy $r=O(\log m)=O(\log\log n)$, which is enough for our needs. We provide a simple construction in Section~\ref{sec:inner code}.

Suppose we wish to encode an $n$-bit message $x\in\{0,1\}^n$. The encoder $\Enc$ on input $x$ proceeds through the following steps:
\begin{enumerate}
	\item Split $x$ into $n/\log^2 n$ blocks, each of length $\log^2 n$
	\begin{equation*}
	x=x^{(1)}||x^{(2)}||\cdots||x^{(n/\log^2 n)};
	\end{equation*}
	
	\item Encode each block $x^{(i)}$ under the inner code $\cC'$ to obtain $\overline{x}^{(i)}=\Enc'(x^{(i)})\in\{0,1\}^{\log^2 n+r}$;
	
	\item Set the encoding of $x$, denoted by $\Enc(x)$, to be
	\begin{equation*}
		\Enc(x)=1^\ell||\overline{x}^{(1)}||M||\overline{x}^{(2)}||M||\cdots||M||\overline{x}^{(n/\log^2 n)}||0^\ell.
	\end{equation*}
\end{enumerate}
We remark that the first run $1^\ell$ and the last run $0^\ell$ are superfluous, and are added only to make the analysis simpler. Computing $\Enc(x)$ from $x$ and decoding $x$ from $\Enc(x)$ can both be done efficiently if the inner code $\cC'$ is efficiently encodable and decodable. 

We now compute the redundancy of $\cC$. It is straightforward to see that
\begin{equation}\label{eq:redmarker}
	|\Enc(x)|\leq \frac{n}{\log^2 n}(|M|+|\overline{x}^{(1)}|)=n+O\left(\frac{n}{\log n}\right)+\frac{nr}{\log^2 n}.
\end{equation}
As mentioned before, we have $r=O(\log\log n)$. Therefore, $\cC$ can be made to have redundancy $O\left(\frac{n}{\log n}\right)$. In the remainder of this section, we prove Theorem~\ref{thm:tracesmarker1} using $\cC$ via a sequence of lemmas. For convenience, we restate the theorem below.
\begin{thm}[Theorem~\ref{thm:tracesmarker1}, restated]
	There is an efficient algorithm that recovers every $c\in\cC$ from $\exp(O(\log^{2/3} n))$ traces in time $\poly(n)$ with probability $1-1/\poly(n)$.
\end{thm}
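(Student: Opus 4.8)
The plan is to fix an arbitrary codeword $c=\Enc(x)$ and a single trace $T$ of $c$, and to show that, except with probability $1/\poly(n)$, the trace can be cut at the images of the markers into $N:=n/\log^2 n$ consecutive \emph{sub-traces}, one per block, so that the $i$-th sub-trace is an honest trace (through the deletion channel with the same $d=1/2$) of the \emph{augmented block}
\[
  t^{(i)}:=1^\ell\,||\,\overline{x}^{(i)}\,||\,0^\ell ,
\]
and so that the $N$ sub-traces obtained from one trace are mutually independent. Granting this, for each $i$ I would collect the $t(n)=\exp(O(\log^{2/3}n))$ sub-traces associated with block $i$ across the $t(n)$ input traces, run the worst-case reconstruction algorithm of Lemma~\ref{lem:worstcase} on them to recover $t^{(i)}$ (note $|t^{(i)}|=\log^2 n+O(\log\log n)$, so this costs $\exp(O((\log^2 n)^{1/3}))=\exp(O(\log^{2/3}n))$ traces and time), strip the known prefix $1^\ell$ and suffix $0^\ell$ to obtain $\overline{x}^{(i)}$, and invert $\Enc'$ to recover $x^{(i)}$; concatenating recovers $x$.

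I would organize the argument into a marker-identification lemma and a probabilistic-correctness lemma. For the former: the algorithm locates every maximal run of at least $\tau:=10\log n$ consecutive $0$'s in $T$, and I define a good event $G=G(T)$ asking that (i) each of the $N$ runs $0^\ell$ occurring in $c$ (the $0^\ell$-halves of the $N-1$ internal markers together with the terminal $0^\ell$) retains at least $\tau$ of its bits; (ii) no trace of an individual block $\overline{x}^{(i)}$ contains $\tau$ consecutive $0$'s; and (iii) each of the $N$ runs $1^\ell$ occurring in $c$ retains at least one bit. On $G$ the maximal $\geq\tau$-runs of $0$'s in $T$ are exactly the surviving images of the $N$ runs from (i), in left-to-right order (each possibly prolonged on its left by surviving trailing $0$'s of the preceding block, which is harmless), so the markers are identified correctly. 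Defining the $i$-th sub-trace as the stretch of $T$ from just past the $(i-1)$-th such run (from the start of $T$ when $i=1$) through the end of the $i$-th such run, on $G$ this stretch consists of exactly the surviving bits of the preceding $1^\ell$, then of $\overline{x}^{(i)}$, then of the following $0^\ell$ — i.e.\ the honest trace of $t^{(i)}$ — and it is a deterministic function of the deletions inside the positions of $t^{(i)}$ only; since those position sets partition the positions of $c$, the $N$ sub-traces extracted on $G$ are mutually independent with the claimed marginals. To avoid awkward conditioning on $G$, I would phrase this as a coupling: the honest traces $\tilde T^{(i)}$ of the $t^{(i)}$ read off from $T$ are always well defined and jointly have the stated independence and marginals, and on $G$ the algorithm's extracted sub-traces coincide with them.

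It then remains to bound $\Pr[\neg G]$ and assemble the pieces. For (i) and (iii) the surviving count of a run $0^\ell$ or $1^\ell$ is $\mathsf{Bin}(\ell,1/2)$ with $\ell=50\log n$, so a Chernoff bound gives failure probability $n^{-\Omega(1)}$ per run with a large hidden constant, and a union bound over the $O(n)$ runs keeps this $n^{-\Omega(1)}$. For (ii) the key combinatorial point is that Property~\ref{prop:manyones} forbids $\sqrt m=\log n$ consecutive zeros inside a block and, by tiling with length-$\log n$ sub-windows each of weight $\geq\log n/3$, forces any window of a block containing $k\geq\tau$ zeros to contain $\Omega(\log n)$ ones; a run of $\geq\tau$ zeros in the trace of $\overline{x}^{(i)}$ would require all those $\Omega(\log n)$ ones to be deleted, an event of probability $n^{-\Omega(1)}$, and a union bound over the $\poly(\log n)$ candidate windows per block and the $\le n$ blocks again keeps this $n^{-\Omega(1)}$. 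Tuning constants so that $\Pr[\neg G]\le n^{-c}$ for a sufficiently large $c$, and using $t(n)=\exp(O(\log^{2/3}n))=n^{o(1)}$, a union bound over the $t(n)$ input traces shows that all of them are good except with probability $1/\poly(n)$; on that event the algorithm sees genuine independent traces of each $t^{(i)}$, so Lemma~\ref{lem:worstcase} recovers every $t^{(i)}$ with failure probability $\exp(-\Omega(\log^2 n))$, and a union bound over the $\le n$ blocks makes the total failure probability $1/\poly(n)$. The running time is $\poly(n)$ since run-finding is linear per trace, each of the $N$ invocations of Lemma~\ref{lem:worstcase} runs in time $\exp(O(\log^{2/3}n))=n^{o(1)}$, and inverting $\Enc'$ on each block is efficient.

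I expect the delicate step to be the second paragraph: matching the extracted sub-trace \emph{exactly} to a trace of the augmented block $1^\ell\,||\,\overline{x}^{(i)}\,||\,0^\ell$ — so that the surviving suffix of the previous marker's $1^\ell$ and the surviving $0^\ell$ of the next marker (which absorbs the surviving trailing zeros of $\overline{x}^{(i)}$) are precisely the prefix and suffix of $t^{(i)}$ — and establishing joint independence cleanly via the coupling with the always-defined traces $\tilde T^{(i)}$. The combinatorial estimate behind event (ii), namely that Property~\ref{prop:manyones} rules out long zero-runs in block sub-traces except with polynomially small probability, is the other spot requiring a short dedicated argument.
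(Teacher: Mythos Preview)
Your proposal is correct and follows essentially the same route as the paper. You reproduce the content of Lemmas~\ref{lem:largemarkers}, \ref{lem:nolongruns}, and~\ref{lem:splittrace} (your events (i)--(iii) and the bound on $\Pr[\neg G]$), then apply Lemma~\ref{lem:worstcase} blockwise and union-bound, exactly as the paper does. The one stylistic difference is that you handle the conditioning on $G$ via an explicit coupling with the always-defined honest traces $\tilde T^{(i)}$, whereas the paper introduces the auxiliary events $E^{(i)}_{\mathsf{split}}$ and passes through the chain $\Pr[E_\mathsf{fail}]\le \Pr[(\exists i:E^{(i)}_\mathsf{indFail}),(\forall i:E^{(i)}_\mathsf{split})]+\Pr[\neg E_\mathsf{all}]\le \Pr[\exists i:E^{(i)}_\mathsf{indFail}]+1/n$; your formulation is arguably cleaner but proves the same inequality. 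One small caveat: with $\ell=50\log n$ fixed as in the paper, the constant $c$ in $\Pr[\neg G]\le n^{-c}$ is not freely tunable; you get $\Pr[\neg G]\le n^{-2}$, which already yields overall failure $\le 2/n=1/\poly(n)$, and the paper then boosts to an arbitrary polynomial by $O(\log n)$ independent repetitions with majority vote.
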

To prove Theorem~\ref{thm:tracesmarker1} we proceed in steps: First, we show that the markers $M$ still contain long enough runs after they are sent through the deletion channel. 
Then, we show that no long runs of $0$'s originate from the sub-traces associated with each block. This implies that we can correctly identify the position of the ``01" string of each marker 
in the trace. Finally, we show that we can apply the worst-case trace reconstruction algorithm from Lemma~\ref{lem:worstcase} to recover each block with high probability and with the desired number of traces.

We start by proving that the markers $M$ still contain long runs after they are sent through the deletion channel.
\begin{lem}\label{lem:largemarkers}
	Let $0^{L_0}1^{L_1}$ be the output of the deletion channel on input $M$. Then,
	\[
	\Pr[L_0>10\log n,L_1>0]\geq 1-n^{-3}.
	\]
\end{lem}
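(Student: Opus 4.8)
The plan is to analyze the two runs of the marker $M = 0^\ell || 1^\ell$ separately, since the deletion channel acts independently on each bit. Write $L_0$ for the number of surviving bits of the first run $0^\ell$ and $L_1$ for the number of surviving bits of the second run $1^\ell$; the output of the channel on $M$ is exactly $0^{L_0} 1^{L_1}$ because the deletion channel preserves relative order and never merges or creates runs. Then $L_0 \sim \mathsf{Bin}(\ell, 1-d)$ and $L_1 \sim \mathsf{Bin}(\ell, 1-d)$ are independent, with $\ell = 50\log n$ and (in this section) $d = 1/2$, so each has mean $\ell/2 = 25\log n$.

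The first step is to bound $\Pr[L_1 = 0]$: this is just $d^\ell = 2^{-50\log n} = n^{-50}$, which is far smaller than what we need. The second step is to lower-bound $L_0$ via a Chernoff bound: since $\E[L_0] = 25\log n$, the event $L_0 \le 10\log n$ is a deviation below $(2/5)$ of the mean, so a standard multiplicative Chernoff bound gives $\Pr[L_0 \le 10\log n] \le \exp(-c\log n) = n^{-c}$ for some absolute constant $c$ (one can check $c > 3$ comfortably with the crude bound $\Pr[\mathsf{Bin}(\ell,1/2) \le \ell/5] \le \exp(-\ell \cdot (1/2)(3/5)^2/2)$ or similar; the constant $50$ in $\ell$ is chosen precisely to make the exponent beat $3$). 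The final step is a union bound:
\[
\Pr[L_0 > 10\log n,\ L_1 > 0] \ge 1 - \Pr[L_0 \le 10\log n] - \Pr[L_1 = 0] \ge 1 - n^{-c} - n^{-50} \ge 1 - n^{-3}
\]
for $n$ large enough.

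I do not expect any real obstacle here — this is a routine concentration argument, and the main (trivial) point to get right is simply bookkeeping the constants so that the Chernoff exponent is at least $3\ln n$, which is guaranteed by the generous choice $\ell = 50\log n$. The only thing worth stating carefully is the structural observation that the channel output on $M$ really is of the form $0^{L_0}1^{L_1}$ with $L_0, L_1$ independent binomials, since everything else follows from textbook tail bounds.
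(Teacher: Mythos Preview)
Your proposal is correct and follows essentially the same approach as the paper: both compute $\Pr[L_1=0]=2^{-\ell}=n^{-50}$, apply a Chernoff bound to get $\Pr[L_0\le 10\log n]\le \exp(-4.5\log n)\le n^{-4}$, and combine the two estimates (you via a union bound, the paper by noting independence, which is an immaterial difference). Your explicit structural remark that the channel output on $M$ is $0^{L_0}1^{L_1}$ with $L_0,L_1$ independent binomials is a small clarification the paper leaves implicit, but otherwise the arguments coincide.
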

\begin{proof}
	The result follows by a standard application of the Chernoff bound. More precisely, we have $\mathds{E}[L_0]=25\log n$, and hence
	\[
	\Pr[L_0\leq 10\log n]=\Pr[L_0\leq \mathds{E}[L_0]-15\log n]\leq \exp\left(-\frac{15^2 \log^2 n}{2\mathds{E}[L_0]}\right)\geq n^{-4}.
	\]
	To conclude the proof, we note that $\Pr[L_1=0]=2^{-\ell}=n^{-50}$, and that the two events in question are independent.
\end{proof}
We now show that no long runs of $0$'s originate from the sub-traces associated with each block.
\begin{lem}\label{lem:nolongruns}
	Let $c\in\cC'$. Then, a trace of $c$ does not contain a run of $0$'s of length at least $10\log n$ with probability at least $1-n^{-3}$.
\end{lem}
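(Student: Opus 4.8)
The plan is to translate the existence of a long run of $0$'s in a trace of $c$ into a purely combinatorial statement about $c$ itself — that some length-$10\log n$ window of $c$ has \emph{all} of its $1$-bits deleted — and then to bound the probability of that event by a union bound over windows, using Property~\ref{prop:manyones} to guarantee that every such window contains $\Omega(\log n)$ ones.

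First I would make the reduction precise. Suppose a trace $T$ of $c$ contains a run of $0$'s of length at least $10\log n$, and fix $10\log n$ consecutive bits of this run; let $i_1<i_2<\dots<i_{10\log n}$ denote their positions in $c$. All of these bits survive the deletion channel, all equal $0$, and — being consecutive in $T$ — no bit of $c$ strictly between positions $i_1$ and $i_{10\log n}$ survives. In particular, every $1$-bit of $c$ lying in the interval $[i_1,i_{10\log n}]$ is deleted, and since $i_{10\log n}\ge i_1+10\log n-1$, this interval contains the length-$10\log n$ window $c[i_1,i_1+10\log n)$. Hence the event ``$T$ contains a run of $0$'s of length $\ge 10\log n$'' is contained in $\bigcup_a B_a$, where $B_a$ is the event that every $1$-bit of the substring $c[a,a+10\log n)$ is deleted and $a$ ranges over $1,\dots,|c|-10\log n+1$ (a nonempty range, since $|c|=\log^2 n+O(\log\log n)\gg 10\log n$).

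Next I would bound $\Pr[B_a]$. Since $m=\log^2 n$ we have $\sqrt m=\log n$, so $c[a,a+10\log n)$ decomposes into $10$ consecutive length-$\sqrt m$ substrings of $c$; by Property~\ref{prop:manyones} each of these has Hamming weight at least $\sqrt m/3$, so $c[a,a+10\log n)$ contains at least $\tfrac{10}{3}\log n$ ones. As the deletion channel (with $d=1/2$) deletes bits independently, $\Pr[B_a]\le 2^{-\frac{10}{3}\log n}=n^{-10/3}$. A union bound over the at most $|c|=\log^2 n+O(\log\log n)$ choices of $a$ then bounds the probability in question by $|c|\cdot n^{-10/3}\le n^{-3}$ for all sufficiently large $n$, which is the claim.

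I do not expect any serious obstacle here; the one point needing care is the choice of constants. The exponent produced by Property~\ref{prop:manyones} (here $\tfrac{10}{3}$) must strictly exceed $3$ so that it absorbs the $O(\log^2 n)$ union-bound factor — which is why the window length is taken to be a sufficiently large multiple of $\sqrt m$ — and this same computation is exactly where the normalization $d=1/2$ (rather than an arbitrary constant $d<1$) enters. Everything else is routine bookkeeping, including the minor integrality slack in treating $\log n$ and $10\log n$ as integers.
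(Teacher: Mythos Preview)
Your proposal is correct and follows essentially the same approach as the paper: both argue that a $10\log n$-long run of zeros in the trace forces the deletion of at least roughly $\tfrac{10}{3}\log n$ ones from $c$ (via Property~\ref{prop:manyones}), giving probability $\le n^{-10/3}$ per event, and then union-bound over $O(\log^2 n)$ events. Your reduction to windows of $c$ is stated more carefully than the paper's terse version (which phrases the union bound over ``sequences of consecutive $1$'s'' rather than windows), but the underlying computation and constants are identical.
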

\begin{proof}
	Since $c\in\cC'$, a run of $0$'s of length at least $10\log n$ in the trace of $c$ requires that at least $10\times \frac{\log n}{3}-1$ consecutive $1$'s are deleted in $c$. The probability that this happens for a fixed sequence of $10\times \frac{\log n}{3}-1$ consecutive $1$'s is at most $n^{-3.3}$. Since there are at most $O(\log^2 n)$ such sequences in $c$, by the union bound it follows that the desired probability is at most $n^{-3}$.
\end{proof}
The next lemma follows immediately by combining Lemmas~\ref{lem:largemarkers} and~\ref{lem:nolongruns} with the union bound over the $n/\log^2 n$ blocks.
\begin{lem}\label{lem:splittrace}
	Consider the following event $E$: \emph{We correctly identify the separation between the traces of $0^\ell$ and $1^\ell$ from every marker in the trace of $\Enc(x)$ by looking for all $1$'s that appear immediately after a run of at least $10\log n$ $0$'s.}
	
	Then, $E$ happens with probability at least $1-n^{-2}$ over the randomness of the trace.
\end{lem}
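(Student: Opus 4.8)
The plan is to bundle the favorable events supplied by Lemmas~\ref{lem:largemarkers} and~\ref{lem:nolongruns} into a single ``good'' event $\Egood$ --- one instance per marker and one per block --- show that $\Egood\subseteq E$, and then bound $\Pr[\Egood^c]$ by a union bound. First I would fix notation for the trace. Since the deletion channel acts independently on each bit, the trace $T$ of
\[
\Enc(x)=1^\ell||\overline{x}^{(1)}||M||\overline{x}^{(2)}||M||\cdots||M||\overline{x}^{(n/\log^2 n)}||0^\ell
\]
is the concatenation of the independent traces of its segments. Write $T^{(i)}$ for the trace of the block $\overline{x}^{(i)}\in\cC'$ and $S^{(j)}=0^{L_0^{(j)}}1^{L_1^{(j)}}$ for the trace of the $j$-th copy of $M=0^\ell||1^\ell$. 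For each marker $j$ let $A_j=\{L_0^{(j)}>10\log n\}\cap\{L_1^{(j)}>0\}$, so $\Pr[A_j^c]\le n^{-3}$ by Lemma~\ref{lem:largemarkers}; for each block $i$ let $B_i$ be the event that $T^{(i)}$ contains no run of $0$'s of length at least $10\log n$, so $\Pr[B_i^c]\le n^{-3}$ by Lemma~\ref{lem:nolongruns}. For uniformity I would also include the event, of probability at least $1-2^{-\ell}$, that the leading run $1^\ell$ survives at least one bit. Let $\Egood$ be the intersection of all these events.

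The crux is to check that on $\Egood$ the detection rule outputs exactly the ``$0^\ell/1^\ell$ separation points'' of the markers, that is, for each $j$ the boundary inside $S^{(j)}$ between $0^{L_0^{(j)}}$ and $1^{L_1^{(j)}}$. For \emph{completeness}: on $A_j$ the string $S^{(j)}$ begins with more than $10\log n$ consecutive $0$'s and then contains at least one $1$, so inside $T$ that first $1$ of $S^{(j)}$ is immediately preceded by at least its own $L_0^{(j)}>10\log n$ zeros and is therefore flagged --- and it is precisely the separation point of marker $j$. For \emph{soundness}: suppose the rule flags a bit of $T$, i.e.\ a $1$ immediately preceded by a run of at least $10\log n$ zeros, and trace its source segment. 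It cannot lie in a $0^\ell$ half of a marker or in the trailing $0^\ell$ (no $1$'s there). If it lies in a $1^\ell$ run --- a marker half or the leading $1^\ell$ --- and is not the first surviving bit of that run, then the preceding bit of $T$ is again a $1$, a contradiction; and a flag cannot occur at the very first bit of $T$; so a flagged bit coming from a $1^\ell$ run must be the separation point of some marker. The remaining case is that the flagged bit is the $k$-th surviving bit of some block $\overline{x}^{(i)}$; then the $0$-run preceding it either stays inside $T^{(i)}$ (impossible on $B_i$) or reaches the left end of $T^{(i)}$, forcing $T^{(i)}[1,k)$ to consist entirely of zeros, hence to have length $k-1<10\log n$ by $B_i$, while the segment of $T$ just to the left of $T^{(i)}$ ends in a surviving bit of a $1^\ell$ run (by the relevant $A_j$, or by the leading-run event when $i=1$) and thus in a $1$ that caps the run; both cases contradict the run having length at least $10\log n$. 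Hence the flagged set equals the set of marker separation points, i.e.\ $E$ holds. I expect this soundness direction to be the main obstacle, as it is the one place where both the absence of long internal $0$-runs in the blocks and the survival of each marker's $1^\ell$ half must be used together; completeness and the union bound are routine.

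Finally, counting $n/\log^2 n$ blocks, $n/\log^2 n-1$ markers, and the single leading-run event,
\[
\Pr[E^c]\le\Pr[\Egood^c]\le\frac{2n}{\log^2 n}\cdot n^{-3}\le n^{-2}
\]
for $n$ large enough, which is the claimed bound.
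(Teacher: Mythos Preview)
Your proposal is correct and follows the same approach as the paper: combine Lemmas~\ref{lem:largemarkers} and~\ref{lem:nolongruns} via a union bound over the $n/\log^2 n$ blocks and markers. The paper's proof is a single sentence to this effect, whereas you have additionally spelled out the completeness/soundness verification that $\Egood\subseteq E$, which the paper leaves implicit.
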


We are now ready to prove Theorem~\ref{thm:tracesmarker1}. Let $E$ denote the event described in Lemma~\ref{lem:splittrace}. Then, Lemma~\ref{lem:splittrace} implies that, conditioned on $E$ happening, we can split a trace $T$ of $\Enc(x)$ into $n/\log^2 n$ strings $T^{(1)},\dots,T^{(n/\log^2 n)}$ satisfying the following:
	\begin{itemize}
		\item The strings $T^{(i)}$ are independent;
		\item Each string $T^{(i)}$ is distributed like a trace of $1^\ell || \overline{x}^{(i)} ||0^\ell$ conditioned on the high probability event $E$.
	\end{itemize}
	In fact, each string $T^{(i)}$ can be identified by looking for the $(i-1)$-th and $i$-th runs of $0$ of length at least $10\log n$ in the trace $T$, and picking every bit in $T$ immediately after the $(i-1)$-th run up to and including the $i$-th run.
	
	Observe that $1^\ell || \overline{x}^{(i)} ||0^\ell$ has length $O(\log^2 n)$. Suppose that we have $t=\exp(O(\log n)^{2/3})$ independent traces $T_1,\dots,T_t$ of $\Enc(x)$. Let $E_\mathsf{all}$ denote the event that $E$ holds for all $T_i$ simultaneously. Combining Lemma~\ref{lem:splittrace} with a union bound yields
	\begin{equation}\label{eq:boundEall}
		\Pr[E_\mathsf{all}]\geq 1- t/n^2 >1-1/n.
	\end{equation}
Fix some trace reconstruction algorithm $\mathcal{A}$, and let $E^{(i)}_{\mathsf{indFail}}$ denote the event that $\mathcal{A}$ fails to recover a fixed string $y^{(i)}=1^\ell||\overline{x}^{(i)}||0^\ell$ from $t$ independent traces of $y^{(i)}$. 
Assuming that $E_\mathsf{all}$ holds, the strings $T^{(i)}_1,\dots,T^{(i)}_t$ are distributed as $t$ independent traces of $y^{(i)}$, each also satisfying the conditions that the first run $1^\ell$ is not completely deleted, the last run $0^\ell$ has length at least $10\log n$ in the trace, and there is no run of $0$'s of length at least $10\log n$ in the trace of $\overline{x}^{(i)}$. 
We denote the event that these conditions hold for all of the $t$ independent traces of $y^{(i)}$ by $E^{(i)}_\mathsf{split}$. 
Finally, we let $E_\mathsf{fail}$ denote the event that we fail to recover $\Enc(x)$ from the $t$ i.i.d.\ traces $T_1,\dots, T_t$. Then, we have
	\begin{align}
		\Pr[E_\mathsf{fail}]&\leq \Pr[E_\mathsf{fail},E_\mathsf{all}]+\Pr[\neg E_\mathsf{all}]\nonumber\\
		& =\Pr[(\exists i: E^{(i)}_\mathsf{indFail}), (\forall i: E^{(i)}_\mathsf{split})]+\Pr[\neg E_\mathsf{all}]\nonumber\\
		&\leq \Pr[\exists i: E^{(i)}_\mathsf{indFail}]+1/n\label{eq:genbeforeunion}\\
		&\leq \sum_{i=1}^{n/\log^2 n}\Pr[E^{(i)}_\mathsf{indFail}]+1/n.\label{eq:genboundfail}
	\end{align}
	The first equality follows from the discussion in the previous paragraph, the second inequality follows from~\eqref{eq:boundEall}, and the third inequality follows 
	by the union bound. Instantiating $\mathcal{A}$ with the worst-case trace reconstruction algorithm from Lemma~\ref{lem:worstcase}, we conclude from~\eqref{eq:genboundfail} that
	\begin{equation*}
		\Pr[E_\mathsf{fail}]\leq n\cdot \exp(-2\log^2 n)+1/n<2/n.
	\end{equation*}

	As a result, we can successfully recover $x$ from $\exp(O(\log n)^{2/3})$ traces of $\Enc(x)$ with probability at least $1-2/n$. To conclude the proof, 
	we note that we can repeat the process $O(\log n)$ times and take the majority vote to boost the success probability to $1-1/p(n)$ for any fixed polynomial $p$ of our choice. The total number of traces required is still $\exp(O(\log^{2/3}n))$. 
	Since recovering each $\overline{x}^{(i)}$ from the associated traces takes time $\exp(O(\log^{2/3}n))$ and the inner code $\cC'$ has an efficient decoder, the whole procedure is efficient.

\subsubsection{Instantiating the inner code}\label{sec:inner code}

What remains to be done is to instantiate the inner code $\cC'$ with the appropriate parameters and properties. To this end, we present a simple construction of an efficiently encodable and decodable inner code $\cC'$ with encoder $\Enc':\bits^m\to\bits^{m+r}$ and redundancy $r=O(\log m)$. We can then obtain the desired code by setting $m=\log^2 n$. The starting point is the following result.
\begin{lem}\label{lem:manyones}
	Let $g:\bits^t\to\bits^m$ be the function whose existence is guaranteed by Lemma~\ref{lem:almostind} with $k=3w$ and $\eps=2^{-10w}$ for $w=100\log m$ (hence $t=O(\log m)$). Fix some $x\in\bits^m$ and consider the random variable $Y=x+g(U_t)$. Then, with probability at least $1-2/m$, we have that $Y$ satisfies the following property:
	\begin{prop}\label{prop:manyonessmall}
		$w(Y[a,a+w))\geq 0.4w$ simultaneously for all $1\leq a\leq m-w+1$.
	\end{prop}
\end{lem}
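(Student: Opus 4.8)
The plan is to fix a single window $[a,a+w)$, bound the probability that $Y$ has fewer than $0.4w$ ones inside it, and then take a union bound over the at most $m-w+1\le m$ windows. Fix $1\le a\le m-w+1$ and write $Z=g(U_t)$. Since $Y=x+Z$ with $+$ the bitwise XOR, we have $Y_i=1$ exactly when $Z_i=1+x_i$, so the event $w(Y[a,a+w))<0.4w$ is precisely the event that $Z[a,a+w)$ lands in the set $S_a:=x[a,a+w)+\{z\in\bits^w : w(z)<0.4w\}$, a translate of the Hamming ball of radius $<0.4w$ in $\bits^w$. Translation does not change cardinality, so $|S_a|=\sum_{j=0}^{\lceil 0.4w\rceil-1}\binom{w}{j}\le 2^{h(0.4)w}\le 2^{0.98w}$, using the standard entropy bound on partial binomial sums together with $0.4<1/2$ and $h(0.4)<0.98$.

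Next I would transfer the almost-$k$-wise independence of $Z=g(U_t)$ down to one window. Since $w\le k=3w$ and $k=300\log m\le m$ for $m$ large, marginalizing the $\eps$-almost $3w$-wise independence condition over $2w$ of the coordinates shows that the projection of $Z$ onto any $w$ coordinates is pointwise $\eps'$-close to uniform, with $\eps'=2^{2w}\eps=2^{2w}\cdot 2^{-10w}=2^{-8w}$. Consequently every fixed pattern $z\in\bits^w$ satisfies $\Pr[Z[a,a+w)=z]\le 2^{-w}+2^{-8w}\le 2^{-w+1}$.

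Combining the two estimates gives $\Pr[w(Y[a,a+w))<0.4w]=\Pr[Z[a,a+w)\in S_a]\le |S_a|\cdot 2^{-w+1}\le 2^{0.98w}\cdot 2^{-w+1}=2^{1-0.02w}$. Plugging in $w=100\log m$ yields $2^{1-2\log m}=2/m^2$, so a union bound over the at most $m$ windows gives $\Pr[\exists a: w(Y[a,a+w))<0.4w]\le 2/m$. Taking complements shows $Y$ satisfies Property~\ref{prop:manyonessmall} with probability at least $1-2/m$, as claimed.

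I don't expect a genuine obstacle here: the argument is just a union bound over windows together with an anticoncentration estimate for the Hamming weight of an almost-uniform string. The one point needing care is the passage from $3w$-wise to $w$-wise almost independence — marginalizing costs a $2^{2w}$ blow-up in the error parameter, which is exactly why $\eps$ is taken as small as $2^{-10w}$ — and applying the entropy counting bound to the correct object, namely a translate of a Hamming ball rather than a ball centered at the origin, though as noted the translation is harmless.
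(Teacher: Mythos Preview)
Your proof is correct and follows essentially the same approach as the paper: bound the bad-window probability via the entropy estimate on the Hamming ball together with almost $k$-wise independence, then union bound over the at most $m$ windows. The only cosmetic differences are that the paper works directly with $Y$ (noting that XOR with a fixed $x$ preserves $\eps$-almost $k$-wise independence) rather than translating to $Z$, and it quotes the looser per-pattern bound $2^{-w}+2^{3w}\eps$ instead of your tighter $2^{-w}+2^{2w}\eps$ from explicit marginalization; both suffice since $\eps=2^{-10w}$.
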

\begin{proof}
	Fix some $a$. Then, we have
	\begin{align*}
		\Pr[w(Y[a,a+w))< 0.4w]&=\sum_{y:w(y)<0.4w} \Pr[Y[a,a+w)=y]\\
		&\leq \sum_{y:w(y)<0.4w} (2^{-w}+2^{3w}\eps)\\
		&\leq 2^{wh(0.4)}\cdot 2^{-w+1}\\
		&\leq \frac{2}{m^2}.
	\end{align*}
	The first inequality follows because $Y$ is $\eps$-almost $k$-wise independent, and the second inequality follows from a standard bound on the volume of the Hamming ball and the fact that $2^{3w}\eps<2^{-w}$. Since there are at most $m$ choices for $a$, by the union bound we conclude that $Y$ fails to satisfy the desired property with probability at most $m\cdot 2/m^2=2/m$, as desired.
\end{proof}
Given $x\in\bits^m$, we evaluate $\Enc'(x)$ as follows: We iterate over all $z\in\bits^t$ until we find $z$ such that $y=x+g(z)$ satisfies $w(s[a,a+w))\geq 0.4w$. Such a string $z$ is known to exist by Lemma~\ref{lem:manyones} and can be found in time $\poly(m)$ since $t=O(\log m)$. Then, we set $\Enc'(x)=z||x+g(z)$. 

Observe that the redundancy of $\cC'$ is exactly $|z|=t=O(\log m)$, and that we have encoders and decoders for $\cC'$ running in time $\poly(m)$ since $t=O(\log m)$. To see that $\cC'$ satisfies the property required in this section, fix some substring $s$ of $\Enc'(x)$ such that $|s|=\sqrt{m}$. Then, $w(s)\geq 0.4w\cdot |s|/w-t\geq 0.39|s|$ 
provided that $m$ is large enough.

Finally, we remark that the code used in this marker-based construction is just an example of a viable inner code $\cC'$. Any \emph{structured family} of codes satisfying Property~\ref{prop:manyones} may be used instead, and one may envision adding more constraints to $\cC'$, depending on the application constraints at hand. We exploit this fact in Sections~\ref{sec:leveltwo} and~\ref{sec:synthesisconstraints}. For example, in Section~\ref{sec:synthesisconstraints} we will require that $\cC'$ is a code over $\{A,C,G,T\}$ satisfying an analogue of Property~\ref{prop:manyones} while also having balanced $GC$-content.

\subsection{Adding a second level of markers}\label{sec:leveltwo}

In our next construction, we exploit the fact that the marker-based construction from Section~\ref{sec:basicconst} can be instantianted with a large range of inner codes to prove Theorem~\ref{thm:leveltwo}. To do so, we show that we can iterate the marker-based construction so that we can split a trace into even smaller sub-traces with high probability. This leads to a code requiring fewer traces, but with a penalty in the redundancy. We restate Theorem~\ref{thm:leveltwo} for convenience.
\begin{thm}[Theorem~\ref{thm:leveltwo}, restated]
	There exists an efficiently encodable code $\cC_0\subseteq\bits^{n_0+r_0}$ with encoder $\Enc_0:\bits^{n_0}\to\bits^{n_0+r_0}$ and redundancy $r_0=O(n_0/\log\log n_0)$ that can be efficiently reconstructed from $\exp(O(\log\log n_0)^{2/3})$ traces with probability at least $1-2/n_0$.
\end{thm}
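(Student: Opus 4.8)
The plan is to iterate the marker-based construction of Section~\ref{sec:basicconst} one more level, so that each block already produced at the outer level is itself re-encoded via the same scheme at a second level with exponentially shorter parameters. Concretely, I would first apply an error-correcting outer layer: take the $n_0$-bit message $x$, encode it with a systematic code $\cC_{\mathsf{edit}}$ from Lemma~\ref{lem:systcode} tolerant to a $\Theta(\cdot)$ fraction of edit errors — actually a code correcting a small constant fraction of symbol errors suffices, since block failures will manifest as a bounded number of corrupted symbols — giving a string of length $n_0 + O(n_0/\log\log n_0)$; this layer's purpose is to absorb the small fraction of blocks whose reconstruction fails. Then split the resulting string into outer blocks of length $\log^2 n_0$, and for each outer block, instead of using the low-redundancy inner code $\cC'$ directly, apply the \emph{level-one} construction of Section~\ref{sec:basicconst} recursively with ``$n$'' set to $\log^2 n_0$: split each outer block into inner sub-blocks of length $(\log\log n_0)^2$, encode each sub-block with an inner code satisfying Property~\ref{prop:manyones} (now with $m=(\log\log n_0)^2$, so its redundancy is $O(\log\log\log n_0)$ per sub-block), and separate them by short markers of length $\Theta(\log\log n_0)$. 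Finally glue the outer blocks together with the long (length $\Theta(\log n_0)$) markers of Section~\ref{sec:basicconst}.

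The decoding runs top-down. First use the long markers and Lemma~\ref{lem:splittrace} (with $n$ there taken as $n_0$) to split each trace into independent sub-traces of outer blocks, each of length $O(\log^2 n_0)$. Now here is the key gain: rather than invoking the worst-case algorithm of Lemma~\ref{lem:worstcase} directly on these length-$O(\log^2 n_0)$ sub-traces — which would cost $\exp(O(\log^{2/3}n_0))$ traces — we recurse: within each outer-block sub-trace, locate the short second-level markers (applying the analysis of Lemmas~\ref{lem:largemarkers}, \ref{lem:nolongruns}, \ref{lem:splittrace} with ``$n$'' replaced by $\log^2 n_0$, so the relevant runs have length $\Theta(\log\log n_0)$ and the failure probabilities are $\poly(1/\log n_0)$ per marker), splitting each outer-block sub-trace further into sub-sub-traces of inner blocks of length $O((\log\log n_0)^2)$. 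Only then apply Lemma~\ref{lem:worstcase} to each inner block, which now costs $\exp(O(((\log\log n_0)^2)^{1/3})) = \exp(O(\log\log n_0)^{2/3})$ traces, as claimed. Reconstructed inner blocks are reassembled into outer blocks (peeling off the short-marker and inner-code redundancy), then into the full $\cC_{\mathsf{edit}}$ codeword, which is decoded to recover $x$.

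For the error accounting I would proceed by a union bound at each level, but crucially note that the failure events across levels are not all $o(1)$: at the second level there are $\Theta(n_0/(\log\log n_0)^2)$ inner blocks, and even though each marker-identification step fails only with probability $\poly(1/\log n_0)$, a naive union bound over all inner blocks is not strong enough to guarantee that \emph{every} inner block is recovered. This is exactly why the $\cC_{\mathsf{edit}}$ layer is needed. The argument is: with high probability the number of outer blocks whose second-level marker-splitting fails, plus the number of inner blocks whose worst-case reconstruction fails, is at most a small constant fraction of the total — this follows from a Chernoff bound on a sum of nearly-independent indicator variables (the per-block failures are independent across blocks given $E_{\mathsf{all}}$, since distinct blocks use disjoint portions of the trace). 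A corrupted outer block contributes at most $O(\log^2 n_0 / \text{(symbol size)})$ corrupted symbols to the $\cC_{\mathsf{edit}}$ codeword, a manageable number if we choose the symbol alphabet of $\cC_{\mathsf{edit}}$ to have size $\poly(\log n_0)$ and set its redundancy to $O(n_0/\log\log n_0)$. So the total corruption is a small constant fraction of $\cC_{\mathsf{edit}}$'s symbols, which it corrects.

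The main obstacle I anticipate is making the recursive error analysis clean: one must verify that conditioning on the high-probability marker-identification events at the first level does not distort the distribution of the second-level sub-traces in a way that breaks the second-level analysis (the distributions become ``trace conditioned on first and last runs being long and no long internal $0$-run'' rather than plain traces), and that the failure indicators one feeds into the final Chernoff/error-correction step are genuinely independent across blocks. I would handle this exactly as in the proof of Theorem~\ref{thm:tracesmarker1}: carry the conditioning events $E^{(i)}_{\mathsf{split}}$ explicitly through the recursion, observe they only further condition on monotone high-probability run-length events that the worst-case algorithm of Lemma~\ref{lem:worstcase} is agnostic to, and bound everything by a two-level union bound supplemented by the Chernoff-plus-$\cC_{\mathsf{edit}}$ argument for the final layer. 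The redundancy bookkeeping is routine: the level-two markers contribute $O(n_0/\log\log n_0)$, the innermost codes contribute $O(n_0 \log\log\log n_0/(\log\log n_0)^2) = o(n_0/\log\log n_0)$, the level-one markers contribute $O(n_0/\log n_0)$, and $\cC_{\mathsf{edit}}$ contributes $O(n_0/\log\log n_0)$, for a total of $O(n_0/\log\log n_0)$.
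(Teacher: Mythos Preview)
Your proposal follows essentially the same route as the paper: a two-level marker construction (outer blocks of length $\log^2 n_0$ with long markers, inner sub-blocks of length $(\log\log n_0)^2$ with short markers), worst-case reconstruction from Lemma~\ref{lem:worstcase} applied only at the innermost level, a Chernoff bound on the number of failed outer blocks, and an outer error-correcting code to absorb those failures. Your handling of the conditioning issue and the redundancy bookkeeping also match the paper.

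There is one parameter mismatch you should tighten. You write that the outer code needs to correct a ``small constant fraction'' of symbol errors while simultaneously having redundancy $O(n_0/\log\log n_0)$; these are incompatible (a constant error fraction forces $\Omega(n_0)$ redundancy regardless of alphabet size). The paper instead observes that each outer block fails with probability at most $2/m = O(1/\log^2 n_0)$, so Chernoff gives that the \emph{fraction} of bad outer blocks is at most $10/\log^2 n_0$ with probability $1-1/\poly(n_0)$. Hence the outer code only needs relative Hamming distance $\Theta(1/\log^2 n_0)$, which is achievable with redundancy $O(n_0 \log\log n_0/\log n_0) = o(n_0/\log\log n_0)$ via the Zyablov bound (see Appendix~\ref{app:hamming}). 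Relatedly, the outer code should be a Hamming-distance code $\cC_{\mathsf{Ham}}$, not the edit-distance code $\cC_{\mathsf{edit}}$ from Lemma~\ref{lem:systcode}: block failures land at known positions once the long markers are identified, so there is no insertion/deletion misalignment to absorb at this layer (you do note this parenthetically, but the initial invocation of $\cC_{\mathsf{edit}}$ is a red herring). With this correction your argument goes through and coincides with the paper's.
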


As before, for simplicity we set $d=1/2$ throughout the section. We will use the same construction blueprint as in Section~\ref{sec:basicconst}, except for the following differences:
\begin{itemize}
	\item We assume the $n$-bit message $x$ belongs to a binary code $\cC_{\mathsf{Ham}}\subseteq\bits^n$ with encoder 
	$\Enc_{\mathsf{Ham}}:\bits^{n_0}\to\bits^n$ and relative (Hamming) distance\footnote{The relative Hamming distance of a code is defined as its minimum Hamming distance normalized by its block length.} $30/\log^2 n_0$. 
	In particular, we have $x=\Enc_0(x_0)$ for some $x_0\in\bits^{n_0}$. 
	
	Such efficiently encodable and decodable codes are known to exist with redundancy $n-n_0=O\left(n_0\frac{\log\log n_0}{\log n_0}\right)$ (see Appendix~\ref{app:hamming} for a proof).
The reasons for using this encoding will be made clear later;	
	\item The inner code $\cC'$ differs from the one used in Section~\ref{sec:inner code}.
\end{itemize}
 
If $\cC$ denotes the code obtained via the reasoning of Section~\ref{sec:basicconst} and $\Enc$ corresponds to its encoder, then the encoder $\Enc_0:\bits^{n_0}\to\bits^{n_0+r_0}$ for our final code $\cC_0$ is obtained by composing the encoders of $\cC_{\mathsf{Ham}}$ and $\cC$, i.e.,
\begin{equation*}
	\Enc_0=\Enc\circ \Enc_{\mathsf{Ham}}.
\end{equation*}

We proceed to describe the encoder $\Enc'$ for the inner code $\cC'$ of $\cC$. Given $y\in\bits^m$, where $m=\log^2 n$, we split $y$ into $m/\log^2 m$ blocks of length $\log^2 m$,
\begin{equation*}
	y=y^{(1)}||y^{(2)}||\cdots||y^{(m/\log^2 m)}.
\end{equation*}
Then, we take $\cC''\subseteq\bits^{m'+r'}$ with encoder $\Enc'':\bits^{m'}\to \bits^{m'+r'}$ as the efficiently encodable and decodable code constructed in Section~\ref{sec:inner code} with message length $m'=\log^2 m$ and redundancy $r'=O(\log m')=O(\log\log m)$. For each $i$, we define $\overline{y}^{(i)}=\Enc''(y^{(i)})$. 
Moreover, we let $\ell'=50\log m$, and define the marker $M'=0^{\ell'}||1^{\ell'}$. Then, we define $\Enc'(y)$ as
\begin{equation*}
	\Enc'(y)=M'||\overline{y}^{(1)}||M'||\overline{y}^{(2)}||M'||\cdots || M' || \overline{y}^{(m/\log^2 m)}||M'.
\end{equation*}
Observe that we can efficiently decode $y$ from $\Enc'(y)$ provided that $\cC''$ is efficiently decodable.

We first compute the redundancy of the inner code $\cC'$ and the resulting code $\cC$ obtained as in Section~\ref{sec:basicconst}. We have
\begin{equation*}
	|\Enc'(y)|=m+\frac{m}{\log^2 m}\cdot \left(|M'|+O\left(\log\log m\right)\right)=m+O\left(\frac{m}{\log m}\right).
\end{equation*}
Thus, $\cC'$ has redundancy $r=O(m/\log m)$. Plugging $r$ into~\eqref{eq:redmarker} and recalling that $m=\log^2 n$, we conclude that $\cC$ has redundancy
\begin{equation*}
	O\left(\frac{n}{\log n}\right)+O\left(\frac{n\log^2 n}{\log^2 n\cdot \log\log n}\right)=O\left(\frac{n}{\log\log n}\right).
\end{equation*}
As a result, since $n=n_0+O\left(n_0\frac{\log\log n_0}{\log n_0}\right)$, the code $\cC_0$ has redundancy $r_0=O(n_0/\log\log n_0)$, as desired.

We now show that $\cC'$ satisfies Property~\ref{prop:manyones}. First, we observe that $\cC''$ satisfies Property~\ref{prop:manyonessmall} with $m'$ in place of $m$. Then, since each $M'$ has weight $0.5|M'|$, we conclude that every substring $s$ of $\Enc'(y)$ such that $|s|=\sqrt{m}$ satisfies
\begin{equation*}
	w(s)\geq 0.4w\cdot |s|/w-\ell'\geq 0.39|s|,
\end{equation*}
provided $m$ is large enough, since $\ell'=O(\log m)$. As a result, Lemma~\ref{lem:splittrace} holds for this choice of inner code, and we can hence focus solely on the trace reconstruction problem for strings of the form
\begin{equation}\label{eq:specformtwolevel}
	1^\ell || \Enc'(y)||0^\ell=1^\ell||M'||\overline{y}^{(1)}||M'||\cdots||M'||\overline{y}^{(m/\log^2 m)}||M'||0^\ell,
\end{equation}
where $\ell=O(\log n)=O(\sqrt{m})$, and provided the number of traces used is significantly smaller than $n$.

We now give a trace reconstruction algorithm for strings of the form~\eqref{eq:specformtwolevel} that requires $\exp(O(\log^{2/3}m))=\exp(O(\log\log n_0)^{2/3})$ traces and time, and succeeds with probability at least $1-1/\poly(m)=1-1/\poly(\log n_0)$. We have the following two lemmas whose proofs are analogous to those of Lemmas~\ref{lem:largemarkers} and~\ref{lem:nolongruns} and hence omitted.
\begin{lem}\label{lem:largemarkers2}
	Let $0^{L_0}1^{L_1}$ be the output of the deletion channel on input $M'$. Then,
	\[
	\Pr[L_0>10\log m,L_1>0]\geq 1-m^{-3}.
	\]
\end{lem}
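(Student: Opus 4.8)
The plan is to follow the proof of Lemma~\ref{lem:largemarkers} essentially verbatim, replacing $n$ by $m$ and $\ell=50\log n$ by $\ell'=50\log m$. Write $M'=0^{\ell'}||1^{\ell'}$ and pass it through the deletion channel with $d=1/2$. Because the two runs of $M'$ occupy disjoint sets of coordinates and the channel acts independently on each coordinate, the surviving-zero count $L_0$ and surviving-one count $L_1$ are independent, with $L_0\sim\mathsf{Bin}(\ell',1/2)$ and $L_1\sim\mathsf{Bin}(\ell',1/2)$.

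First I would bound the probability that the $0$-run becomes too short. Since $\mathds{E}[L_0]=\ell'/2=25\log m$, the event $\{L_0\le 10\log m\}$ is a deviation of $15\log m$ below the mean, and the standard Chernoff lower-tail bound for binomials gives
\[
\Pr[L_0\le 10\log m]\le\exp\!\left(-\frac{(15\log m)^2}{2\cdot 25\log m}\right)=\exp(-4.5\log m)\le m^{-4},
\]
where the generous constant $50$ hidden in $\ell'$ is precisely what makes the exponent comfortably beat any fixed power of $m$. Second, the $1$-run is completely erased only if all $\ell'$ of its bits are deleted, so $\Pr[L_1=0]=2^{-\ell'}=m^{-50}$. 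Combining the two estimates through a union bound,
\[
\Pr[L_0>10\log m,\ L_1>0]\ \ge\ 1-\Pr[L_0\le 10\log m]-\Pr[L_1=0]\ \ge\ 1-m^{-4}-m^{-50}\ \ge\ 1-m^{-3},
\]
which is the claimed bound.

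There is essentially no obstacle here — this is the same Chernoff-plus-independence argument as in Lemma~\ref{lem:largemarkers}, which is exactly why the paper omits it. The only point worth keeping in mind is that the implicit ``$m$ large enough'' threshold is an \emph{absolute} constant, since both the Chernoff exponent and $\ell'$ scale linearly in $\log m$; this uniformity is what lets the subsequent union bound (the second-level analogue of Lemma~\ref{lem:splittrace}, over all $m/\log^2 m$ inner blocks and then over all outer blocks and all traces) go through unchanged, which is the actual reason this lemma is needed.
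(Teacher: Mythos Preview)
Your proof is correct and is exactly the analogue of the proof of Lemma~\ref{lem:largemarkers} that the paper has in mind (indeed, the paper omits the proof for precisely this reason). The only cosmetic difference is that you finish with a union bound on the complements rather than appealing to the independence of $L_0$ and $L_1$ as in the paper, but either route gives the stated $1-m^{-3}$ bound.
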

\begin{lem}\label{lem:nolongruns2}
	Let $c\in\cC''$. Then, a trace of $c$ does not contain a run of $0$'s of length at least $10\log m$ with probability at least $1-m^{-3}$.
\end{lem}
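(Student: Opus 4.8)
The plan is to mimic the proof of Lemma~\ref{lem:nolongruns} almost verbatim, replacing $n$ by $m$ and using, in place of Property~\ref{prop:manyones}, the corresponding weight property of the inner code $\cC''$. Recall that $\cC''$ is the code of Section~\ref{sec:inner code} instantiated with message length $m'=\log^2 m$, so (following the computation at the end of that section, now with $m'$ in place of $m$) every substring $s$ of a codeword of $\cC''$ with $|s|=\sqrt{m'}=\log m$ satisfies $w(s)\geq |s|/3$.

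The first step I would carry out is the combinatorial observation at the heart of the argument: if a trace $T$ of $c\in\cC''$ contains a run of $0$'s of length at least $10\log m$, then there is an index $a$ for which every $1$ appearing in $c[a,a+10\log m)$ is deleted. Indeed, if $c_a$ produces the first $0$ of such a run, then the bit of $c$ producing its last $0$ lies at a position $\geq a+10\log m-1$, and since the run is an uninterrupted block of surviving $0$'s, every $1$ strictly between these two positions, in particular every $1$ in $c[a,a+10\log m)$, must have been deleted. The second step is to lower-bound the number of these deleted $1$'s: partitioning $c[a,a+10\log m)$ into $10$ consecutive substrings of length $\log m=\sqrt{m'}$ and applying the weight property of $\cC''$ recorded above, this window of $c$ contains at least $10\cdot\frac{\log m}{3}$ $1$'s, all of which are deleted.

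Finally, I would perform the routine probabilistic estimate, exactly as in the proof of Lemma~\ref{lem:nolongruns}. Since $d=1/2$, a fixed set of $10\cdot\frac{\log m}{3}$ bits of $c$ is deleted with probability $2^{-10\log m/3}=m^{-10/3}$; as there are at most $|c|=m'+r'=O(\log^2 m)$ choices for the window position $a$, a union bound gives that $T$ has a run of $0$'s of length at least $10\log m$ with probability at most $O(\log^2 m)\cdot m^{-10/3}\leq m^{-3}$ once $m$ is large enough. There is no genuine obstacle here; the only point needing a little care is the combinatorial first step, together with correctly propagating the $\sqrt{m'}$-substring weight property of $\cC''$ through the nested instantiation of the construction of Section~\ref{sec:inner code}, after which the remaining steps are identical to those of Lemma~\ref{lem:nolongruns}. (The companion Lemma~\ref{lem:largemarkers2} is handled like Lemma~\ref{lem:largemarkers}: a Chernoff bound yields $\Pr[L_0\leq 10\log m]\leq m^{-4}$ since $\mathds{E}[L_0]=25\log m$, while $\Pr[L_1=0]=2^{-\ell'}=m^{-50}$, and these two events are independent.)
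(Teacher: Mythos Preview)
Your proposal is correct and follows essentially the same approach as the paper, which explicitly omits the proof and states it is analogous to that of Lemma~\ref{lem:nolongruns}. Your adaptation---replacing $n$ by $m$, invoking the $\sqrt{m'}$-substring weight property of $\cC''$, and union bounding over the $O(\log^2 m)$ window positions---is exactly what is intended.
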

Combining Lemmas~\ref{lem:largemarkers2} and~\ref{lem:nolongruns2} with the union bound leads to the following analogue of Lemma~\ref{lem:splittrace}.
\begin{lem}\label{lem:splittrace2}
	Consider the following event $E'$: \emph{We correctly identify the separation between the traces of $0^{\ell'}$ and $1^{\ell'}$ from every marker in the trace of $\Enc'(x)$ by looking for all $1$'s that appear immediately after a run of at least $10\log m$ $0$'s.}
	
	Then, $E'$ happens with probability at least $1-m^{-2}$ over the randomness of the trace.
\end{lem}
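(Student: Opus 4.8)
The plan is to follow the proof of Lemma~\ref{lem:splittrace} almost verbatim, now with $M'$, $\cC''$ and $m$ playing the roles that $M$, $\cC'$ and $n$ played there. Write $k=m/\log^2 m$, so that $\Enc'(x)$ is the concatenation of $k+1$ copies of the marker $M'=0^{\ell'}||1^{\ell'}$ alternating with $k$ blocks, each a codeword of $\cC''$, and let $T$ denote a trace of $\Enc'(x)$. First I would introduce two ``good'' events: event $A$, that the trace of each of the $k+1$ markers has the form $0^{L_0}1^{L_1}$ with $L_0>10\log m$ and $L_1>0$; and event $B$, that the trace of each of the $k$ blocks contains no run of $0$'s of length at least $10\log m$. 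By Lemma~\ref{lem:largemarkers2} and a union bound over the $k+1\le 2k$ markers, $\Pr[\neg A]\le 2k\,m^{-3}$, and by Lemma~\ref{lem:nolongruns2} and a union bound over the $k$ blocks, $\Pr[\neg B]\le k\,m^{-3}$; since $k=m/\log^2 m$, this yields $\Pr[\neg A\cup\neg B]\le 3m^{-2}/\log^2 m< m^{-2}$ for $m$ large enough. (We will later use that the traces of distinct markers and distinct blocks are mutually independent, although for the present step a union bound suffices.)

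It then remains to verify the purely deterministic implication $A\cap B\Rightarrow E'$: on these events the detection rule — declare a marker boundary at every $1$ of $T$ immediately preceded by a maximal run of at least $10\log m$ $0$'s — recovers exactly the $0^{\ell'}/1^{\ell'}$ separation of each of the $k+1$ markers and produces nothing else. I would establish this through three observations. First, \emph{no marker is missed}: by event $A$ the image in $T$ of a marker's leading block $0^{\ell'}$ is a run of more than $10\log m$ $0$'s, and (again by $A$, since $L_1>0$) it is immediately followed in $T$ by the surviving prefix of that marker's $1^{\ell'}$, so the rule fires at the right place. Second, \emph{there is no spurious output}: since $T$ is the concatenation of the traces of the alternating blocks and markers, a block's trace contributes no $0$-run of length $\ge 10\log m$ (event $B$), and the only other source of consecutive $0$'s in $T$ is a marker's $0^{\ell'}$; hence every maximal $0$-run of $T$ of length $\ge 10\log m$ must contain a $0$ originating in some marker's $0^{\ell'}$, and the $1$ it is followed by is necessarily the surviving prefix of that marker's $1^{\ell'}$. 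Third, \emph{the correspondence is exact}: because each marker's $1^{\ell'}$ always leaves at least one surviving $1$ (event $A$), the long $0$-runs coming from consecutive markers are always separated in $T$ by a $1$ and so cannot merge; and although the trailing $0$'s of a block's trace may be prepended to the following marker's $0$-run, that maximal run still ends at the last surviving $0$ of the marker's $0^{\ell'}$, so the position output by the rule is still exactly the $0^{\ell'}/1^{\ell'}$ boundary of that marker.

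Putting the two parts together gives $\Pr[E']\ge\Pr[A\cap B]\ge 1-m^{-2}$, which is the claim. The only step that needs any care is the deterministic one, and specifically the third observation: one must check that a block's trailing zeros spilling into the next marker, and the fact that no long $0$-run ever lies strictly inside a block, are both compatible with recovering the boundary exactly. This is, however, the same structural argument already implicit in the proof of Lemma~\ref{lem:splittrace}, so no genuinely new obstacle arises at the second marker level.
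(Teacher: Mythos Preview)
Your proposal is correct and follows exactly the approach the paper intends: the paper's proof is the single sentence ``Combining Lemmas~\ref{lem:largemarkers2} and~\ref{lem:nolongruns2} with the union bound leads to the following analogue of Lemma~\ref{lem:splittrace},'' and your argument is precisely this union bound over the $O(m/\log^2 m)$ markers and blocks, together with the same deterministic check already implicit in Lemma~\ref{lem:splittrace}. Your write-up is in fact more careful than the paper's (particularly the third observation about trailing zeros of a block spilling into the next marker), but no new idea is involved.
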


As in Section~\ref{sec:basicconst}, Lemma~\ref{lem:splittrace2} implies that, conditioned on $E'$ happening for a trace $T$ of $1^\ell || \Enc'(y)||0^\ell$, we can split $T$ into independent sub-traces $T^{(i)}$ each distributed like a trace of $1^{\ell'}||\Enc''(y^{(i)})||0^{\ell'}$ conditioned on the high probability event $E'$. 

Let $\mathcal{A}$ denote the worst-case trace reconstruction algorithm from Lemma~\ref{lem:worstcase} for strings of length $O(m')=O(\log^2 m),$ with failure probability at most $\exp(-\Omega(\log^2 m))$. A reasoning similar to that preceding~\eqref{eq:genboundfail} with Lemma~\ref{lem:splittrace2} in place of Lemma~\ref{lem:splittrace}, and the code $\cC'$ designed in this section in place of $\cC$ shows that, using algorithm $\mathcal{A}$, we fail to recover $\Enc'(y)$ from $\exp(O(\log^{2/3}m))$ i.i.d.\ traces of $1^\ell || \Enc'(y)||0^\ell$ with probability at most
\begin{equation}\label{eq:boundaprime}
	m\cdot \exp(-\Omega(\log^2 m))+1/m < 2/m.
\end{equation}

Let $\mathcal{A}'$ denote the algorithm that recovers $\Enc'(y)$ from $\exp(O(\log^{2/3}m))$ i.i.d.\ traces of $1^\ell || \Enc'(y)||0^\ell$ with failure probability at most $2/m$ as described above. We hope to instantiate~\eqref{eq:genboundfail} directly with $\mathcal{A}'$ to obtain the desired upper bound on the reconstruction failure probability for $\cC$. However, this approach does not produce a satisfactory result as the failure probability of $\mathcal{A'}$ is $2/m=1/\poly(\log n)$, which is too large to be used in the union bound.

Recall from Section~\ref{sec:basicconst} that, given $x\in\bits^n$, the codeword $\Enc(x)$ of $\cC$ is obtained by splitting $x$ into $n/\log^2 n$ blocks $x^{(i)}$ and encoding each block with the encoder $\Enc'$ associated with $\cC'$. From the discussion in the previous paragraph, a fraction of blocks $x^{(i)}$ will be reconstructed with errors. Below we argue that this fraction is of size at most $10/\log^2 n_0$ with probability at least $1-2/n_0$. The reasoning is similar in spirit to that used to derive~\eqref{eq:genbeforeunion}, and it suffices to complete the proof of Theorem~\ref{thm:leveltwo}. In fact, suppose we recovered $\tilde{x}$, which is a guess of $x$ with at most a $(10/\log^2 n_0)$-fraction of incorrect blocks. In particular, the relative Hamming distance between $\tilde{x}$ and $x$ is at most $10/\log^2 n_0$. Since the relative distance of $\cC_{\mathsf{Ham}}$ is at least $30/\log^2 n_0$ and we assumed that $x\in\cC_{\mathsf{Ham}}$, it follows that $\Dec_\mathsf{Ham}(\tilde{x})=\Dec_\mathsf{Ham}(x)=x_0$. Therefore, we conclude that we can recover the underlying message $x_0$ with probability at least $1-2/n_0$ from $\exp(O(\log^{2/3}m))=\exp(O(\log\log n_0)^{2/3})$ i.i.d.\ traces of $\Enc_0(x)$. This proves Theorem~\ref{thm:leveltwo}.

As the last step, we show that the fraction of bad blocks is small enough with high probability. Suppose that we have access to $t=\exp(O(\log^{2/3}m))$ i.i.d.\ traces $T_1,\dots,T_t$ of $\Enc(x)$, where $\Enc$ is the encoder associated with $\cC$. Let $E$ denote the event from Lemma~\ref{lem:splittrace}, and let $E_\mathsf{all}$ denote the event that $E$ holds for all $T_i$ simultaneously. 
As before, assuming that $E_\mathsf{all}$ holds, the strings $(T^{(i)}_1,\dots, T^{(i)}_t)_{1\leq i\leq n/\log^2 n}$ are independent between all $i$, and each tuple of strings $T^{(i)}_1,\dots, T^{(i)}_t$ is distributed as $t$ independent traces of $1^\ell || \Enc'(x^{(i)})||0^\ell$, each $T^{(i)}_j$ also satisfying the conditions that the first run $1^\ell$ is not completely deleted, the last run $0^\ell$ has length at least $10\log n$ in the trace, and no run of $0$'s has length at least $10\log n$ in the trace of $\Enc'(x^{(i)})$.
Denote the event that both these conditions hold for $t$ independent traces of $1^\ell || \Enc'(x^{(i)})||0^\ell$ by $E^{(i)}_\mathsf{split}$. Invoking the trace reconstruction algorithm $\mathcal{A'}$ defined above, let $I^{(i)}_\mathsf{indFail}$ denote the indicator random variable of the event that $\mathcal{A'}$ fails to recover $1^\ell || \Enc'(x^{(i)})||0^\ell$ from $t$ independent traces of $1^\ell || \Enc'(x^{(i)})||0^\ell$. Taking into account the previous discussion, we let $E_\mathsf{fail}$ denote the probability that more than a $(10/\log^2 n_0)$-fraction of blocks $x^{(i)}$ is recovered with errors. Then, we have
\begin{align}\label{eq:genfailindicator}
	\Pr[E_\mathsf{fail}]&\leq \Pr[E_\mathsf{fail},E_\mathsf{all}]+\Pr[\neg E_\mathsf{all}]\nonumber\\
	& = \Pr\left[\sum_{i=1}^{n/\log^2 n} I^{(i)}_\mathsf{indFail}>\frac{n}{\log^2 n}\cdot \frac{10}{\log^2 n_0},\forall i: E^{(i)}_\mathsf{split}\right]+\Pr[\neg E_\mathsf{all}]\nonumber\\
	&\leq \Pr\left[\sum_{i=1}^{n/\log^2 n} I^{(i)}_\mathsf{indFail}>\frac{n}{\log^2 n}\cdot \frac{10}{\log^2 n_0}\right]+1/n_0.
\end{align}
The first equality follows from the discussion in the previous paragraph, and the second inequality follows from Lemma~\ref{lem:splittrace} and the fact that $n>n_0$. Recalling~\eqref{eq:boundaprime}, which asserts that the failure probability for $\mathcal{A'}$ is at most $2/m$, shows that
\begin{equation*}
	\Pr[I^{(i)}_\mathsf{indFail}]\leq 2/m=2/\log^2 n<2/\log^2 n_0
\end{equation*}
holds for every $i$. Since the $I^{(i)}_\mathsf{indFail}$ are independent for all $i$, a standard application of the Chernoff bound yields the following lemma.
\begin{lem}\label{lem:smallfraction}
	We have
	\begin{equation*}
		\Pr\left[\sum_{i=1}^{n/\log^2 n} I^{(i)}_\mathsf{indFail}>\frac{n}{\log^2 n}\cdot \frac{10}{\log^2 n_0}\right] \leq n_0^{-10}.
	\end{equation*}
\end{lem}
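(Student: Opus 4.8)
The plan is to apply a multiplicative Chernoff bound to the sum $S:=\sum_{i=1}^{n/\log^2 n} I^{(i)}_\mathsf{indFail}$, using the two properties already established: the $I^{(i)}_\mathsf{indFail}$ are mutually independent across $i$ (this is exactly where we use that, after conditioning on $E_\mathsf{all}$, the tuples of sub-traces $(T^{(i)}_1,\dots,T^{(i)}_t)$ are independent between blocks), and each satisfies $\E[I^{(i)}_\mathsf{indFail}]\leq 2/\log^2 n<2/\log^2 n_0$ by~\eqref{eq:boundaprime} together with $n>n_0$. Writing $N:=n/\log^2 n$ for the number of blocks and $\mu:=2N/\log^2 n_0$, we then have $\E[S]\leq\mu$, and the threshold in the statement is exactly $10N/\log^2 n_0=5\mu=(1+4)\mu$.

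Next I would invoke the standard upper-tail Chernoff inequality $\Pr[S\geq(1+\delta)\mu]\leq\exp(-\delta\mu/3)$, which holds for every $\delta\geq 1$ and every $\mu$ that is an upper bound on $\E[S]$ (replacing the true mean by a larger value only pushes the tail event further out). Taking $\delta=4$ gives
\[
\Pr\left[\sum_{i=1}^{n/\log^2 n} I^{(i)}_\mathsf{indFail}>\frac{n}{\log^2 n}\cdot\frac{10}{\log^2 n_0}\right]\leq\Pr[S\geq 5\mu]\leq\exp\left(-\frac{4\mu}{3}\right).
\]

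It remains to lower bound $\mu$. Since $\cC_{\mathsf{Ham}}$ has positive redundancy we have $n>n_0$, and more precisely $n=n_0+O(n_0\log\log n_0/\log n_0)=n_0(1+o(1))$, so $\log n\leq 2\log n_0$ for $n_0$ large enough; hence $\mu=\frac{2n}{\log^2 n\cdot\log^2 n_0}\geq\frac{n_0}{2\log^4 n_0}$. Substituting yields $\Pr[S>10N/\log^2 n_0]\leq\exp\left(-\frac{2n_0}{3\log^4 n_0}\right)$, and since $n_0/\log^4 n_0$ eventually exceeds any fixed constant multiple of $\log n_0$, this bound is at most $n_0^{-10}$ for all sufficiently large $n_0$, which proves the lemma.

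There is no genuinely hard step here; the only points to handle with a little care are applying the Chernoff bound with the valid upper bound $\mu$ in place of the exact expectation, and confirming that $\log n$ and $\log n_0$ are comparable so that the exponent $-4\mu/3$ is of order $-n_0/(\log n_0)^4$ and therefore comfortably beats $-10\log n_0$.
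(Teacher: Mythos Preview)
Your proposal is correct and follows exactly the approach the paper takes: the paper simply says the lemma follows from ``a standard application of the Chernoff bound'' using the independence of the $I^{(i)}_\mathsf{indFail}$ and the bound $\Pr[I^{(i)}_\mathsf{indFail}=1]\leq 2/\log^2 n_0$, and remarks that the Chernoff bound in fact yields a much stronger inequality than $n_0^{-10}$. Your write-up merely fills in the arithmetic the paper omits; one small point is that your justification ``replacing the true mean by a larger value only pushes the tail event further out'' is slightly glib (the right-hand side changes too), but the conclusion is valid because the moment-generating-function form of Chernoff holds for any $\mu\geq\E[S]$.
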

We remark that the Chernoff bound yields a stronger upper bound than the one featured in Lemma~\ref{lem:smallfraction}. However, for simplicity we use a weaker upper bound that still suffices for our needs. 
Combining~\eqref{eq:genfailindicator} with Lemma~\ref{lem:smallfraction} allows us to conclude that $\Pr[E_\mathsf{fail}]<2/n_0$, as desired.

\subsection{A code for DNA-based data storage decodable from a few traces}\label{sec:synthesisconstraints}

We describe next how to adapt the ideas from Sections~\ref{sec:basicconst} and~\ref{sec:leveltwo} and combine them with techniques from~\cite{YKGM18} in order to construct codes over the alphabet $\{A,C,G,T\}$ that have balanced $GC$-content and provably require few traces for reconstruction. As already pointed out, strings with balanced $GC$-content are significantly easier to synthesize than their non-balanced counterparts. Therefore, constructions accomodating this constraint are well-suited for use in DNA-based data storage.

The constructions follow those outlined in Sections~\ref{sec:basicconst} and~\ref{sec:leveltwo}. The only modifications are the choice of markers and the definition of the inner code. We focus on discussing these changes and their properties within the setting of Section~\ref{sec:basicconst}. The full argument and the extension for the two-level marker-based construction of Section~\ref{sec:leveltwo} follow in a straightforward manner.

We first describe the modified markers. The marker $M$ used throughout the section is of the form $M=(AC)^\ell || (TG)^\ell$, where $\ell=25\log n$ and $n$ is the message length. Observe that this marker has the same length as the original marker in Section~\ref{sec:basicconst}. Moreover, $M$ has balanced $GC$-content.

In order to proceed as in Section~\ref{sec:basicconst} we need to design an efficiently encodable and decodable inner code $\cC'\subseteq\{A,C,T,G\}^{m'}$ with balanced $GC$-content which satisfies a property analogous to Property~\ref{prop:manyones}. 

Suppose that $\cC'$ has encoder $\Enc':\bits^m\to\{A,C,T,G\}^{m'}$ and that $m'=m/2+r$, where $m=\log^2 n$ as in Section~\ref{sec:basicconst} and $r$ denotes the redundancy to be determined. Given the composition of $M$, the property we wish $\cC'$ to satisfy is the following:
\begin{prop}\label{prop:quat}
	For all $c\in\cC'$ and substrings $s$ of $c$ with $|s|=\sqrt{m}$, it holds that at least $|s|/3$ symbols of $s$ are $T$ or $G$.
\end{prop}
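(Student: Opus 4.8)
The plan is to construct $\cC'$ by adapting the construction from Section~\ref{sec:inner code} to the quaternary setting, replacing the XOR masking with a coordinatewise operation over $\{A,C,T,G\}$, and simultaneously enforcing balanced $GC$-content. First I would fix an encoding of $\bits^m$ into a preliminary quaternary string of length $m/2$: split the $m$ message bits into $m/2$ pairs and map each pair to one of the four symbols. The issue is that this naive map gives no control over $GC$-content or over the density of $\{T,G\}$ symbols in substrings. To fix this, I would use a small amount of randomness, as in Lemma~\ref{lem:manyones}: let $g:\bits^t\to\bits^m$ be the $\eps$-almost $k$-wise independent generator from Lemma~\ref{lem:almostind} with $k=\Theta(w)$ and $\eps=2^{-\Theta(w)}$ for $w=100\log m$, so that $t=O(\log m)$, and form $y=x+g(z)$ for $z$ ranging over $\bits^t$. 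For a suitable $z$ the string $y$ will have the property that every window of $w$ bits has Hamming weight in the range $[0.4w,0.6w]$ (this is a two-sided version of Property~\ref{prop:manyonessmall}, which follows from the same almost-$k$-wise-independence argument, since both $\Pr[w(Y[a,a+w))<0.4w]$ and $\Pr[w(Y[a,a+w))>0.6w]$ are bounded by $2^{wh(0.4)-w+1}\leq 2/m^2$, and a union bound over $a$ gives failure probability $\leq 4/m$).

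Next I would fix a symbol assignment turning a balanced-weight bit-window into a quaternary window with both balanced $GC$-content and many $\{T,G\}$ symbols. Concretely, group the $m/2+o(m/2)$ relevant bit-positions into consecutive pairs and map a pair of bits $(b_1,b_2)$ to a quaternary symbol so that $b_1$ controls membership in $\{G,C\}$ versus $\{A,T\}$ (for $GC$-balance) and $b_2$ controls membership in $\{T,G\}$ versus $\{A,C\}$ (for Property~\ref{prop:quat}); e.g.\ $00\mapsto A$, $01\mapsto T$, $10\mapsto C$, $11\mapsto G$. Under this map, a window of $\sqrt m$ quaternary symbols comes from $2\sqrt m$ bits; the number of $\{T,G\}$ symbols in the window equals the number of pairs whose second bit is $1$, and the number of $\{G,C\}$ symbols equals the number of pairs whose first bit is $1$. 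Since the masked string $y$ has every window of $w$ consecutive bits of weight at least $0.4w$ (and at most $0.6w$), a short argument — splitting the weight of a $2\sqrt m$-bit window into its even- and odd-indexed halves and applying the window bound on both, as in the estimate $w(s)\geq 0.4w\cdot|s|/w - (\text{boundary loss})$ used in Section~\ref{sec:inner code} — shows that the count of $\{T,G\}$ symbols in any $\sqrt m$-symbol substring is at least $|s|/3$ provided $m$ is large enough, giving Property~\ref{prop:quat}. The same reasoning applied to the first bit of each pair shows the $GC$-content of the full codeword is $1/2\pm o(1)$, i.e.\ balanced.

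To make this efficiently encodable and decodable I would output $\Enc'(x)=\rho(z)\,||\,\phi(x+g(z))$, where $\phi$ is the pairwise bit-to-quaternary map above, $\rho$ is any fixed injection of the $t=O(\log m)$-bit seed $z$ into a prefix of quaternary symbols (padded to also be balanced — this costs only $O(\log m)$ extra symbols), and $z$ is found by brute-force search over all $2^t=\poly(m)$ seeds in time $\poly(m)$; Lemma~\ref{lem:manyones}'s two-sided analogue guarantees a good $z$ exists. Decoding recovers $z$ from the prefix, recomputes $g(z)$, inverts $\phi$ on the body, and XORs out $g(z)$. The redundancy in \emph{bits} is $O(\log m)=O(\log\log n)$, so the total redundancy of the resulting quaternary code $\cC$ matches the bound in Theorem~\ref{thm:quatbasic}, exactly as the $r=O(\log\log n)$ bound fed into~\eqref{eq:redmarker} in Section~\ref{sec:basicconst}. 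With $\cC'$ in hand and markers $M=(AC)^\ell||(TG)^\ell$, the rest of the argument is identical to Section~\ref{sec:basicconst}: the $(TG)^\ell$ and $(AC)^\ell$ halves of $M$ play the role of the runs $1^\ell$ and $0^\ell$, Property~\ref{prop:quat} guarantees no long $\{A,C\}$-only stretch originates from a block, so one can locate markers in a trace and split it into sub-traces of length $O(\log^2 n)$, and then Lemma~\ref{lem:worstcase} (which is alphabet-agnostic, or can be applied after a fixed binary re-encoding of each sub-block) reconstructs each block from $\exp(O(\log^{2/3}n))$ traces.

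The main obstacle I anticipate is the interplay between the $GC$-balance constraint and Property~\ref{prop:quat}: these are constraints on two different ``coordinates'' of each quaternary symbol, and one must check they do not conflict — this is exactly why the two-sided window bound (weight in $[0.4w,0.6w]$, not just $\geq 0.4w$) is needed, and why the pairwise map must be chosen so that the two constraints act on independent bits of each symbol. A secondary subtlety is that markers themselves contribute $\{A,C\}$ and $\{T,G\}$ symbols to windows straddling a marker boundary, but since $|M|=O(\log n)=o(\sqrt m)$ this only perturbs the counts by a lower-order term, exactly as the $\ell$ and $\ell'$ terms were absorbed in Sections~\ref{sec:basicconst} and~\ref{sec:leveltwo}.
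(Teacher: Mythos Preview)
Your construction uses essentially the same coordinate decomposition as the paper --- your map $00\mapsto A$, $01\mapsto T$, $10\mapsto C$, $11\mapsto G$ is exactly the bijection $\Psi$ the paper introduces --- but the paper organizes things differently, and that difference matters for the verification step.

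The paper splits the message as $x=x^{(1)}\|x^{(2)}$ and encodes the two halves \emph{separately}: $x^{(1)}$ with a Knuth-type balanced code $\cC_1$, and $x^{(2)}$ with the code $\cC_2$ from Section~\ref{sec:inner code}. It then sets $\Enc'(x)=\Psi(\Enc_1(x^{(1)}),\Enc_2(x^{(2)}))$. The point is that the second coordinate of the output is literally a codeword of $\cC_2$, a \emph{contiguous} binary string, so Property~\ref{prop:manyones} applies to it directly and yields Property~\ref{prop:quat} in one line (this is the content of Lemma~\ref{lem:quatinnercode}). Likewise, the first coordinate is a codeword of $\cC_1$ and is exactly balanced, giving exact $GC$-balance rather than $1/2\pm o(1)$.

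Your route --- mask the whole message once with $g(z)$, then pair up bits --- has a genuine gap at the verification step. You claim the two-sided window bound on consecutive bits of $y=x+g(z)$ lets you ``apply the window bound on both'' the even- and odd-indexed halves. But the window bound is for consecutive positions; the odd-indexed bits (the ones determining $\{T,G\}$-membership) are interleaved, not consecutive. The string $1010\cdots$ satisfies your two-sided window bound perfectly yet has all odd-indexed bits equal to $0$, so the odd half has weight $0$. Thus the deduction fails. The fix is easy --- invoke almost $k$-wise independence directly on the odd-indexed positions within each window (Definition of $\eps$-almost $k$-wise independence allows arbitrary index sets, not only intervals) --- but that is a different argument than the one you wrote, and you would need to enlarge $k$ accordingly. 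The paper's two-code approach sidesteps this entirely and, as a bonus, delivers exact $GC$-balance via $\cC_1$ rather than the approximate balance your single-mask approach would give.
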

Similarly to Lemma~\ref{lem:nolongruns}, it can be shown that if $\cC'$ satisfies Property~\ref{prop:quat}, then with high probability a trace of $c\in\cC'$ will not contain long runs consisting only of symbols $A$ and $C$. As a result, with high probability we can easily split a trace into many sub-traces associated with different blocks as in Section~\ref{sec:basicconst}. This is accomplished by looking for all long substrings of the trace consisting only of $A$'s and $C$'s in the trace. The reason is that, with high probability, each such substring consists of the trace of an $(AC)^\ell$ substring from a marker $M$ possibly with some extra symbols prepended. In that case we can correctly identify the separation between the traces of $(AC)^\ell$ and $(TG)^\ell$ in all markers by looking for the first $T$ or $G$ after every sufficiently long substring of $A$'s and $C$'s.

We proceed to describe the encoder $\Enc'$ of the inner code $\cC'$ that has redundancy $r=O(\log m)$. We combine a technique from~\cite{YKGM18} with the code from Section~\ref{sec:inner code}. As an additional ingredient in the construction, we require an efficiently encodable and decodable binary balanced code $\cC_1$ with encoder $\Enc_1:\bits^{m/2}\to\bits^{m/2+r_1}$. Nearly-optimal constructions of such codes are known, and they have redundancy $r_1=O(\log m)$~\cite{Knu86,IW10}. Let $\cC_2\subseteq \bits^{m/2+r_2}$ denote the code from Section~\ref{sec:inner code} with encoder $\Enc_1:\bits^{m/2}\to\bits^{m/2+r_2}$ and redundancy $r_2=O(\log m)$. By padding one of $\cC_1$ or $\cC_2$ appropriately, we may assume that $r_1=r_2=r$, i.e., that both codes have the same block length. Similarly to~\cite{YKGM18}, we define the bijection $\Psi:\bits^{n}\times\bits^n \to\{A,C,G,T\}^n$ as
\begin{equation*}
	\Psi(a,b)_i=\begin{cases}
	A,\textrm{ if $(a_i,b_i)=(0,0),$}\\
	T,\textrm{ if $(a_i,b_i)=(0,1),$}\\
	C,\textrm{ if $(a_i,b_i)=(1,0),$}\\
	G,\textrm{ if $(a_i,b_i)=(1,1).$}
	\end{cases}
\end{equation*}
The code $\cC'$ is defined via an encoding $\Enc':\bits^m\to\{A,C,G,T\}^{m/2+r}$ of the form
\begin{equation*}
	\Enc'(x)=\Psi(\Enc_1(x^{(1)}),\Enc_2(x^{(2)})),
\end{equation*}
where $x=x^{(1)}||x^{(2)}\in\bits^{m/2}\times\bits^{m/2}$. It is clear that decoding $x$ from $\Enc'(x)$ can be performed efficiently. We hence have the following lemma.
\begin{lem}\label{lem:quatinnercode}
	The inner code $\cC'$ has balanced $GC$-content and satisfies Property~\ref{prop:quat}.
\end{lem}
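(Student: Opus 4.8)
The plan is to verify the two claimed properties of $\cC'$ separately, since both follow quickly from the construction of $\Psi$ together with the properties of the constituent codes $\cC_1$ and $\cC_2$.

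\emph{Balanced $GC$-content.} First I would observe that, by definition of $\Psi$, the $i$-th symbol of $\Enc'(x)=\Psi(\Enc_1(x^{(1)}),\Enc_2(x^{(2)}))$ lies in $\{G,C\}$ if and only if the first coordinate $(\Enc_1(x^{(1)}))_i$ equals $1$; the second coordinate (coming from $\Enc_2$) has no effect on whether the symbol is a $G/C$ or an $A/T$. Hence the number of $G/C$ symbols in $\Enc'(x)$ equals the Hamming weight $w(\Enc_1(x^{(1)}))$. Since $\cC_1$ is a balanced binary code, $w(\Enc_1(x^{(1)}))=(m/2+r)/2$ exactly (or within one of it, if the block length is odd), so the $GC$-content of every codeword of $\cC'$ is balanced.

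\emph{Property~\ref{prop:quat}.} Here I would unwind what "$s_j\in\{T,G\}$" means under $\Psi$: from the table, $\Psi(a,b)_i\in\{T,G\}$ precisely when $b_i=1$, i.e.\ when the second coordinate bit (coming from $\Enc_2$) is $1$. Therefore the number of $T/G$ symbols in any substring $s=\Enc'(x)[a,a+|s|)$ equals $w\big(\Enc_2(x^{(2)})[a,a+|s|)\big)$, the Hamming weight of the corresponding substring of the $\cC_2$-codeword. Now $\cC_2$ is exactly the code built in Section~\ref{sec:inner code}, which satisfies Property~\ref{prop:manyones}: every substring $s'$ of a codeword with $|s'|=\sqrt{m/2}$ (the relevant block length for $\cC_2$ is $m/2+r$, so the square-root-length windows scale accordingly) has $w(s')\geq |s'|/3$. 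A substring $s$ of $\Enc'(x)$ with $|s|=\sqrt{m}$ corresponds to a substring of the $\cC_2$-codeword of the same length $\sqrt{m}$, which is at least $\sqrt{m/2+r}$ for large $m$, so I can cover it by disjoint windows of the length for which Property~\ref{prop:manyones} guarantees weight at least a $1/3$-fraction, losing at most a lower-order additive term (one partial window of length $O(\sqrt{m})$, negligible relative to $|s|$). This gives at least $|s|/3$ symbols of $s$ in $\{T,G\}$, as required; the same bookkeeping that in Section~\ref{sec:inner code} turned "$0.4w$ per window" into "$0.39|s|$ overall" applies verbatim.

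I do not expect any genuine obstacle here — the lemma is essentially a syntactic consequence of the $\Psi$ map isolating the two coordinate streams. The only point requiring a little care is the parameter bookkeeping: making sure that the substring-length threshold in Property~\ref{prop:manyones}/Property~\ref{prop:manyonessmall} is matched to the block length $m/2+r$ of $\cC_2$ rather than $m$, and confirming that a $\sqrt{m}$-length window of $\Enc'(x)$ is long enough to be tiled by the shorter windows on which the weight bound is promised, with the leftover handled by an additive lower-order term exactly as in Section~\ref{sec:inner code}. Once that is checked, both halves of the lemma are immediate.
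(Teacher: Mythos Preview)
Your proposal is correct and follows essentially the same approach as the paper: both arguments rest on the observation that under $\Psi$ the first coordinate (from $\cC_1$) determines membership in $\{G,C\}$ and the second coordinate (from $\cC_2$) determines membership in $\{T,G\}$, so balanced $GC$-content reduces to $\cC_1$ being balanced and Property~\ref{prop:quat} reduces to the weight guarantee on substrings of $\cC_2$-codewords. The paper's proof is a two-line invocation of this, simply writing ``Since $\cC_2$ satisfies Property~\ref{prop:manyones}, the proof follows,'' whereas you are more careful about the parameter bookkeeping (the message length of $\cC_2$ is $m/2$, not $m$, so the window length in Property~\ref{prop:manyones} is nominally $\sqrt{m/2}$ rather than $\sqrt{m}$); your tiling remark correctly handles this, and indeed the underlying Property~\ref{prop:manyonessmall} guarantee is for windows of length $100\log(m/2)$, so a $\sqrt{m}$-length window is comfortably covered.
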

\begin{proof}
	Suppose that $c=\Psi(c_1,c_2),$ where $c_1\in \cC_1$ and $c_2\in\cC_2$. To see that $c$ has balanced $GC$-content, note that the number of $C$'s and $G$'s in $c$ equals the weight of $c_1$. We have $w(c_1)=|c_1|/2$ since $\cC_1$ is a balanced code, and hence $c$ has balanced $GC$-content. To verify that $\cC$ satisfies Property~\ref{prop:quat}, note that the number of $T$'s and $G$'s within a substring $c[i,j]$ equals $w(c_2[i,j])$. Since $\cC_2$ satisfies Property~\ref{prop:manyones}, the proof follows.
\end{proof}

Given Lemma~\ref{lem:quatinnercode}, we can now proceed along the steps described in Section~\ref{sec:basicconst} by splitting a trace of $\cC$ into many short sub-traces associated with different blocks, and then applying a worst-case trace reconstruction algorithm on each block. We remark that although the algorithm from Lemma~\ref{lem:worstcase} works for worst-case trace reconstruction over binary strings, it can be easily adapted for quaternary strings. In fact, if $t$ traces suffice for a worst-case trace reconstruction algorithm to reconstruct a string in $\bits^n$ with high probability, then a simple modification of this procedure recovers any quaternary string in $\{A,C,G,T\}^n$ with $2t$ traces. This is achieved by mapping the symbols in the first $t$ traces over $\{A,C,G,T\}$ to traces over $\bits$ according to, say, $A\mapsto 0, C\mapsto 0, G\mapsto 1, T\mapsto 1$, and the symbols in the last $t$ traces according to $A\mapsto 0, C\mapsto 1, G\mapsto 0, T\mapsto 1$. We can now run the binary worst-case algorithm on both sets of $t$ traces, and recover the original string over $\{A,C,G,T\}$ from the two outputs.

Taking into account the previous discussion, applying the reasoning from Section~\ref{sec:basicconst} to the marker $M$ and inner code $\cC'$ defined in this section leads to Theorem~\ref{thm:quatbasic}, which we restate for completeness.
\begin{thm}[Theorem~\ref{thm:quatbasic}, restated]
	For every deletion probability $d<1$, there exists an efficiently encodable code $\cC\subseteq\{A,C,G,T\}^{n+r}$ with redundancy $r=O(n/\log n)$ and balanced $GC$-content that can be efficiently reconstructed from $\exp(O(\log n)^{2/3})$ traces.
\end{thm}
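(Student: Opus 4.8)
The plan is to replay the proof of Theorem~\ref{thm:tracesmarker1} from Section~\ref{sec:basicconst} essentially line for line, substituting the quaternary marker $M=(AC)^\ell||(TG)^\ell$ (with $\ell=\Theta(\log n)$; $\ell=25\log n$ when $d=1/2$) for $0^\ell||1^\ell$ and the quaternary inner code $\cC'$ of Lemma~\ref{lem:quatinnercode} for the binary inner code of Section~\ref{sec:inner code}. Concretely, the encoder $\Enc$ splits the message into blocks of length $\log^2 n$, encodes each block with $\Enc'$ to obtain $\overline{x}^{(i)}$, and outputs
\[
\Enc(x)=(TG)^\ell||\overline{x}^{(1)}||M||\overline{x}^{(2)}||M||\cdots||M||\overline{x}^{(n/\log^2 n)}||(AC)^\ell,
\]
where the leading $(TG)^\ell$ and trailing $(AC)^\ell$ play the same bookkeeping role as $1^\ell$ and $0^\ell$ in Section~\ref{sec:basicconst}. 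Since $|M|=2\ell=O(\log n)$ and $\cC'$ has redundancy $O(\log\log n)$ (as established in the construction preceding Lemma~\ref{lem:quatinnercode}), the same count as in~\eqref{eq:redmarker} gives redundancy $O(n/\log n)$, and balanced $GC$-content of the codeword is immediate because each $\overline{x}^{(i)}$ (Lemma~\ref{lem:quatinnercode}) and each of $M$, $(TG)^\ell$, $(AC)^\ell$ (by inspection) is individually $GC$-balanced. Efficient encoding and decoding follow from the efficiency of $\Enc'$ and its decoder, exactly as in Section~\ref{sec:basicconst}.

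The core of the argument is the trace-splitting step, i.e.\ the quaternary analogues of Lemmas~\ref{lem:largemarkers}, \ref{lem:nolongruns} and~\ref{lem:splittrace}, in which the role of ``a long run of $0$'s'' is played by ``a long substring over the sub-alphabet $\{A,C\}$''. Two observations let the binary proofs go through unchanged. First, a trace of $(AC)^\ell$ is automatically a string over $\{A,C\}$ whose length is distributed as $\mathsf{Bin}(2\ell,1-d)$ --- exactly the law governing the survivors of a length-$2\ell$ monochromatic run --- so the Chernoff estimate of Lemma~\ref{lem:largemarkers} applies verbatim and shows that each marker's $(AC)^\ell$-portion still yields an $\{A,C\}$-substring of length at least $10\log n$ in the trace with probability at least $1-n^{-3}$, while the following $(TG)^\ell$-portion retains at least one symbol except with probability $2^{-\Omega(\ell)}$; since deletions never create new symbols, the first $T$ or $G$ occurring after such a substring correctly locates the $(AC)^\ell$-to-$(TG)^\ell$ transition inside the marker. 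Second, Property~\ref{prop:quat} is the exact analogue of Property~\ref{prop:manyones}: an $\{A,C\}$-substring of length at least $10\log n$ inside the trace of some $\overline{x}^{(i)}\in\cC'$ would force the deletion of at least $(10/3)\log n-1$ consecutive symbols from $\{T,G\}$, which Property~\ref{prop:quat} guarantees are present, an event of probability $d^{\Theta(\log n)}\leq n^{-3.3}$ (with the stated constants when $d=1/2$, and after rescaling $\ell$ and the threshold by a $d$-dependent constant in general), so the proof of Lemma~\ref{lem:nolongruns} carries over. A union bound over the $n/\log^2 n$ blocks, exactly as in Lemma~\ref{lem:splittrace}, then shows that with probability at least $1-n^{-2}$ over a trace $T$ we can cut $T$ into independent pieces $T^{(1)},\dots,T^{(n/\log^2 n)}$, where $T^{(i)}$ is distributed as a trace of $(TG)^\ell||\overline{x}^{(i)}||(AC)^\ell$ conditioned on a high-probability event.

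Finally, each string $(TG)^\ell||\overline{x}^{(i)}||(AC)^\ell$ has length $O(\log^2 n)$, so I would feed its sub-traces into the quaternary worst-case reconstruction procedure obtained from Lemma~\ref{lem:worstcase} by running the binary algorithm on the two binary projections discussed above, at the cost of a factor $2$ in the number of traces, to recover each block from $\exp(O((\log^2 n)^{1/3}))=\exp(O(\log^{2/3}n))$ traces with failure probability $\exp(-\Omega(\log^2 n))$. A union bound over the $n/\log^2 n$ blocks, together with the splitting-failure bound, reproduces the chain of inequalities through~\eqref{eq:genboundfail} and yields overall failure probability at most $n\cdot\exp(-\Omega(\log^2 n))+1/n<2/n$; repeating $O(\log n)$ times and taking a majority vote boosts this to $1-1/\poly(n)$ without changing the trace count, and the whole procedure is efficient since $\exp(O(\log^{2/3}n))=n^{o(1)}$ and $\cC'$ is efficiently decodable. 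I expect the only genuinely new content --- and hence the main point to get right --- to be the two ``sub-alphabet run'' claims above, namely that a trace of $(AC)^\ell$ behaves like a trace of a true binary run of the same length, and that Property~\ref{prop:quat} rules out spurious long $\{A,C\}$-substrings arising from within a block; everything else is a transcription of Section~\ref{sec:basicconst}.
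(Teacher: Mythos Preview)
Your proposal is correct and follows essentially the same approach as the paper: it replaces the binary marker and inner code by their quaternary analogues, uses Property~\ref{prop:quat} in place of Property~\ref{prop:manyones} to rule out spurious long $\{A,C\}$-substrings, splits traces by locating long $\{A,C\}$-only segments, and handles worst-case block reconstruction via the two-projection reduction to the binary algorithm. The two ``sub-alphabet run'' observations you single out are exactly the points the paper highlights as the new content, and everything else is indeed a transcription of Section~\ref{sec:basicconst}.
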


Following the reasoning from Section~\ref{sec:leveltwo} with the modified markers and $\cC''$ instantiated with the inner code $\cC'$ we designed in this section proves Theorem~\ref{thm:quatleveltwo}, which we also restate for completeness.
\begin{thm}[Theorem~\ref{thm:quatleveltwo}, restated]
	For every constant deletion probability $d<1$, there exists an efficiently encodable code $\cC\subseteq\{A,C,G,T\}^{n+r}$ with redundancy $r=O(n/\log\log n)$ and balanced $GC$-content that can be efficiently reconstructed from $\exp(O(\log\log n)^{2/3})$ traces.
\end{thm}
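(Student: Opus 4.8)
The plan is to carry out the two-level construction and analysis of Section~\ref{sec:leveltwo} verbatim, but with every binary ingredient replaced by the quaternary, $GC$-balanced ingredient of Section~\ref{sec:synthesisconstraints}. At the top level I would use the marker $M=(AC)^\ell\|(TG)^\ell$ with $\ell=25\log n$ (together with the superfluous prefix $(TG)^\ell$ and suffix $(AC)^\ell$), and at the second level the marker $M'=(AC)^{\ell'}\|(TG)^{\ell'}$ with $\ell'=25\log m$, where $m=\log^2 n$. As inner-inner code $\cC''$ I would take the $GC$-balanced quaternary code of Lemma~\ref{lem:quatinnercode} with message length $m'=\log^2 m$ and redundancy $O(\log m')=O(\log\log m)$, and I would build the inner code $\cC'$ exactly as $\Enc'$ is built in Section~\ref{sec:leveltwo}: split a block $y\in\bits^m$ into sub-blocks of length $\log^2 m$, encode each with $\cC''$, and insert a copy of $M'$ between consecutive encoded sub-blocks. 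Finally, as in Section~\ref{sec:leveltwo}, I would first pass the message $x_0\in\bits^{n_0}$ through the binary encoder $\Enc_{\mathsf{Ham}}$ of a code with relative Hamming distance $30/\log^2 n_0$ and redundancy $O(n_0\log\log n_0/\log n_0)$, and set $\Enc_0=\Enc\circ\Enc_{\mathsf{Ham}}$, where $\Enc$ is the quaternary two-level marker encoder just described.

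First I would verify the two structural properties of $\cC'$. Balanced $GC$-content is immediate: each marker $M'$ and each codeword of $\cC''$ has balanced $GC$-content (Lemma~\ref{lem:quatinnercode}), hence so does $\Enc'(y)$, and likewise the top-level markers $M$ and the prefix/suffix. The quaternary analogue of Property~\ref{prop:quat} for $\cC'$ — every length-$\sqrt m$ substring has at least a third of its symbols equal to $T$ or $G$ — follows from the same covering argument used for Property~\ref{prop:manyones} in Section~\ref{sec:leveltwo}: since $|M'|=O(\log m)$ and each encoded sub-block has length $O(\log^2 m)=o(\sqrt m)$, a length-$\sqrt m$ window spans many full markers (each with $T/G$-density exactly $1/2$) and full $\cC''$-codewords (each with $T/G$-density at least $0.4-o(1)$ by Property~\ref{prop:quat} for $\cC''$), and the two partial pieces at the ends cost only $O(\log^2 m)=o(\sqrt m)$ symbols. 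The redundancy computation then follows Section~\ref{sec:leveltwo} line by line — $\cC''$ contributes $O(\log\log m)$ redundancy per sub-block, hence $\cC'$ has redundancy $O(m/\log m)$, which plugged into the analogue of~\eqref{eq:redmarker} gives $\cC$ redundancy $O(n/\log\log n)$ (the $\Psi$-bijection and the balanced-code overhead add only lower-order $O(\log)$ terms), and composing with $\Enc_{\mathsf{Ham}}$ yields $r_0=O(n_0/\log\log n_0)$.

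For reconstruction I would recurse the marker-splitting step. Given $t$ i.i.d.\ traces of $\Enc_0(x_0)$, I first locate the top-level markers by scanning for substrings of $A$'s and $C$'s of length at least $10\log n$; by the quaternary analogues of Lemmas~\ref{lem:largemarkers} and~\ref{lem:nolongruns} (the latter using Property~\ref{prop:quat} of $\cC'$), with probability $1-n^{-2}$ every such run is the trace of an $(AC)^\ell$-part of a marker, so each trace splits into independent sub-traces of the strings $(TG)^\ell\|\Enc'(x^{(i)})\|(AC)^\ell$. Since $\Enc'$ carries the second layer of markers $M'$, I repeat the split inside each sub-trace, now scanning for runs of $A$'s and $C$'s of length at least $10\log m$, obtaining independent sub-sub-traces of the strings $(TG)^{\ell'}\|\Enc''(y^{(i)}_k)\|(AC)^{\ell'}$ of length $O(\log^2 m)$, on which I run the worst-case algorithm of Lemma~\ref{lem:worstcase} adapted to $\{A,C,G,T\}$ via the two-map reduction of Section~\ref{sec:synthesisconstraints} (costing a factor $2$ in traces); this needs $\exp(O(\log^{2/3}m))=\exp(O(\log\log n_0)^{2/3})$ traces and fails with probability $\exp(-\Omega(\log^2 m))$ per sub-sub-block. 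A union bound over the $O(m/\log^2 m)$ sub-sub-blocks in a block, together with the splitting error, shows each block $\Enc'(x^{(i)})$ (hence $x^{(i)}$) is recovered with failure probability at most $2/m<2/\log^2 n_0$, independently across $i$; a Chernoff bound as in Lemma~\ref{lem:smallfraction} then gives that, except with probability $n_0^{-10}$, at most a $(10/\log^2 n_0)$-fraction of the $x^{(i)}$ is wrong, which is below half the relative distance of $\cC_{\mathsf{Ham}}$, so $\Dec_\mathsf{Ham}$ recovers $x_0$. Adding the $1/n_0$ top-level splitting error gives total failure probability $<2/n_0$, as required.

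I do not expect a genuinely hard step: the whole argument is a recombination of Sections~\ref{sec:leveltwo} and~\ref{sec:synthesisconstraints}. The one point demanding care is the one flagged in the second paragraph — checking that threading the second layer of $(AC)/(TG)$ markers through the inner code does not destroy the ``enough $T$/$G$ in every short substring'' property on which the splitting procedure depends. One must confirm that the density-$1/2$ contribution of the markers, the $0.4-o(1)$ density of the $\cC''$-codewords, and the $O(\log^2 m)$ boundary loss combine to stay above the $|s|/3$ threshold, which (exactly as in the binary case) forces the window length of the $\cC''$-constraint and the marker lengths to be chosen with matching slack.
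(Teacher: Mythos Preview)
Your proposal is correct and follows essentially the same approach as the paper. The paper's own proof of this theorem is a one-line pointer: it says to follow the two-level reasoning of Section~\ref{sec:leveltwo} with the quaternary markers and with $\cC''$ instantiated by the balanced-$GC$ inner code of Section~\ref{sec:synthesisconstraints}, which is exactly the recombination you spell out in detail (including the pre-encoding by $\cC_{\mathsf{Ham}}$, the verification that $\cC'$ inherits Property~\ref{prop:quat}, the redundancy computation, the two-level trace splitting, and the Chernoff-plus-Hamming-decoding finish).
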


Finally, two comments are in place regarding the choice of markers. First, the marker sequence $M=(AC)^\ell||(TG)^\ell$ may lead to hairpin formations when single stranded DNA is used. Hairpins are double-stranded folds, but may be easily controlled through addition of urea or through temperature increase. Second, repeats such as marker repeats are undesirable as they may lead to issues during DNA synthesis. To mitigate this issue, one can alternate marker sequences. For example, two valid marker options are $(AC)^\ell || (TG)^\ell$ and $(AG)^\ell || (TC)^\ell$, and any other marker where the sets of symbols used in each side are disjoint and $C$ and $G$ do not appear in the same side is appropriate for use in the construction. 

Note that alternating markers in turn requires alternating the inner codes used between markers. This can be accommodated in a straightforward manner. Suppose that the block $x^{(i)}$ precedes an $(AC)^\ell||(TG)^\ell$ marker. Then, we encode $x^{(i)}$ as usual with $\Enc'$ as defined in this section. However, if $x^{(i)}$ precedes an $(AG)^\ell||(TC)^\ell$ marker, then we encode $x^{(i)}$ by first computing $\Enc'(x^{(i)})$, and then swapping all $G$'s and $C$'s in the encoding. Observe that in both cases the encoding has balanced $GC$-content. Moreover, since $\cC'$ satisfies Property~\ref{prop:quat}, with high probability the trace of each block's encoding will not have long substrings containing only $A$'s and $C$'s (resp.\ $A$'s and $G$'s) before an $(AC)^\ell||(TG)^\ell$ marker (resp.\ $(AG)^\ell||(TC)^\ell$ marker). As before, this means that, with high probability, we can correctly split the full trace into the relevant sub-traces by alternately looking for long substrings composed of $A$'s and $C$'s only, and of $A$'s and $A$'s and $G$'s only. In fact, the end of such long substrings corresponds to the beginning of the traces of the $(TG)^\ell$ and $(TC)^\ell$ substrings of the marker, respectively.

\section{Reducing the number of traces for small constant deletion probability}\label{sec:reducetrace}

In Section~\ref{sec:marker}, we gave a construction of marker-based codes that require a few traces for reconstruction. A simple property of the inner code ensured that we can correctly identify all markers with high probability, effectively dividing the global trace into many independent, shorter traces. After this, we applied the state-of-the-art worst-case trace reconstruction algorithm from Lemma~\ref{lem:worstcase} on each short trace in order to obtain the desired codes.

It seems plausible, however, that one could design the inner code more carefully so that many fewer traces are needed to recover the short codewords contained between the markers. This is the main problem we address in this section. We design a code that, when used as the inner code in the construction from Section~\ref{sec:marker}, leads to an almost exponential reduction of the number of traces required for reconstruction with only a slight decrease in the code rate, provided that the deletion probability is a sufficiently small constant. The trace reconstruction algorithm we use is a variation of the algorithm for average-case trace reconstruction described in~\cite[Section 2.3]{HMPW08}.

Our starting point is a low redundancy code with the property that it can be reconstructed from $\poly(n)$ traces. We discuss this construction in Section~\ref{sec:lowred}. Then, in Section~\ref{sec:mainred} we show how to adapt this code so that it can be successfully used as an inner code in the marker-based construction introduced in Section~\ref{sec:marker}.

\subsection{Low redundancy codes reconstructable from polynomially many traces}\label{sec:lowred}

In what follows, we prove Theorem~\ref{thm:polytraces}. We restate the result for convenience.
\begin{thm}[Theorem~\ref{thm:polytraces}, restated]
	For small enough deletion probability $d$, there exists an efficiently encodable code $\cC\subseteq\bits^{n+r}$ with encoder $\Enc:\bits^n\to\bits^{n+r}$ and redundancy $r=O(\log n)$ that can be efficiently reconstructed from $\poly(n)$ traces with probability at least $1-\exp(-n)$.
\end{thm}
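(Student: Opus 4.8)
The plan is to encode an $n$-bit message into a string that is $w$-subsequence-unique (Definition~\ref{def:subseq}) for $w=100\log(n+r)=\Theta(\log n)$, and then invoke the trace reconstruction algorithm behind Lemma~\ref{lem:tracerecsubsequnique}, which reconstructs such strings from $\poly(n)$ traces in polynomial time when the deletion probability is a small enough constant. The key observation is that a uniformly random string in $\bits^N$ is $w$-subsequence-unique with high probability, and moreover this property only depends on short windows of consecutive bits of length $O(w)=O(\log N)$. Concretely, the event that some window $x[a,a+w)$ is a subsequence of some nearby window $x[b,b+1.1w)$ is an event depending on at most $O(\log N)$ coordinates, and a union bound over all $O(N^2)$ (really $O(N\cdot w)$, since $b$ must be within $O(w)$ of $a$) such pairs shows that a uniformly random string fails to be $w$-subsequence-unique with probability at most, say, $N^{-c}$ for a constant $c$ we can make as large as we like by tuning the constant in $w$. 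This locality is exactly what lets us derandomize cheaply.

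The main step is the encoding. First I would apply Lemma~\ref{lem:almostind} to obtain an $\eps$-almost $k$-wise independent generator $g:\bits^t\to\bits^N$ with $k=\Theta(w)=\Theta(\log n)$ and $\eps=2^{-\Theta(w)}=1/\poly(n)$, so that $t=O(\log((k\log N)/\eps))=O(\log n)$ and hence the seed length is only $O(\log n)$ bits. Since being $w$-subsequence-unique is a conjunction of "bad window" events each depending on $O(w)=O(k)$ coordinates, the same union bound as above — now carried out against the $\eps$-almost $k$-wise independent distribution $g(U_t)$, paying an extra additive $2^{O(w)}\eps = 1/\poly(n)$ per event — shows that for $x\in\bits^n$ and a uniformly random seed $z\in\bits^t$, the masked string $x+g(z)$ restricted to windows is $w$-subsequence-unique with probability at least, say, $1-1/n$. (Here $x$ is first padded/identified with a length-$N$ string, $N=n+t$; one should be slightly careful that XORing a fixed string $x$ into $g(z)$ preserves the almost-$k$-wise-independence of every fixed set of $k$ coordinates, which it does since XOR by a constant is a bijection on each coordinate.) Therefore a good seed $z$ exists, and since there are only $2^t=\poly(n)$ seeds and checking the $w$-subsequence-unique property takes $\poly(n)$ time, we can find one by brute-force search in polynomial time. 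The encoder then outputs $\Enc(x)=z\,||\,(x+g(z))$, a string of length $n+r$ with $r=t=O(\log n)$; decoding is trivial given $z$. Running the algorithm of Lemma~\ref{lem:tracerecsubsequnique} on $\Enc(x)$ reconstructs it from $\poly(n)$ traces with probability $1-1/\poly(n)$, and one repeats $\poly(n)$ times and takes majority to boost the success probability to $1-\exp(-n)$ while keeping the number of traces polynomial.

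The hard part — really the only non-routine part — is verifying that the $w$-subsequence-unique property is genuinely a union of low-arity events so that almost-$k$-wise independence suffices, and choosing $k$, $w$, and $\eps$ consistently. The subtlety is that Definition~\ref{def:subseq} quantifies over all $a$ and all $b$ with $b$ possibly far from $a$; but "$x[a,a+w)$ is a subsequence of $x[b,b+1.1w)$" with $|a-b|$ large forces $x[a,a+w)$ to be a subsequence of a disjoint length-$1.1w$ window, which for a string whose length-$O(w)$ windows are themselves $\eps$-almost $k$-wise independent with $k\gtrsim w$ happens with probability $\le 2^{-\Omega(w)}$ per pair (each window looks essentially uniform, and a fixed length-$1.1w$ string contains a fixed length-$w$ string as a subsequence with probability $\le 2^{-\Omega(w)}$ averaged over the uniform choice). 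So the bad event for each pair $(a,b)$ depends only on the $\le 2.1w = O(k)$ coordinates in $x[a,a+w)\cup x[b,b+1.1w)$, the union is over $O(N^2)=\poly(n)$ pairs, and the final failure bound is $\poly(n)\cdot(2^{-\Omega(w)}+2^{O(w)}\eps)$, which is $o(1)$ for $w=C\log n$ and $\eps=2^{-C'w}$ with $C'$ a sufficiently large multiple of $C$. This pins down all parameters and completes the proof.
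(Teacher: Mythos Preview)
Your overall approach is the paper's: mask $x$ with an $\eps$-almost $k$-wise independent string $g(z)$ so that $y=x+g(z)$ is $w$-subsequence-unique (Lemma~\ref{lem:subsequnique} makes this precise), brute-force over the $\poly(n)$ seeds, and reconstruct via the algorithm underlying Lemma~\ref{lem:tracerecsubsequnique}. There is, however, a genuine gap in your last step. The codeword $\Enc(x)=z\,||\,y$ is \emph{not} itself $w$-subsequence-unique: your argument establishes this only for $y$, and you cannot extend it to windows touching the prefix $z$, since the pair $(z,g(z))$ is not almost $k$-wise independent (the seed determines $g(z)$). Thus you are not entitled to invoke Lemma~\ref{lem:tracerecsubsequnique} on $\Enc(x)$ directly; concretely, the matching step could match $y[i-v-w,i-v)$ against trace bits coming from $z$, breaking the analogue of Lemma~\ref{lem:matchsubsequnique}.

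The paper handles this with a small but necessary modification of the reconstruction algorithm rather than of the code. First, it uses the bit-by-bit procedure of Lemma~\ref{lem:recoverfirst} to recover $z$ together with the first $2v+w=O(\log n)$ bits of $y$ from $\poly(n)$ traces; this step needs no structural assumption on the prefix. Then, in the matching phase for recovering $y_i$, it only accepts \emph{good} matchings $T[u-w,u)$ satisfying $u-w>|z|$, so that all matched bits provably come from $y$. With this restriction the analogue of Lemma~\ref{lem:matchsubsequnique} (Lemma~\ref{lem:goodmatchstruct} in the paper) holds using only the $w$-subsequence-uniqueness of $y$, and the threshold argument from Section~\ref{sec:tracerecuniquesubseq} goes through unchanged. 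After this fix the rest of your outline, including the repetition to boost success probability to $1-\exp(-n)$, is correct.
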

The code we construct to prove Theorem~\ref{thm:polytraces} will be the starting point for the proof of Theorem~\ref{thm:redtrace} in Section~\ref{sec:mainred}. Roughly speaking, our code encodes $n$-bit messages into codewords that are \emph{almost} $w$-subsequence-unique for $w=O(\log n)$, in the sense that all but the first $O(\log n)$ bits of the codeword comprise a $w$-subsequence-unique string. This is possible because an $\eps$-almost $k$-wise independent random variable over $\bits^n$ with the appropriate parameters is $w$-subsequence-unique with high probability. We make this statement rigorous in the following lemma. We note that the technique in the lemma below has already been used in~\cite{CJLW18} to obtain strings satisfying related properties, such as substring-uniqueness, with high probability.
\begin{lem}\label{lem:subsequnique}
	Let $g:\bits^t\to\bits^m$ be the function guaranteed by Lemma~\ref{lem:almostind} with $k=3w$ and $\eps=2^{-10w}$ for $w=100\log m$ (hence $t=O(\log m)$). Fix some $x\in\bits^m$ and define the random variable $Y=x+g(U_t)$. Then, with probability at least $1-1/\textnormal{poly}(m)$ it holds that $Y$ is $w$-subsequence-unique.
\end{lem}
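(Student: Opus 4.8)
The plan is to show that the random variable $Y = x + g(U_t)$ fails to be $w$-subsequence-unique with only small probability, by a union bound over all the "bad" pairs $(a,b)$ appearing in Definition~\ref{def:subseq}. Recall that $Y$ is not $w$-subsequence-unique precisely when there exist $a,b$ with either $a<b$ or $b+1.1w<a+w$ (equivalently $a > b + 0.1w$) such that $Y[a,a+w)$ is a subsequence of $Y[b,b+1.1w)$. There are at most $m^2$ such pairs, so it suffices to bound the probability of a bad event for a single fixed pair $(a,b)$ by $1/\poly(m)$ with a large enough polynomial, say $m^{-4}$, and then take a union bound.

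First I would fix a pair $(a,b)$ satisfying the separation condition and estimate $\Pr[Y[a,a+w) \text{ is a subsequence of } Y[b,b+1.1w)]$. The standard way to do this (as in the document-exchange constructions of~\cite{CJLW18}) is to count embeddings: $Y[a,a+w)$ embeds into $Y[b,b+1.1w)$ iff there is some strictly increasing choice of $w$ positions among the $1.1w$ positions of the window $[b,b+1.1w)$ on which the two substrings agree symbol-by-symbol. The number of such position sets is $\binom{1.1w}{w} = \binom{1.1w}{0.1w} \le 2^{1.1w \cdot h(0.1/1.1)}$, which is at most $2^{0.6 w}$ or so for the relevant binary-entropy value. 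For each fixed embedding, agreement on all $w$ coordinates is an event on $w$ coordinates of $Y[a,a+w)$ together with the $w$ corresponding coordinates of the window; these are $2w$ coordinate equalities, i.e. $w$ coordinate-pair equalities — each of the form $Y_i = Y_j$ for distinct indices $i,j$ (distinctness holds because the windows $[a,a+w)$ and $[b,b+1.1w)$ are arranged so that the separation condition forces the matched index pairs to be distinct; this is exactly where the "$a<b$ or $b+1.1w<a+w$" hypothesis is used). Since $Y$ is $\eps$-almost $k$-wise independent with $k=3w \ge 2w$, the probability of any fixed conjunction of up to $2w$ bit-values is at most $2^{-2w} + \eps = 2^{-2w} + 2^{-10w} \le 2^{-2w+1}$, and the probability that $Y_i = Y_j$ for a single pair is at most $2^{-1} + 2^{-10w} \le 2^{-1}+2^{-10w}$; over $w$ independent-looking pair-equalities the joint probability is at most $2^{-w} + 2^{3w}\eps \le 2^{-w+1}$, using $k = 3w$ almost-independence on the (at most $2w \le 3w$) indices involved and $2^{3w}\eps < 2^{-w}$.

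Putting the two counts together, for a fixed pair $(a,b)$ the probability of a subsequence containment is at most
\[
2^{0.6 w} \cdot 2^{-w+1} = 2^{-0.4 w + 1} \le 2^{-0.3 w}
\]
for $w$ large enough, and since $w = 100 \log m$ this is at most $m^{-30}$, comfortably smaller than $m^{-4}$. A union bound over the at most $m^2$ relevant pairs $(a,b)$ then gives that $Y$ fails to be $w$-subsequence-unique with probability at most $m^2 \cdot m^{-30} = m^{-28} = 1/\poly(m)$, which proves the lemma.

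The main obstacle — and the point deserving the most care — is verifying that for every bad pair $(a,b)$ and every candidate embedding, the collection of index equalities $Y_i = Y_j$ genuinely involves $2w$ \emph{distinct} coordinates of $Y$ (so that almost-$k$-wise independence with $k = 3w$ actually applies and gives the $2^{-w}+2^{3w}\eps$ bound), rather than some coordinate appearing on both sides of an equality or being reused across equalities in a way that collapses the count. The separation hypothesis in Definition~\ref{def:subseq} is exactly tailored to rule this out: either $a < b$, so the source window starts strictly before the target window and the matched target indices are all $\ge b > a$ hence never coincide with the earliest source indices, or $b+1.1w < a+w$, so the target window ends before the source window does and a symmetric argument applies; in both regimes one checks that along any increasing embedding the matched pairs $(a+p,\ b+q_p)$ are pairwise distinct and have distinct first and second components. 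Once this distinctness bookkeeping is done carefully, the rest is the routine entropy/union-bound computation sketched above.
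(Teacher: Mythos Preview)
Your overall architecture --- union bound over pairs $(a,b)$, then over the $\binom{1.1w}{w}$ embeddings, then a per-embedding probability bound of roughly $2^{-w}$ --- is exactly the paper's. The gap is in the per-embedding step, and it is precisely the point you flag as ``the main obstacle'' but then do not actually resolve.

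Your claim that the $2w$ indices $\{a,\dots,a+w-1\}\cup\{i_1,\dots,i_w\}$ are all distinct is false whenever the two windows overlap, which is the generic situation (e.g.\ $a<b<a+w$). In that regime a source index $a+p$ can equal a target index $i_q$ for some $q\neq p$, so the $w$ equalities $Y_{a+p}=Y_{i_{p+1}}$ are not disjoint pair-constraints but a chain of overlapping ones. Your one-line justification (``the matched target indices are all $\ge b>a$ hence never coincide with the earliest source indices'') only rules out collision with the single index $a$; it says nothing about $a+1,\dots,a+w-1$, and the symmetric sentence for the other case has the same defect. So the sentence ``using $k=3w$ almost-independence on the (at most $2w$) indices involved'' does not yet yield $2^{-w}+2^{3w}\eps$: with fewer than $2w$ distinct indices you have not shown there are only $\approx 2^{|V|-w}$ consistent assignments.

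What actually makes the $2^{-w}$ bound true is that the constraint graph (vertices = involved indices, one edge per equality) is always a \emph{forest}, so fixing the bits on a suitable subset uniquely determines the rest. The paper proves exactly this, by an induction: with $u=\min(a+w,b)$ and $\cS'=[b,b+1.1w)\setminus\{i_1,\dots,i_w\}$, once $Y[a,u)$ and $Y_{\cS'}$ are fixed, the constraints $Y_{i_j}=Y_{a+j-1}$ pin down $Y_{i_1},Y_{i_2},\dots$ one at a time (each $a+j-1$ lies in $[a,u)$, in $\cS'$, or equals some earlier $i_d$). Then one sums over the $2^{(u-a)+0.1w}$ choices of $(Y[a,u),Y_{\cS'})$ against the almost-$3w$-wise bound on the at most $(u-a)+1.1w\le 3w$ fixed coordinates. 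This inductive ``unique determination'' step --- or an equivalent direct acyclicity argument for the constraint graph using the monotonicity of the embedding and the separation hypothesis --- is the missing ingredient in your write-up.
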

\begin{proof}
First, note that $Y$ is $\eps$-almost $k$-wise independent. This proof follows along the same lines as the proof that a random string is $w$-subsequence-unique with high probability found in~\cite[Lemma 2.6]{HMPW08} with a few simple modifications.
	
Without loss of generality, fix $a$ and $b$ such that $a<b$, and fix distinct indices $i_1,\dots,i_w\in[b,\dots,b+1.1w)$. For convenience, let $\cS=\{i_1,\dots,i_w\}$, $\cS'=[b,b+1.1w)-\cS$, and $u=\min(a+w,b)$. Then,
	\begin{equation}\label{eq:totalprob}
		\Pr[Y_{\cS}=Y[a,a+w)]=\sum_{y,y'}\Pr[Y_{\cS}=Y[a,a+w),Y[a,u)=y,Y_{\cS'}=y'].
	\end{equation}
We now show that $Y[a,u)$ and $Y_{\cS'}$ completely determine $Y_\cS$ under the constraint $Y_\cS=Y[a,a+w)$. This can be seen by induction. First, we must have $Y_{i_1}=Y_a$, and $Y_a$ is determined by $Y[a,u)$ since $a< u$. Now, suppose that $Y_{i_1},\dots,Y_{i_j}$ 
are determined by $Y[a,u)$ and $Y_{\cS'}$. It must be the case that $Y_{i_{j+1}}=Y_{a+j}$. If $a+j< u$ or $a+j\in\cS'$, then $Y_{i_{j+1}}$ is determined by $Y[a,u)$ or $Y_{\cS'}$, respectively. On the other hand, if $a+j\geq u$ and $a+j\not\in\cS'$, then $Y_{a+j}=Y_{i_d}$ for some $d<j+1$. By the induction hypothesis, $Y_{i_d}$ is determined by $Y[a,u)$ and $Y_{\cS'}$, and hence $Y_{i_{j+1}}$ is, too.
	
As a result, we conclude that there exists a string $\overline{y}=(\overline{y}_1,\dots,\overline{y}_w)$ completely determined by 
$y$ and $y'$ such that
	\begin{equation}\label{eq:fixy}
		\Pr[Y_{\cS}=Y[a,a+w),Y[a,u)=y,Y_{\cS'}=y']=\Pr[Y_\cS=\overline{y},Y[a,u)=y,Y_{\cS'}=y'].
	\end{equation}
	Since $Y$ is $\eps$-almost $3w$-wise independent and fewer than $3w$ coordinates are fixed, we have
	\begin{equation}\label{eq:boundprob}
		\Pr[Y_\cS=\overline{y},Y[a,u)=y,Y_{\cS'}=y']\leq 2^{-1.1w-(u-a)}+2^{3w}\eps
	\end{equation}
	for all $y$ and $y'$. Combining~\eqref{eq:totalprob},~\eqref{eq:fixy}, and~\eqref{eq:boundprob}, we conclude that
	\begin{align*}
		\Pr[Y_{\cS}=Y[a,a+w)]\leq 2^{u-a}\cdot 2^{0.1w}(2^{-1.1w-(u-a)}+2^{3w}\eps)\leq 2^{-w}+2^{4.1w}\eps\leq 2^{-w+1},
	\end{align*}
	since $u-a\leq w$ and $\eps=2^{-10w}$. Since there are $\binom{1.1w}{w}$ choices for $\cS$ for each pair $(a,b)$ and fewer than $m^2$ possible pairs $(a,b)$, the probability that $Y$ is not $w$-subsequence-unique is at most
	\begin{align*}
	m^2 \binom{1.1w}{w}2^{-w+1}&=n^2\binom{1.1w}{w}2^{-w+1}\\
	&\leq m^2(11e)^{0.1w}2^{-w+1}\\
	&\leq 2m^2(1.415)^{-w}\\
	&\leq m^{-45},
	\end{align*}
	as desired.
\end{proof}
Lemma~\ref{lem:subsequnique} naturally leads to a simple, efficient candidate construction of the encoder $\Enc$: Given $x\in\{0,1\}^n$, we first iterate over all $z\in\{0,1\}^t$ until we find $z$ such that $x+g(z)$ is $w$-subsequence-unique. Most strings $z$ satisfy this, according to Lemma~\ref{lem:subsequnique}. Moreover, since $t=O(\log n)$, we can iterate over all such $z$ in time $\poly(n)$, and verify whether $x+g(z)$ is $w$-subsequence-unique for each $z$ in $\poly(n)$ time. To recover $x$ from $x+g(z)$ we need to provide $z$ to the receiver. Therefore, the encoder $\Enc$ for $\cC$ maps a message $x\in\bits^n$ to the codeword
\begin{equation}\label{eq:basiccode}
\Enc(x)=z||x+g(z)\in\bits^{n+t},
\end{equation}
where $z$ is the first string (in lexicographic order) such that $x+g(z)$ is $w$-subsequence-unique. Observe that the redundancy of $\cC$ is exactly $t=O(\log n)$.

\subsubsection{The trace reconstruction algorithm}\label{sec:modtracerec}

In this section, we describe an efficient trace reconstruction algorithm for $\cC$ that works whenever the deletion probability is a small enough constant, thus proving Theorem~\ref{thm:polytraces}. 
This algorithm works very similarly to the one introduced in~\cite{HMPW08} and described in Section~\ref{sec:tracerecuniquesubseq}. 
As before, we shall set $w=100\log n$, $v=w/d=O(\log n)$ and $j=(v-0.1w)(1-3d)=O(\log n)$. Given a codeword $c=\Enc(x)=z||x+g(z)$, we proceed as follows: 
First, we apply the algorithm from Lemma~\ref{lem:recoverfirst} to recover $z$ and the first $2v+w=O(\log n)$ bits of $y=x+g(z)$ with $\poly(n)$ traces (repeating the process $O(n)$ times if necessary) and success probability $1-\exp(-\Omega(n))$.
Now, suppose that we know $y_1,\dots,y_{i-1}$ for $i-1\geq 2v+w$. We show how to find $y_i$ with probability $1-\exp(-\Omega(n))$ from $\poly(n)$ traces, which concludes the proof of Theorem~\ref{thm:polytraces}. 

Let $T$ denote a trace of $c$. As in Section~\ref{sec:tracerecuniquesubseq}, we will look for a matching of $y[i-v-w,i-v)$ within $T$. However, we shall discard matchings that occur too early in $T$. More precisely, suppose that $y[i-v-w,i-v)$ is matched with $T[u-w,u)$. We call such a matching \emph{good} if $u-w>|z|$. If $T$ does not contain a good matching of $y[i-v-w,i-v)$, we discard it. Otherwise, if the first good matching occurs at $T[u-w,u)$, we let $V=T[u,\cdot]$ and discard the remaining bits of $T$. Our observations so far are summarized in the following lemmas.
\begin{lem}\label{lem:goodmatchprob}
	For $d$ small enough, the probability that a good matching occurs in $T$ is at least $2^{-(w+1)}$.
\end{lem}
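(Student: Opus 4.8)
The plan is to lower-bound the probability that the substring $y[i-v-w,i-v)$ survives the deletion channel \emph{and} appears in the trace at a position $T[u-w,u)$ with $u-w>|z|$. First I would isolate a single ``clean'' way for a good matching to occur: condition on the event $\mathcal{E}$ that \emph{every} bit of $y[i-v-w,i-v)$ survives the deletion channel. Since these are $w$ bits each surviving independently with probability $1-d$, we have $\Pr[\mathcal{E}]=(1-d)^w$. For $d$ a small enough constant, $(1-d)^w\geq 2^{-w/100}$ or so — in any case it is at least $2^{-(w+1)}$ with room to spare once we account for the remaining factor below. Actually, the bound $2^{-(w+1)}$ in the statement is weak enough that it suffices to observe $(1-d)^w \ge 2^{-\log(1/(1-d))\cdot w} \gg 2^{-(w+1)}$ for $d$ small; the slack absorbs the conditioning adjustments that follow.

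The second step is to argue that, conditioned on $\mathcal{E}$, the copy of $y[i-v-w,i-v)$ that lands in $T$ is automatically a \emph{good} matching, i.e.\ it sits entirely to the right of the trace of the prefix $z$. This is where I expect the only real subtlety. The matching $T[u-w,u)=y[i-v-w,i-v)$ produced by the survival of those $w$ bits occurs at a position $u$ determined by how many bits of $c[1,\,i-v-w)$ survived; since $i-v-w > |z| + v \ge |z|$ by the assumption $i-1\geq 2v+w$, the $w$ surviving bits come from positions strictly past $|z|$ in $c$, so in the trace they appear at an index $u-w$ that is at least the number of surviving bits of $z$, hence $u-w \ge$ (length of the trace's $z$-portion). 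One has to be slightly careful that this is the \emph{first} good matching we use and that spurious earlier (bad) matchings don't cause us to discard the trace — but the lemma only asserts existence of \emph{a} good matching in $T$, and the occurrence under $\mathcal{E}$ already furnishes one. So conditioning on $\mathcal{E}$ implies a good matching exists, giving $\Pr[\text{good matching in }T]\geq \Pr[\mathcal{E}]=(1-d)^w\geq 2^{-(w+1)}$ for $d$ below an absolute constant.

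The main obstacle, then, is not a calculation but a bookkeeping point: making sure the indices line up so that ``all $w$ bits of $y[i-v-w,i-v)$ survive'' genuinely forces $u-w>|z|$, which in turn relies on the standing hypothesis $i-1\geq 2v+w$ (so that $i-v-w \geq v+1 > |z|$ once we recall $|z|=t=O(\log n)$ and $v=w/d$ can be taken larger than $t$ by choosing constants appropriately, or alternatively by noting $i-v-w$ exceeds $|z|$ directly from the bootstrap range). Once that alignment is fixed, the probability bound is immediate from the i.i.d.\ structure of the deletion channel and the fact that $2^{-(w+1)}$ is a very loose target. I would also remark that a matching crossing the $z$/$y$ boundary need not be worried about: if all $w$ bits of $y[i-v-w,i-v)$ survive, their images in $T$ are consecutive and lie wholly within the $y$-portion of the trace, so the matching they induce is good by definition, and that is all we need.
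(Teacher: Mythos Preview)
Your step 2 contains a genuine gap. You claim that, conditioned on the event $\mathcal{E}$ that all $w$ bits of $y[i-v-w,i-v)$ survive, the resulting matching at $T[u-w,u)$ is \emph{automatically} good, i.e.\ satisfies $u-w>|z|$. This does not follow. The threshold $|z|$ in the definition of ``good'' is the \emph{original} length of $z$, not the length of the trace of $z$. The position $u-w$ equals one plus the number of survivors among the $|z|+i-v-w-1$ bits of $c$ preceding $y_{i-v-w}$; if many of those bits are deleted, the surviving block can land at a position $u-w\leq |z|$ even though it lies, as you say, ``wholly within the $y$-portion of the trace.'' Your inequality $u-w\geq |T_z|$ (the trace-length of $z$) is correct but useless, since $|T_z|\leq |z|$ and can be zero. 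The inequality $i-v-w>|z|$ concerns positions in the \emph{original} string and says nothing about positions in the trace.

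The paper's proof starts from the same event $\mathcal{E}$ (with $\Pr[\mathcal{E}]=(1-d)^w\geq 2^{-w}$) but then supplies the missing probabilistic step: conditioned on $\mathcal{E}$, the matching position $U$ satisfies $\Pr[U\leq |z|+w]\leq \Pr[\mathsf{Bin}(2Cw,1-d)\leq Cw]<1/2$ by a Chernoff bound, using that there are at least $v=w/d$ bits of $y$ before the block and that $d$ is small enough so $v>Cw\geq |z|+w$. This conditional probability $\geq 1/2$ is exactly where the extra factor in $2^{-(w+1)}=2^{-w}\cdot\tfrac12$ comes from; it is not slack to be absorbed but the cost of the step you omitted. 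To fix your argument you must replace the deterministic claim in step 2 by such a concentration estimate on the number of survivors among the bits of $c$ preceding $y_{i-v-w}$.
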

\begin{proof}
	First, observe that the probability that no bit in $y[i-v-w,i-v)$ is deleted is exactly $(1-d)^w\geq 2^{-w}$. Given this, suppose that 
	$y[i-v-w,i-v)$ shows up in positions $T[U-w,U)$. Then, the probability that the given matching is good equals 
	$\Pr[U> |z|+w]$, and $|z|+w\leq Cw$ for a fixed constant $C>0$, since $|z|=O(\log n)$. Note that we may assume 
	$i-v-w\geq v=w/d$ since we have already learned the first $2v+w$ bits of $y$. We may also choose $d<1/10$ small enough 
	such that $v>Cw$. Then, we have
	\begin{equation*}
		\Pr[U\leq |z|+w]\leq \Pr[\mathsf{Bin}(2Cw,1-d)\leq Cw]<1/2,
	\end{equation*}
	where the last inequality follows from an application of the Chernoff bound. Concluding, the trace $T$ contains a 
	good matching with probability at least $1/2\cdot 2^{-w}=2^{-(w+1)}$.
\end{proof}
\begin{lem}\label{lem:goodmatchstruct}
	The probability that the last bit of a good matching in $T$ does not come from $y[i-v,i-v+0.1w)$ is at most $nd^{-w/100}\leq 2^{-100w}$ if $d$ is small enough. 
\end{lem}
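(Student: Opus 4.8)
The plan is to mirror the argument behind Lemma~\ref{lem:matchsubsequnique}, but carefully tracking the role of the ``good'' condition $u - w > |z|$ and the subsequence-uniqueness of the relevant portion of the codeword. Recall that $c = \Enc(x) = z \| y$ with $y = x + g(z)$, and that by Lemma~\ref{lem:subsequnique} (applied to $y$) we know $y$ is $w$-subsequence-unique (with the encoder choosing $z$ precisely to guarantee this). Suppose the first good matching of $y[i-v-w,i-v)$ in $T$ occurs at $T[u-w,u)$, and let $R$ denote the position in $c$ of the bit that appears as $T_{u-1}$, the last bit of this matching. Since the matching is good, $u - w > |z|$, so all $w$ bits of the matched block $T[u-w,u)$ come from positions of $T$ strictly after the portion of $T$ originating from the prefix $z$; hence each of those $w$ bits is the image of some bit of $y$ (not of $z$). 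In particular $R$ is an index inside $y$.

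The core step is then to show that $R \in [i-v,\, i-v+0.1w)$ except with probability at most $n d^{-w/100}$. This is exactly the content of the $w$-subsequence-uniqueness of $y$: if $R = r$ for some $r$, then the block $T[u-w,u) = y[i-v-w,i-v)$ is a subsequence of $y[r-w', r]$ for the appropriate number of preserved bits, and by the contrapositive of Definition~\ref{def:subseq} this forces $r$ to lie in the narrow window $[i-v, i-v+0.1w)$ — otherwise $y[i-v-w,i-v)$ would be a subsequence of some $y[b, b+1.1w)$ with $b$ outside the allowed range, contradicting subsequence-uniqueness. The quantitative bound $n d^{-w/100}$ comes exactly as in the proof of Lemma~\ref{lem:matchsubsequnique} in~\cite{HMPW08}: a union bound over the at most $n$ ``bad'' values of $r$, each contributing a factor that decays like $d^{\Omega(w)}$ because at least $\Omega(w)$ specified bits must be deleted to realize such a matching pattern. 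Finally, since $w = 100\log n$, we have $n d^{-w/100} = n \cdot d^{-\log n} \le n \cdot n^{-\Omega(\log(1/d))} \le 2^{-100w}$ for $d$ a small enough constant, which gives the stated bound.

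The main obstacle — and the reason this is a separate lemma rather than a direct appeal to Lemma~\ref{lem:matchsubsequnique} — is precisely the presence of the prefix $z$, which is \emph{not} part of the subsequence-unique portion of the codeword. A matching of $y[i-v-w,i-v)$ could in principle align partly with the image of $z$ in the trace, and then the subsequence-uniqueness guarantee for $y$ would not apply. The ``good'' restriction $u - w > |z|$ is exactly what rules this out, but one must argue cleanly that conditioning on the matching being good does not distort the relevant conditional distribution of $R$ in a way that breaks the bound; this follows because, conditioned on the location $u$ of the (good) matched block, the positions $R$ of its bits depend only on which bits of $y[i-v-w,\cdot]$ survived, and the combinatorial subsequence argument applies verbatim to this conditional law. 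I would handle this by first stating the deterministic consequence (good matching $\Rightarrow$ the matched block is a subsequence of a length-$1.1w$ window of $y$), then invoking subsequence-uniqueness of $y$ to localize $r$, and only then doing the probabilistic union bound over bad $r$, so that the effect of the ``good'' conditioning is isolated to a single clean step.
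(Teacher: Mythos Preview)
Your high-level strategy is right: the ``good'' condition $u-w > |z|$ ensures all matched bits come from $y$, and then $w$-subsequence-uniqueness of $y$ is what localizes $R$. But there is a genuine gap in the core step.

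You assert as a ``deterministic consequence'' that a good matching forces the matched block $T[u-w,u)$ to be a subsequence of some length-$1.1w$ window of $y$. This is false without an additional hypothesis. The $w$ matched bits come from positions $p_1<\cdots<p_w$ in $y$, but nothing a priori bounds $p_w-p_1$: if many bits were deleted between $p_1$ and $p_w$, the span can be far larger than $1.1w$, and then subsequence-uniqueness (which only constrains windows of length $1.1w$) tells you nothing. Relatedly, your description of the union bound --- ``over the at most $n$ bad values of $r$, each contributing $d^{\Omega(w)}$ because $\Omega(w)$ specified bits must be deleted'' --- is not how the argument goes: for a single bad $r$ there is no fixed set of $\Omega(w)$ bits whose deletion is forced.

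The paper's proof (which follows~\cite{HMPW08}) reverses the order of your two steps. First it bounds the probability of the global bad event ``some window $y[b,b+1.1w)$ suffers more than $0.1w$ deletions''; for each fixed $b$ this has probability at most $d^{\Omega(w)}$, and the union bound is over the $<n$ choices of $b$, not over values of $r$. Then, \emph{on the complementary event}, it argues deterministically: any $w$ consecutive trace bits from $y$ now necessarily lie in a window of length at most $1.1w$, so $y[i-v-w,i-v)$ is a subsequence of some $y[b,b+1.1w)$; $w$-subsequence-uniqueness then forces $b\le i-v-w$ and $b+1.1w\ge i-v$, which pins $R\in[i-v,i-v+0.1w)$. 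In this framing your worry about the ``good'' conditioning distorting distributions also evaporates: one is just bounding the probability of a bad deletion pattern, and outside that pattern \emph{every} good matching has the desired property.
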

\begin{proof}
	The probability that the event in question happens is at most the probability that more than $0.1w$ bits are deleted from some substring $y[b,b+1.1w)$. To see this, first note that the bits in a good matching must come from $y$. If at most $0.1w$ bits are deleted from every substring $y[b,b+1.1w)$, then the $w$ bits of the good matching in $T$ for $y[i-v-w,i-v)$ must be a subsequence of $y[b,b+1.1w)$ for some $b$, which means $y[i-v-w,i-v)$ appears as a subsequence of $y[b,b+1.1w)$. Since $y$ is $w$-subsequence-unique, for this to happen we must have $b\leq i-v-w$ and $b+1.1w\geq i-v$. Now suppose that the last bit of the good matching in $T$ does not come from $y[i-v,i-v+0.1w)$. Then, it must be the case that $y[i-v-w,i-v)$ is a subsequence of $y[b,i-v-1)$. Since $i-v-1< i-v$, this violates the $w$-subsequence-uniqueness propery of $y$.
	
For a fixed $b$, the probability that more than $0.1w$ bits are deleted from $y[b,b+1.1w)$ is at most $d^{-w/100}$ for $d$ small enough. The result then follows by a union bound, since there are fewer than $n$ choices for $b$.
\end{proof}

Let $E_{\textsf{good}}$ denote the event that a good matching occurs in $T$. From Lemma~\ref{lem:goodmatchprob} and the fact that we can efficiently check whether $\Egood$ occurred for $T$, it follows that we can efficiently estimate
\begin{equation*}
	\Pr[V_j=1|\Egood]
\end{equation*}
to within an error of, say, $2^{-100w}$ from $\poly(n)$ traces, with probability at least $1-\exp(-\Omega(n))$. Then, we proceed similarly to Section~\ref{sec:tracerecuniquesubseq}. Let $R$ be the random variable denoting the coordinate in $y$ of the last bit appearing in the good matching within $T$. We may then write
\begin{align*}
	\Pr[V_j=1|\Egood]&=\sum_{r=1}^n\Pr[R=r|\Egood]\Pr[V_j=1|R=r,\Egood]\\
	&=\eps_i(c)+\sum_{r=i-v}^{i-v+0.1w}\Pr[R=r|\Egood]\Pr[V_j=1|R=r]
\end{align*}
for $0\leq \eps_i(c)\leq 2^{-100w}$, by Lemma~\ref{lem:goodmatchstruct}. The second equality follows because, once $R=r$ is fixed, $V$ does not depend on whether $\Egood$ occurs or not, but only depends on the traces of $z$ and $y[1,r]$. Therefore, as in~\eqref{eq:UBoriginal} and~\eqref{eq:LBoriginal} we have
\begin{multline}\label{eq:UB}
	y_i=0\implies \Pr[V_j=1|\Egood]\leq \eps_i(c)+\sum_{r=i-v}^{i-v+0.1w}\Pr[R=r|\Egood]\sum_{\ell=r+1}^{i-1} P(\ell-r,j)s_\ell\\
	+\frac{1}{2}\sum_{r=i-v}^{i-v+0.1w}\Pr[R=r|\Egood]P(i-r,j)
\end{multline}
and
\begin{multline}\label{eq:LB}
y_i=1\implies \Pr[V_j=1|\Egood]\geq \eps_i(c)+\sum_{r=i-v}^{i-v+0.1w}\Pr[R=r|\Egood]\sum_{\ell=r+1}^{i-1} P(\ell-r,j)s_\ell\\
+\sum_{r=i-v}^{i-v+0.1w}\Pr[R=r|\Egood]P(i-r,j).
\end{multline}
Similarly to what was done in Section~\ref{sec:tracerecuniquesubseq}, since $i-r\leq v$ and $v=w/d$, the second part of Lemma~\ref{lem:basicpij} implies that $P(i-r,j)\geq 2^{-9w}$. Combining this result with Lemma~\ref{lem:goodmatchstruct} shows that the gap between the right hand sides of~\eqref{eq:UB} and~\eqref{eq:LB} is at least $2^{-(9w+1)}$. Each term $\Pr[R=r|\Egood]$ can be approximated to within an error of $2^{-90w}$ with probability at least $1-\exp(-\Omega(n))$ in time $\poly(n)$. This is accomplished by first using $z$ and the values $y_1,\dots,y_{i-1}$ that we have already recovered to estimate $\Pr[R=r|\Egood,R<i]$ to within a small enough error and with high probability. Then, the fact that $\Pr[R<i|\Egood]\geq 1-2^{-100w}$ and Lemma~\ref{lem:goodmatchstruct} imply that
\begin{equation*}
	|\Pr[R=r|\Egood]-\Pr[R=r|\Egood,R<i]|\leq 2\cdot 2^{-100w},
\end{equation*}
which in turn implies a good enough approximation for $\Pr[R=r|\Egood]$.

Since we know $y_1,\dots, y_{i-1}$, the discussion above suggests that we can approximate the right hand side of~\eqref{eq:UB} and~\eqref{eq:LB} to within an error of
\begin{equation*}
	2^{-100w}+n^2 \cdot 2^{-90w}\leq 2^{-80w}
\end{equation*}
with high probability. As already mentioned, we can estimate $\Pr[V_j=1|\Egood]$ to within error $2^{-100w}$ from $\poly(n)$ traces in time $\poly(n)$ with probability at least $1-\exp(-\Omega(n))$. Consequently, with probability $1-\exp(-\Omega(n))$ we can recover $y_i$ correctly from $\poly(n)$ traces, where the degree of this polynomial is independent of $i$. The success probability can be made at least $1-\exp(-Cn)$ for any fixed constant $C$ of our choice by repeating the process $O(n)$ times and taking the majority vote. Overall, we must recover fewer than $n$ positions of $y$, and each position requires $\poly(n)$ traces, where the degree of this polynomial is independent of the position of the bit. As a result, the total number of traces required is $\poly(n)$ and the overall success probability is $1-1/\poly(n)$. This proves Theorem~\ref{thm:polytraces}.

\subsection{Using the code within a marker-based construction}\label{sec:mainred}

Next, we combine the constructions from Sections~\ref{sec:marker} and~\ref{sec:lowred} with some additional modifications in order to prove Theorem~\ref{thm:redtrace}, which we restate here.
\begin{thm}[Theorem~\ref{thm:redtrace}, restated]
	For small enough deletion probability, there exists an efficiently encodable code $\cC$ with encoder $\Enc:\bits^n\to\bits^{n+r}$ and redundancy $r=O\left(\frac{n}{\log n}\right)$ that can be efficiently reconstructed from $\poly(\log n)$ traces with probability $1-1/\poly(n)$.
\end{thm}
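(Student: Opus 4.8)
The plan is to instantiate the marker construction of Section~\ref{sec:basicconst} with a variant of the subsequence-unique code of Section~\ref{sec:lowred} as the inner code, so that each block of length $m=\Theta(\log^2 n)$ can be reconstructed from $\poly(m)=\poly(\log n)$ traces instead of the $\exp(O(\log^{2/3}n))$ traces of Theorem~\ref{thm:tracesmarker1}. Concretely, I would keep the markers $M=0^\ell||1^\ell$ with $\ell=\Theta(\log n)$ exactly as before, split the $n$-bit message into blocks $x^{(1)},\dots,x^{(n/\log^2 n)}$ of length $m=\log^2 n$, and encode block $x^{(i)}$ as $\overline{x}^{(i)}=S||z^{(i)}||(x^{(i)}+g(z^{(i)}))$, where $g$ is the generator of Lemma~\ref{lem:almostind} with $k=3w$, $\eps=2^{-10w}$, $w=100\log m=\Theta(\log\log n)$; $z^{(i)}\in\bits^{O(\log\log n)}$ is chosen, exactly as in Lemmas~\ref{lem:subsequnique} and~\ref{lem:manyones} together with a union bound, so that $z^{(i)}||(x^{(i)}+g(z^{(i)}))$ is simultaneously $w$-subsequence-unique and satisfies Property~\ref{prop:manyones}; and $S$ is a \emph{fixed}, explicitly constructed $w$-subsequence-unique string of length $\Theta(v+w)=\Theta(\log\log n)$ all of whose runs have length $<w$ (here $v=w/d$ as in Section~\ref{sec:modtracerec}). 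The per-block overhead is $O(\log\log n)$, so the total redundancy remains dominated by the markers and equals $O(n/\log n)$.

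Since $|S|$ is negligible compared to $\sqrt m$, each $\overline{x}^{(i)}$ still satisfies Property~\ref{prop:manyones} (the surrounding payload has $1$-density bounded away from $1/3$ by Lemma~\ref{lem:manyones}), so Lemma~\ref{lem:splittrace} applies verbatim: with probability $1-1/\poly(n)$ every trace of $\Enc(x)$ yields independent sub-traces $T^{(i)}$, each distributed as a trace of $\sigma^{(i)}:=1^\ell||\overline{x}^{(i)}||0^\ell$. The task thus reduces to reconstructing $\overline{x}^{(i)}$ from $\poly(\log n)$ copies of $T^{(i)}$.

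The heart of the argument is the trace-reconstruction algorithm for the $\sigma^{(i)}$, which must circumvent the broken bootstrapping step: applying Lemma~\ref{lem:recoverfirst} would cost $\exp(O(\ell\, d\log(1/d)))=\poly(n)\gg\poly(\log n)$ traces already to recover the \emph{first} bit of a block, because that bit sits at position $\approx\ell=\Theta(\log n)$ of $\sigma^{(i)}$. The fix is that the first $\ell+|S|$ bits of $\sigma^{(i)}$ form the \emph{known} string $1^\ell||S$, so there is nothing to bootstrap: we declare $\sigma^{(i)}[1,\ell+|S|]$ known and recover $\sigma^{(i)}_i$ for $i>\ell+|S|$ one index at a time using exactly the matching-and-threshold procedure of Section~\ref{sec:modtracerec}, with $1^\ell$ playing the role of the prefix $z$ there. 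The length bound $|S|\ge v+w$ guarantees that the first matching windows $\sigma^{(i)}[i-v-w,i-v)$ lie entirely inside the fixed subsequence-unique string $S$ (rather than inside the all-ones block $1^\ell$, where matchings would be hopelessly ambiguous), so they have a unique good matching located inside the trace of $S$; the requirement that $S$ has runs shorter than $w$ rules out spurious $1^w$-matchings inside the trace of $1^\ell$. As $i$ grows the windows slide into the subsequence-unique payload $z^{(i)}||(x^{(i)}+g(z^{(i)}))$ and the original analysis applies unchanged. For each $i$, the threshold inequalities~\eqref{eq:UB}--\eqref{eq:LB} and their proof carry over verbatim with every $\poly(n)$ quantity replaced by a $\poly(\log n)$ one: $\Pr[V_j=1\mid\Egood]$ can be estimated to error $2^{-100w}=1/\poly(\log n)$ from $\poly(\log n)$ traces, the gap between the two sides is $\ge 2^{-\Theta(w)}=1/\poly(\log n)$, and the few degenerate matchings (the known prefix almost entirely deleted, or the last matched bit failing to come from the expected $0.1w$-window) contribute only a $d^{\Theta(w)}+md^{\Theta(w)}\le 2^{-100w}$ additive error, absorbed into $\eps_i$ exactly as in Lemmas~\ref{lem:matchsubsequnique} and~\ref{lem:goodmatchstruct}.

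Finally, $O(\log n)$-fold repetition and majority vote push the per-index failure probability below $1/n^2$ while still using only $\poly(\log n)$ traces; a union bound over the fewer than $n$ blocks, combined with the $1-1/\poly(n)$ success probability of the marker splitting, gives overall success probability $1-1/\poly(n)$ from $\poly(\log n)$ traces. Plausibly no outer code is needed here (unlike Theorem~\ref{thm:leveltwo}), precisely because the inner reconstruction now fails with probability $1/\poly(n)$ rather than $1/\poly(\log n)$; if one prefers a more robust argument, one may pre-encode $x$ with a code of relative distance $\Omega(1/\log^2 n)$ as in Section~\ref{sec:leveltwo} to absorb an $O(1/\log^2 n)$-fraction of mis-decoded blocks, at no asymptotic cost to the redundancy. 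I expect the main obstacle to be exactly the matching behaviour near the seed region: one must design $S$ (and argue that $z^{(i)}$ can be chosen) so that every matching window encountered before the windows reach the subsequence-unique payload is a non-trivial pattern with a unique good matching anchored at the correct place in the trace, and so that the bookkeeping of which trace bits originate from $1^\ell$, from $S$, from $z^{(i)}$, or from the payload is consistent with the threshold computation.
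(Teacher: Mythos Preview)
Your high-level plan matches the paper's: use the marker construction of Section~\ref{sec:basicconst} with an inner code built from the subsequence-unique encoding of Section~\ref{sec:lowred}, and replace the broken Lemma~\ref{lem:recoverfirst} bootstrap by exploiting known structure at the start of each block. The paper carries this out differently, and two of the differences correspond to genuine gaps in your sketch.

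\textbf{The bootstrap region.} You prepend a fixed $w$-subsequence-unique string $S$ and then the raw seed $z^{(i)}$, both sitting \emph{outside} the masked payload. The paper instead prepends $0^{\ell'}$ (with $\ell'=10\ell=O(\sqrt m)$) to the message \emph{before} masking: it sets $x'=0^{\ell'}\|x$ and $y'=x'+g(z)$. This makes the bootstrap region an integral part of the single $w$-subsequence-unique string $y'$, so the sliding-window argument (Lemmas~\ref{lem:goodmatchstruct} and~\ref{lem:goodmatchstruct2}) applies uniformly to every index. The seed is recovered separately by wrapping it in a systematic edit code, $z'=\Enc_{\mathsf{edit}}(0\|z)$ (Lemma~\ref{lem:recoverz}); once $z$ is known, the first $\ell'$ bits of $y'$ are simply $g(z)[1,\ell']$. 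The per-block redundancy becomes $O(\sqrt m)=O(\log n)$ rather than your $O(\log\log n)$, but this still keeps the total at $O(n/\log n)$.

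In your construction, the string on which matchings are sought is $S\|z^{(i)}\|(x^{(i)}+g(z^{(i)}))\|0^\ell$, and the threshold step needs \emph{global} $w$-subsequence-uniqueness of $S\|z^{(i)}\|(x^{(i)}+g(z^{(i)}))$, not just of $S$ and the payload separately. Your appeal to Lemmas~\ref{lem:subsequnique} and~\ref{lem:manyones} does not give this: those lemmas concern $x+g(U_t)$, not $U_t\|(x+g(U_t))$ (the seed and the masked string are correlated, and the seed bits are not covered by the almost $k$-wise independence), and they say nothing about windows straddling the $S/z^{(i)}$ or $z^{(i)}/\text{payload}$ seams, nor about a window of $S$ matching elsewhere in the payload. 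You correctly flag this as the ``main obstacle''; the paper's device of masking the bootstrap zeros together with the message sidesteps it entirely.

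\textbf{The trailing $0^\ell$.} There is a second issue you do not mention. When recovering the last $\Theta(v)$ bits of the payload, a good matching of $y'[i-v-w,i-v)$ in the trace could land partly inside the trace of the trailing $0^\ell$, invalidating the analogue of Lemma~\ref{lem:goodmatchstruct}. The paper handles this by imposing an extra property at encoding time (Property~\ref{prop:trailing}, Lemma~\ref{lem:trailingsubsequnique}): no length-$w$ substring of $y'$ is a subsequence of any length-$1.1w$ window of $y'\|0^\ell$ lying strictly to its right. This property (together with $w$-subsequence-uniqueness and Property~\ref{prop:manyonessmall}) is what makes Lemma~\ref{lem:goodmatchstruct2} go through. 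Without an analogue of it, your threshold analysis breaks near the right end of each block; the $1$-density guarantee of Property~\ref{prop:manyones} alone does not rule out such spurious matchings.

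In short: same skeleton, but your bootstrap via a fixed external prefix $S$ leaves global subsequence-uniqueness across the $S$/seed/payload seams unproved, and the trailing-$0^\ell$ analysis is missing. The paper's trick of masking the bootstrap zeros together with the message, recovering $z$ via an edit code, and enforcing Property~\ref{prop:trailing} makes both issues disappear cleanly.
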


The basic idea is that we would like to use the code designed in Section~\ref{sec:lowred} as the inner code $\cC'$ for the construction of $\cC$ in Section~\ref{sec:marker}. Then, we could apply the trace reconstruction algorithm from Section~\ref{sec:modtracerec} on each sub-trace and mitigate the use of worst-case trace reconstruction algorithms. This idea does not work as is, but some modifications to the code from Section~\ref{sec:lowred} will allow the construction to go through.

The first issue we have to address is for the inner code $\cC'$ to satisfy Property~\ref{prop:manyones}. If this property holds, then the reasoning of Section~\ref{sec:marker} implies that we can focus on the trace reconstruction problem for strings of the form
\begin{equation}\label{eq:specform}
	1^\ell || c || 0^\ell,
\end{equation}
where $c\in\cC'$ has length $O(\log^2 n)$ and $\ell=O(\log n)$, as long as we use a sub-polynomial number of traces in $n$. From here onwards we focus solely on this setting. If we were to directly apply the trace reconstruction algorithm from Section~\ref{sec:modtracerec}, we would run into a problem. For the aforementioned algorithm to work, we need to bootstrap it by recovering the first few bits of $c$ using the procedure described in Lemma~\ref{lem:recoverfirst}. However, in this case $c$ only appears after a run of length $\ell=O(\log n)$. Even though we know the previous bits, we still require $\poly(n)$ traces to recover the first bit of $c$ in this way, which is not acceptable as we want to use $\poly(\log n)$ traces. Consequently, we need an alternative bootstrapping method. Another issue we need to resolve is that the reconstruction algorithm from Section~\ref{sec:modtracerec} assumed that all but the first few bits of $c$ lead to a subsequence-unique string. However, this is not the case here, as we must deal with a string of the form $c||0^\ell$.

Before we proceed to describe a modified version of our code from Section~\ref{sec:lowred} that avoids the issues raised above, we first prove the following lemma.
\begin{lem}\label{lem:trailingsubsequnique}
	Let $g:\bits^t\to\bits^m$ be the function guaranteed by Lemma~\ref{lem:almostind} with $k=3w$ and $\eps=2^{-10w}$ for $w=100\log m$ (hence $t=O(\log m)$). For arbitrary $\ell$ and $x\in\bits^m$, define the random variable $Y=x+g(U_t)||0^\ell$. Then, with probability at least $1-1/\poly(m)$ we have that $Y$ satisfies the following property.
	\begin{prop}\label{prop:trailing}
		For any $a$ and $b$ such that $a+w\leq \min(m+1,b)$, we have that $Y[a,a+w)$ is not a subsequence of $Y[b,b+1.1w)$.
	\end{prop}
\end{lem}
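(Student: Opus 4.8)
The plan is to run the argument behind Lemma~\ref{lem:subsequnique} almost verbatim, using two structural simplifications afforded by the stronger hypothesis and dealing with one genuinely new point caused by the unbounded trailing length $\ell$. Write $X = x + g(U_t) \in \bits^m$, so $Y = X \| 0^\ell$ and $X$ is $\eps$-almost $3w$-wise independent. Because $a + w \le m+1$, the source window satisfies $Y[a,a+w) = X[a,a+w)$, i.e.\ it lies entirely in the ``random part''; and because $a + w \le b$, the source window sits strictly to the left of the target window $Y[b,b+1.1w)$ and the two are disjoint. The second fact removes the inductive ``$\overline y$'' step of Lemma~\ref{lem:subsequnique}: a witness that $Y[a,a+w)$ is a subsequence of $Y[b,b+1.1w)$ is just an increasing index set $\cS = \{i_1 < \cdots < i_w\} \subseteq [b,b+1.1w)$ with $Y_{i_j} = Y_{a+j-1}$ for every $j$.

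First I would fix $a$, $b$ (with $a+w \le \min(m+1,b)$) and such a set $\cS$, and bound the probability that $Y_{i_j} = Y_{a+j-1}$ for all $j$. Split $\cS = \cS_{\mathsf{rand}} \cup \cS_{\mathsf{zero}}$ according to whether $i_j \le m$ or $i_j > m$. Conditioning on the value $y$ of $X[a,a+w)$, the constraint forces $y_j = 0$ at every coordinate with $i_j \in \cS_{\mathsf{zero}}$ (so only $2^{|\cS_{\mathsf{rand}}|}$ values of $y$ are feasible) and otherwise asks that $X[a,a+w) = y$ together with $X_{i_j} = y_j$ on $\cS_{\mathsf{rand}}$ --- in total $w + |\cS_{\mathsf{rand}}| \le 2w < 3w$ \emph{distinct} coordinates of $X$, the disjointness coming from $b > a+w-1$. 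Almost $3w$-wise independence bounds each such probability by $2^{-(w+|\cS_{\mathsf{rand}}|)} + 2^{3w}\eps$, and summing over the $2^{|\cS_{\mathsf{rand}}|}$ feasible $y$'s gives at most $2^{-w} + 2^{4w}\eps \le 2^{-w+1}$, exactly the per-witness bound of Lemma~\ref{lem:subsequnique}. This single calculation already covers the extreme case $b \ge m+1$, where $\cS_{\mathsf{rand}} = \emptyset$ and the event degenerates to $X[a,a+w) = 0^w$.

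The new point is the union bound: since $\ell$ is arbitrary, $b$ may range up to roughly $m+\ell$, which could be super-polynomial in $m$, so one cannot naively sum over all pairs $(a,b)$. The observation that fixes this is that, for a fixed $a$, once $b \ge m+1$ the target window is all zeros and the event ``$Y[a,a+w)$ is a subsequence of $Y[b,b+1.1w)$'' collapses to the \emph{single} event $\{X[a,a+w) = 0^w\}$ (or is outright impossible if the window is shorter than $w$), independently of which large $b$ we took. Hence for each $a$ there are at most $m$ genuinely distinct bad events, so at most $m^2$ in total; a further union bound over the $\binom{1.1w}{w} \le (11e)^{0.1w}$ choices of $\cS$, together with $w = 100\log m$ and $\eps = 2^{-10w}$, yields a failure probability at most $m^2 \binom{1.1w}{w} 2^{-w+1} \le m^{-45}$ for $m$ large, i.e.\ $1 - 1/\poly(m)$ as claimed. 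The only part requiring any thought is precisely this collapsing observation; everything else is a direct and somewhat simplified rerun of the proof of Lemma~\ref{lem:subsequnique}.
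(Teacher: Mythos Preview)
Your proof is correct and follows essentially the same strategy as the paper: bound the probability that a fixed triple $(a,b,\cS)$ witnesses the bad event using almost $3w$-wise independence, then take a union bound. The paper conditions on the random part $Y[b,u)$ of the \emph{target} window (with $u=\min(m+1,b+1.1w)$), whereas you condition on the \emph{source} window $X[a,a+w)$ and split $\cS$ into its random and zero parts; both decompositions lead to the same $2^{-w+1}$ per-witness bound, and your version is slightly cleaner because the disjointness hypothesis $a+w\le b$ removes the need for the inductive ``$\overline y$'' argument from Lemma~\ref{lem:subsequnique}.

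On one point you are actually more careful than the paper. The paper's union bound simply asserts ``fewer than $m^3$ choices for pairs $(a,b)$'', which is not literally justified when $\ell$ is allowed to be arbitrary (and hence $b$ can range up to roughly $m+\ell$). Your observation that for $b\ge m+1$ the target window is all zeros and the bad event degenerates to the single event $\{X[a,a+w)=0^w\}$ is exactly what is needed to make the union bound go through for arbitrary $\ell$; this collapses the range of $b$ to at most $m+1$ genuinely distinct cases per $a$, and the rest of the calculation matches the paper's $m^{-45}$ conclusion.
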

\begin{proof}
	Fix a pair $(a,b)$ satisfying $a+w\leq \min(m+1,b)$ and let $\cS\subseteq [b,b+1.1w)$ be a set of size $w$. Let $u=\min(m+1,b+1.1w)$. Then, we have
	\begin{align*}
	\Pr[Y[a,a+w)=Y_\cS]=\sum_y \Pr[Y[a,a+w)=Y_\cS, Y[b,u)=y, Y[m+1,b+1.1w)=(0,\dots,0)].
	\end{align*}
	Observing that $Y_\cS$ is completely determined by $Y[b,u)$ and $Y[m+1,b+1.1w)$ and that $Y[m+1,b+1.1w)$ is fixed, we have
	\begin{align*}
	\Pr[Y[a,a+w)=Y_\cS]=\sum_y \Pr[Y[a,a+w)=y', Y[b,u)=y]
	\end{align*}
	for some $y'$ determined by $y$. Since $x+g(U_t)$ is $\eps$-almost $3w$-wise independent and fewer than $3w$ coordinates are fixed, we have
	\begin{equation*}
	\Pr[Y[a,a+w)=y', Y[b,u)=y]\leq 2^{-w-(u-b)}+2^{3w}\eps.
	\end{equation*}
	Therefore, it follows that
	\begin{equation*}
	\Pr[Y[a,a+w)=Y_\cS]\leq 2^{u-b}(2^{-w-(u-b)}+2^{3w}\eps)\leq 2^{-w}+2^{4.1w}\eps\leq 2^{-w+1}.
	\end{equation*}
	Since there are fewer than $m^3$ choices for pairs $(a,b)$ and $\binom{1.1w}{w}$ choices for $\cS$, from the union bound we conclude similarly to what we did in the proof of Lemma~\ref{lem:subsequnique} that the probability that the desired event does not happen is at most $m^3 \binom{1.1w}{w}2^{-w+1}\leq m^{-45}$.
\end{proof}
Intuitively, Lemma~\ref{lem:trailingsubsequnique} guarantees that $x+g(U_t)$ satisfies a stronger form of subsequence-uniqueness with high probability. In fact, not only is $x+g(U_t)$ $w$-subsequence-unique with high probability based on Lemma~\ref{lem:subsequnique}, but also is it impossible to find a substring of $x+g(U_t)$ that is a subsequence of $x+g(U_t)||0^\ell$ elsewhere.

We are now ready to describe our modified inner code $\cC'$ with encoder $\Enc':\bits^m\to \bits^{m+r'}$. On an input message $x\in\bits^{m}$, $\Enc'$ operates as follows:
\begin{enumerate}
	\item Set $x'=0^{\ell'}||x$ for $\ell'=10\ell=O(\sqrt{m})$. Let $m'=|x'|$ and set $w=100\log m'$;
	\item Iterate over all $z\in\bits^t$ for $t=O(\log m')=O(\log m)$ until a $z$ such that $x'+g(z)$ is $w$-subsequence-unique and simultaneously satisfies Properties~\ref{prop:manyonessmall} and~\ref{prop:trailing} is found. Such a string $z$ is guaranteed to exist because all such properties hold for $x'+g(U_t)$ with probability $1-o(1)$ (see Lemmas~\ref{lem:manyones},~\ref{lem:subsequnique}, and~\ref{lem:trailingsubsequnique}). Moreover, whether $x'+g(z)$ satisfies all three properties can be checked in time $\poly(m)$;
	
	\item Obtain $z'$ from $z$ by setting $z'=\Enc_{\mathsf{edit}}(0||z)$, where $\Enc_{\mathsf{edit}}$ is the encoder of the systematic code $\cC_{\mathsf{edit}}$ from Lemma~\ref{lem:systcode} robust against $|z|/2$ edit errors and with redundancy $O(|z|)=O(\log m)$. Here, $d$ is assumed to be a small enough constant so that $5d|z'|<|z|/2$, i.e., $\cC_{\mathsf{edit}}$ can correct a $5d$-fraction of edit errors in $z'$. This is possible because $|z'|=O(|z|)$;
	\item Define $\Enc'(x)=z'||x'+g(z)=z'||y'$.
\end{enumerate}
For a given message $x\in\bits^m$, we can compute $\Enc'(x)$ in time $\poly(m)$. Furthermore, recalling that $m=\log^2 n$ in the construction of Section~\ref{sec:basicconst}, the redundancy of $\cC'$ is
\begin{equation*}
	r'=|z'|+\ell'=O(\log m+\sqrt{m})=O(\sqrt{m})=O(\log n).
\end{equation*}
If we use $\cC'$ as the inner code in the construction of $\cC$ from Section~\ref{sec:basicconst}, then according to~\eqref{eq:redmarker} we obtain an overall redundancy $r=O\left(\frac{n}{\log n}\right)$ for $\cC$, as desired. It is also easy to see that $\cC'$ satisfies Property~\ref{prop:manyones}. By the choice of $z$, we have $w(y'[a,a+w))\geq 0.4w$ for every $a$ and $w=100\log m'$. Therefore, for any substring $s$ such that $|s|=\sqrt{m}$ we have
\begin{equation*}
w(s)\geq 0.4|s|-|z'|\geq 0.39|s|
\end{equation*}
provided $m$ is large enough, since $|z'|=O(\log m)$. As a result, the reasoning used in Section~\ref{sec:basicconst} applies to this choice of $\cC'$. To prove Theorem~\ref{thm:redtrace}, it remains to give a trace reconstruction algorithm to recover strings of the form $1^\ell ||\Enc'(x)||0^\ell$ from $\poly(m)=\poly(\log n)$ traces with probability, say, $1-n^{-10}$.

To address the problem, suppose we already have such an algorithm, and call it $\mathcal{A}$. Recall~\eqref{eq:genboundfail} and the definition of the event $E^{(i)}_\mathsf{indFail}$ from Section~\ref{sec:basicconst}. Instantiating $E^{(i)}_\mathsf{indFail}$ with algorithm $\mathcal{A}$ leads to the bound $\Pr[E^{(i)}_\mathsf{indFail}]\leq n^{-10},$ for all $i$. Combining this observation with~\eqref{eq:genboundfail} allows us to conclude that the probability that we successfully recover $c\in\cC$ from $\poly(\log n)$ i.i.d.\ traces of $c$ is at least $1-2/n$. Similarly to Section~\ref{sec:basicconst}, we can boost the success probability to $1-1/p(n)$ for any fixed polynomial of our choice by repeating the process $O(\log n)$ times and by taking a majority vote.

\subsubsection{The trace reconstruction algorithm}

Next, we analyze an algorithm for recovering strings of the form $1^\ell ||\Enc'(x)||0^\ell$ from $\poly(m)=\poly(\log n)$ traces with probability $1-1/\poly(n)$. As discussed before, we proceed by adapting the algorithm from Section~\ref{sec:modtracerec}, which in turn is a modified version of the algorithm from~\cite{HMPW08} described in Section~\ref{sec:tracerecuniquesubseq}.

The main difference between the current and the two previously discussed settings is that the original bootstrapping technique cannot be applied, 
as $\Enc'(x)$ is enclosed by two long runs. We start by showing that the structure of $\Enc'$ allows for a simple alternative bootstrapping method.

Recall that $c=\Enc'(x)=z'||y'$, where $y'=x'+g(z)$ and the first $O(\sqrt{m})$ bits of $x'$ are zero. Therefore, if we can recover $z$ from a few traces of $1^\ell ||c||0^\ell$, then we can recover the first $O(\sqrt{m})$ bits of $y'$, which suffices for bootstrapping, by simply computing $g(z)$. The following lemma states that we can recover $z$ with high probability from $O(\log n)$ traces.
\begin{lem}\label{lem:recoverz}
	There is an algorithm that recovers $z$ from $O(\log n)$ traces of $1^\ell ||c||0^\ell$ with probability at least $1-n^{-10}$.
\end{lem}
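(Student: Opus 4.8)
## Proof proposal for Lemma~\ref{lem:recoverz}

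The plan is to exploit the special structure of $c = \Enc'(x) = z'||y'$, where $z' = \Enc_{\mathsf{edit}}(0||z)$ is a codeword of the systematic edit-correcting code $\cC_{\mathsf{edit}}$ from Lemma~\ref{lem:systcode} sitting at the very front of $c$, which in turn is preceded only by the run $1^\ell$. The key observation is that $z'$ occupies positions roughly $\ell+1$ through $\ell+|z'| = O(\log n)$ of the string $1^\ell||c||0^\ell$, so every bit of $z'$ is among the first $O(\log n)$ bits of the transmitted word. This means we could in principle recover $z'$ bit-by-bit via the bootstrapping procedure of Lemma~\ref{lem:recoverfirst}; however, that would cost $\exp(O(|z'|\cdot d\log(1/d))) = \poly(n)$ traces, which is too many. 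Instead, the plan is to recover a \emph{noisy} version of $z'$ from a \emph{single} trace with decent probability, and then use the edit-error-correction guarantee of $\cC_{\mathsf{edit}}$ to clean it up, repeating $O(\log n)$ times to boost the success probability.

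Concretely, I would proceed as follows. First, observe that in a single trace $T$ of $1^\ell||c||0^\ell$, with probability at least $1 - n^{-\Omega(1)}$ the leading run $1^\ell$ is not completely deleted and retains length at least, say, $\ell/4$ (by a Chernoff bound, since $\ell = 50\log n$ and $d$ is a constant bounded away from $1$; in fact $d$ small here only helps). Conditioned on this, the bits of $T$ immediately following the final $1$ of the (surviving prefix of the) leading run form a trace of $z'||y'$, and in particular the first $|z'|$ or so of them constitute a trace of $z'$ possibly with a few extra bits from $y'$ appended. Now, the number of bits deleted from the block $z'$ in a single trace is distributed as $\mathsf{Bin}(|z'|, d)$, which by a Chernoff bound is at most $5d|z'|$ with probability at least $1 - \exp(-\Omega(|z'|)) = 1 - n^{-\Omega(1)}$ (using $|z'| = \Omega(\log n)$ and $d$ a small constant). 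Therefore, with probability $1 - n^{-\Omega(1)}$, the candidate string we extract from $T$ — namely, the first $|z'|$ bits of $T$ after the leading run, or a slightly more careful window of length between $|z'| - 5d|z'|$ and $|z'|$ — differs from $z'$ by at most $5d|z'| < |z|/2$ edit errors (deletions of bits of $z'$, plus at most $5d|z'|$ spurious bits, which are insertions relative to $z'$; one has to be a little careful to account for both, but choosing $d$ small enough so that $10d|z'| < |z|/2$ handles it). We then apply $\Dec_{\mathsf{edit}}$ to this candidate string (after truncating/padding it to a plausible length) and recover $0||z$ exactly, hence $z$, by the correctness guarantee of Lemma~\ref{lem:systcode}.

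To finish, I would repeat this on each of $O(\log n)$ independent traces and take the majority vote of the decoded values of $z$. Each trace independently yields the correct $z$ with probability at least $1 - n^{-\Omega(1)} > 2/3$, so by a standard Chernoff bound the majority is correct with probability at least $1 - n^{-10}$ (the hidden constant in the number of traces can be tuned to hit any desired polynomial success probability). Once $z$ is known, $g(z)$ is computed in polynomial time, and since the first $\ell' = 10\ell = O(\sqrt m)$ bits of $x'$ are zero by construction, the first $O(\sqrt m)$ bits of $y' = x' + g(z)$ are simply the first $O(\sqrt m)$ bits of $g(z)$, providing the bootstrap needed for the main reconstruction algorithm. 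The main obstacle I anticipate is the careful bookkeeping in extracting the right window from the trace: one does not know in advance exactly how many bits of $z'$ survived, so the candidate substring read off from $T$ has uncertain length and an uncertain number of trailing bits borrowed from $y'$; the fix is to observe that all of these discrepancies (missing bits of $z'$, extra bits from $y'$, off-by-a-few alignment at the end of the leading run) are edit errors totalling $O(d|z'|)$ with high probability, which is within the correction radius of $\cC_{\mathsf{edit}}$ once $d$ is a small enough constant.
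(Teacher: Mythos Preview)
Your approach is essentially the same as the paper's: extract a noisy copy of $z'$ from a single trace using the leading $1^\ell$ run as a delimiter, decode via $\cC_{\mathsf{edit}}$, and majority-vote over $O(\log n)$ repetitions. Two small corrections are worth noting. First, $|z'|=\Theta(\log m)=\Theta(\log\log n)$, not $\Theta(\log n)$, so your Chernoff bound on deletions inside $z'$ only yields a constant per-trace success probability rather than $1-n^{-\Omega(1)}$; this is precisely what the paper gets (failure at most $d+1/5<1/4$) and is why the $O(\log n)$-trace majority vote is genuinely necessary. Second, the paper pins down the start of $z'$ more cleanly than your ``end of the surviving leading run'' description by using that $z'_1=0$ (since $\cC_{\mathsf{edit}}$ is systematic and encodes $0\|z$), so the first $0$ in the trace is $z'_1$ with probability $1-d$; this replaces your alignment bookkeeping with a single clean event.
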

\begin{proof}
	We begin by recalling that $z'=\Enc_{\mathsf{edit}}(0||z)$, and that $\cC_{\mathsf{edit}}$ is systematic. This means $z'_1=0$, and so with probability $1-d$ the first $0$ appearing in the trace will correspond to $z'_1$. 
	
	Given a trace $T$ of $1^\ell ||c||0^\ell$, we proceed as follows: Let $u$ denote the position of the first $0$ in $T$. Then, we take $\tilde{z}=T[u,u+(1-d)|z'|)$, feed $\tilde{z}$ into $\Dec_{\mathsf{edit}}$, and let the corresponding output be our guess for $z$. The probability that this procedure fails to yield $z$ is at most the probability that $z'_1$ was deleted, plus the probability that $\tilde{z}$ is too far away in edit distance from $z'$ given that $z'_1$ was not deleted. We proceed to bound both probabilities. First, the probability that $z'_1$ is deleted is exactly $d$. Second, we assume $z'_1$ is not deleted and let $L$ denote the length of the trace of $z'[2,\cdot]$ within $T$. We have $\E[L]=(1-d)(|z'|-1)$. Therefore, a Chernoff bound gives
	\begin{equation*}
		\Pr[L\geq (1-3d)(|z'|-1)]\leq \exp\left(-\frac{2d^2}{1-d}(|z'|-1)\right).
	\end{equation*}
	Since $d$ is a constant and $|z'|=\Theta(\log m)$, we conclude that for $m$ large enough we have
	\begin{equation*}
		\Pr[|L-(1-d)(|z'|-1)|\geq 2d(|z'|-1)]<1/5.
	\end{equation*}
	 As a result, with probability at least $4/5$ we have that $\tilde{z}$ is within edit distance $5d|z'|<|z|/2$ from $z'$. If this distance condition holds, then $\Dec_{\mathsf{edit}}(\tilde{z})=z$.
	
	In summary, the procedure fails to return $z$ with probability at most $d+1/5<1/4$ if $d$ is small enough. Repeating this procedure $O(\log n)$ times and taking the majority vote ensures via a Chernoff bound that we can recover $z$ from $O(\log n)$ traces with success probability at least $1-1/p(n),$ for $p$ any choice of a fixed polynomial.
\end{proof}

Once $z$ has been recovered, the bits of $1^\ell ||c||0^\ell=1^\ell ||z'||y'||0^\ell$ are known up to and including the first $\ell'=O(\sqrt{m})$ bits of $y'$. Our last task is to recover the remaining bits of $y'$, and given that we have sufficiently many initial bits from $y'$ we may to this end use the ideas from Section~\ref{sec:modtracerec}. The differences with respect to Section~\ref{sec:modtracerec} are the following:
\begin{itemize}
	\item Instead of $y$, we use $y''=y'||0^\ell$;
	\item We are only interested in recovering $y''_i$ for $\ell'<i\leq |y'|$, since we already know all other bits of $y''$;
	\item We change the threshold used to declare that a matching is good: In this case, if $T$ is a trace of $1^\ell ||c||0^\ell$ and $y''[i-v-w,i-v]$ is matched with $T[u-w,u]$, then the matching is said to be \emph{good} if $u-w>\ell+|z'|$. This change ensures that the bits in a good matching always come from $y''=y'||0^\ell$.
\end{itemize}
Two key lemmas now follow from the previous discussion. Their statements and proofs are similar to the ones of Lemmas~\ref{lem:goodmatchprob} and~\ref{lem:goodmatchstruct} from Section~\ref{sec:modtracerec}, respectively, and we hence only discuss relevant differences. 
Henceforth, we use $T$ to denote a trace of $1^\ell ||c||0^\ell$.
\begin{lem}\label{lem:goodmatchprob2}
	The probability that a good matching occurs in $T$ is at least $2^{-(w+1)}$.
\end{lem}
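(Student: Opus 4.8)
The plan is to follow the proof of Lemma~\ref{lem:goodmatchprob} almost verbatim, the only new point being that a good matching must now start strictly after the entire prefix $1^\ell||z'$ of $1^\ell||c||0^\ell$, which has length $\ell+|z'|=O(\sqrt m)$ rather than just $\ell$. What makes this harmless is that we only invoke the lemma when recovering $y''_i$ for $i>\ell'=10\ell$, so in the source string the block $y''[i-v-w,i-v)$ is preceded by at least $\ell'-v-w\ge 9\ell$ bits of $y'$ on top of the $\ell+|z'|$ bits of the prefix, since $v,w,|z'|=O(\log m)=o(\ell)$.

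Concretely, I would first let $A$ be the event that none of the $w$ bits of $y''[i-v-w,i-v)$ is deleted; then $\Pr[A]=(1-d)^w\ge 2^{-w}$, and conditioned on $A$ these bits appear as a contiguous block $T[U-w,U)$ of the trace, hence form a matching of $y''[i-v-w,i-v)$ in $T$. By definition this matching is good exactly when $U-w>\ell+|z'|$. Next I would observe that $U-w$ equals, up to an additive constant, the number of surviving bits among the $N:=\ell+|z'|+(i-v-w-1)$ bits of $1^\ell||c||0^\ell$ preceding the block, so $U-w$ is (up to that constant) distributed as $\mathsf{Bin}(N,1-d)$; moreover this count is independent of $A$, as $A$ concerns a disjoint set of coordinates.

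It then remains to show $\Pr[U-w\le \ell+|z'|]<1/2$. Since $i>\ell'=10\ell$ while $v,w,|z'|=o(\ell)$, for $m$ large enough we have $N\ge 10\ell$ and $\ell+|z'|\le 2\ell$. Using that $\mathsf{Bin}(N,1-d)$ stochastically dominates $\mathsf{Bin}(10\ell,1-d)$ when $N\ge 10\ell$, we get $\Pr[U-w\le \ell+|z'|]\le \Pr[\mathsf{Bin}(10\ell,1-d)\le 2\ell]$, and for $d\le 1/2$ the mean of $\mathsf{Bin}(10\ell,1-d)$ is at least $5\ell$, so a standard Chernoff bound gives $\Pr[\mathsf{Bin}(10\ell,1-d)\le 2\ell]\le \exp(-\Omega(\ell))$, which is below $1/2$ once $\ell=\Theta(\log n)$ is large. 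Combining, a good matching occurs with probability at least $\Pr[A]\cdot\Pr[U-w>\ell+|z'|\mid A]\ge 2^{-w}\cdot\tfrac12=2^{-(w+1)}$.

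I do not anticipate a genuine obstacle: this is the argument of Lemma~\ref{lem:goodmatchprob} with a longer, but still $O(\sqrt m)$, initial segment to skip, and the only thing to verify is that the slack $\ell'-\ell-|z'|-w$ between the decoding cutoff $\ell'$ and the prefix length is positive and large enough for the Chernoff estimate to force the constant $1/2$. The choice $\ell'=10\ell$ together with the standing assumption that $d$ is a small enough constant takes care of this comfortably.
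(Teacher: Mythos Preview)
Your proposal is correct and takes essentially the same approach as the paper, which omits an explicit proof of this lemma and simply refers back to Lemma~\ref{lem:goodmatchprob}. Your adaptation is exactly what is intended: the only change is that the prefix to be skipped is now $1^\ell||z'$ of length $\ell+|z'|=O(\sqrt m)$, and the condition $i>\ell'=10\ell$ together with $v,w,|z'|=o(\ell)$ provides more than enough preceding bits of $y'$ for the Chernoff bound to force $\Pr[U-w\le \ell+|z'|]<1/2$.
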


\begin{lem}\label{lem:goodmatchstruct2}
	For $\ell'<i\leq |y'|$, the probability that the last bit of a good matching in $T$ does not come from $y''[i-v,i-v+0.1w]$ is at most $nd^{-w/100}\leq 2^{-100w}$ if $d$ is small enough. 
\end{lem}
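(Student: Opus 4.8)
The plan is to mimic the proof of Lemma~\ref{lem:goodmatchstruct}, but using the strengthened subsequence-uniqueness guaranteed by Property~\ref{prop:trailing} to handle the fact that the relevant substring of $y''=y'||0^\ell$ may now be matched against a window that overlaps the trailing run of zeros. Recall that by the choice of $z$ in the encoder $\Enc'$, the string $y'=x'+g(z)$ is $w$-subsequence-unique \emph{and} satisfies Property~\ref{prop:trailing}, i.e., no window $y'[a,a+w)$ is a subsequence of $y'[b,b+1.1w)$ whenever $a+w\le\min(m'+1,b)$ (where $m'=|y'|$). Since $y''$ extends $y'$ by a run of zeros, Property~\ref{prop:trailing} in particular tells us that no window $y''[a,a+w)$ lying entirely inside $y'$ (i.e., $a+w\le m'+1$) can reappear as a subsequence of any later window $y''[b,b+1.1w)$, even one extending into the $0^\ell$ tail.

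First I would observe that, by the definition of a \emph{good} matching ($u-w>\ell+|z'|$), every bit of the matched length-$w$ block in $T$ originates from $y''$ rather than from the prefix $1^\ell||z'$; this is exactly the purpose of the modified goodness threshold, and it is what lets the rest of the argument proceed purely in terms of $y''$. Next, following Lemma~\ref{lem:goodmatchstruct}, I would bound the failure probability by the probability of the ``bad deletion'' event: that more than $0.1w$ bits are deleted from some window $y''[b,b+1.1w)$. For each fixed $b$ this has probability at most $d^{-w/100}$ for $d$ small (a Chernoff/tail bound on $\mathsf{Bin}(1.1w,d)$), and there are fewer than $n$ choices of $b$, so a union bound gives the claimed $nd^{-w/100}\le 2^{-100w}$ for $d$ small enough (using $w=100\log m'=\Theta(\log\log n)$... actually $w=\Theta(\log m)=\Theta(\log\log n)$, but the bound $nd^{-w/100}$ is stated in terms of $n$ and the inequality $nd^{-w/100}\le 2^{-100w}$ holds for $d$ a small enough constant, as in the original).

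The remaining step is to show that, \emph{off} this bad-deletion event, the last bit of a good matching of $y''[i-v-w,i-v)$ must come from $y''[i-v,i-v+0.1w)$. Here is where Property~\ref{prop:trailing} does the work. If no window loses more than $0.1w$ bits, then the $w$ matched bits in $T$ (which all come from $y''$) must be a subsequence of some window $y''[b,b+1.1w)$. Since $i-v-w\le |y'|-w+1\le m'+1$ (we are only recovering $y''_i$ for $\ell'<i\le |y'|$, and $v=w/d$, so $i-v-w+w = i-v \le |y'|$, giving $(i-v-w)+w\le\min(m'+1, \cdot)$ once we also check $i-v-w$ is far enough below $b$), Property~\ref{prop:trailing} — together with $w$-subsequence-uniqueness for the within-$y'$ case — forces $b\le i-v-w$ and $b+1.1w\ge i-v$. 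If moreover the last matched bit did \emph{not} come from $y''[i-v,i-v+0.1w)$, it would come from $y''[b,i-v-1)$, making $y''[i-v-w,i-v)$ a subsequence of $y''[b, i-v-1)$ with $(i-v-w)+w = i-v \le \min(m'+1, i-v-1)$ false... I would instead phrase it exactly as in Lemma~\ref{lem:goodmatchstruct}: the last bit coming from before position $i-v$ makes $y''[i-v-w,i-v)$ a subsequence of $y''[b,i-v-1)$, and since $i-v-1<i-v$ this contradicts the (strengthened) subsequence-uniqueness of $y''$ provided by Property~\ref{prop:trailing}.

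The main obstacle I anticipate is purely bookkeeping: carefully checking that the index ranges $(a,b)=(i-v-w,\,b)$ arising from a good matching always fall within the hypothesis $a+w\le\min(m'+1,b)$ of Property~\ref{prop:trailing}, so that the property is actually applicable. This requires using $v=w/d$, the restriction $\ell'<i\le|y'|$, and the goodness threshold $u-w>\ell+|z'|$ in combination; none of it is deep, but it must be done precisely. Everything else transfers verbatim from the proof of Lemma~\ref{lem:goodmatchstruct}.
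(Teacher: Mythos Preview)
Your proposal is correct and follows essentially the same approach as the paper: reduce to the bad-deletion event, use the goodness threshold to ensure all matched bits lie in $y''$, and then combine $w$-subsequence-uniqueness of $y'$ with Property~\ref{prop:trailing} to rule out bad locations for the matching. The paper organizes the ``bookkeeping'' you flag as the main obstacle by an explicit case split on whether $b+1.1w>|y'|$ (handled by Property~\ref{prop:trailing}, using $i-v\le |y'|-w/d\le |y'|-1.1w\le\min(|y'|+1,b)$) or $b+1.1w\le |y'|$ (handled by $w$-subsequence-uniqueness of $y'$ exactly as in Lemma~\ref{lem:goodmatchstruct}); framing it this way avoids the index-range hesitations in your sketch.
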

\begin{proof}
	Similarly to the proof of Lemma~\ref{lem:goodmatchstruct}, the probability of the event in the statement of the lemma is upper bounded by the probability that more than $0.1w$ bits are deleted from some substring $y''[b,b+1.1w)$. We explain next why this is true. First, note that the bits in a good matching must come from $y''$. Suppose that at most $0.1w$ bits are deleted from every substring $y''[b,b+1.1w)$. Then, $y''[i-v-w,i-v)$ must be a subsequence of $y''[b,b+1.1w)$ for some $1\leq b\leq |y''|-1.1w$. We distinguish two cases:
	\begin{itemize}
		\item $b+1.1w>|y'|$:
		
		Recalling that $v=w/d$, we have $i-v\leq |y'|-w/d\leq |y'|-1.1w\leq \min(|y'|+1,b)$, and so Property~\ref{prop:trailing} holds for $y''[i-v-w,i-v)$. Therefore, $y''[i-v-w,i-v)$ cannot be a subsequence of $y''[b,b+1.1w)$ for any $b$ such that $b+1.1w>|y'|+1$. Consequently, we only need to consider values of $b$ such that $b+1.1w\leq |y'|$;
		
		\item $b+1.1w\leq |y'|$: 
		
		Since $y'$ is $w$-subsequence-unique, we must have $b\leq i-v-w$ and $b+1.1w\geq i-v$. This implies the desired result as in the proof of Lemma~\ref{lem:goodmatchstruct};
	\end{itemize}
	The remainder of the proof follows along the lines of the proof of Lemma~\ref{lem:goodmatchstruct}.
\end{proof}

Lemmas~\ref{lem:goodmatchprob2} and~\ref{lem:goodmatchstruct2} imply that we can recover $y''_i$ with probability $1-1/\poly(n)$ via the same reasoning of Section~\ref{sec:modtracerec} with the small differences described above. The number of traces required to recover $y''_i$ is polynomial in the length of $1^\ell ||c||0^\ell$, which equals
\begin{equation*}
	2\ell + |z'|+|y'|=O(\sqrt{m}+\log m+m)=O(m).
\end{equation*}
Since $m=\log^2 n$, it follows that we can recover $y''_i$ with probability $1-1/\poly(n)$ from $\poly(\log n)$ traces. In particular, the success probability can be assumed to be at least $1-1/p(n)$ for a fixed polynomial of our choice since we can repeat the process $O(\log n)$ times and take the majority vote while still requiring $\poly(\log n)$ traces. Since Lemma~\ref{lem:recoverz} asserts that $O(\log n)$ traces suffice to recover $z$ with high probability, and we need to recover $m=\log^2 n$ bits of $y''$, we overall require $\poly(\log n)$ traces to recover $1^\ell ||c||0^\ell$ with probability $1-1/\poly(n)$. This concludes the proof of Theorem~\ref{thm:redtrace}.

\section{Open problems}\label{sec:open}

The newly introduced topic of coded trace reconstruction can be extended in many different directions, in which the following problems are of interest:
\begin{itemize}
	\item Coded trace reconstruction based on a more general set of edit errors. Our constructions only work against i.i.d.\ deletions. However, errors imposed by the nanopore sequencing process in DNA-based storage also include insertions and substitutions, and the errors may be symbol and context-dependent;
	
	\item Narrowing the gap between upper and lower bounds on the number of traces used in reconstruction. The current gap between lower and upper bounds for trace reconstruction is almost exponential in the string length, and there is no study of lower bounds on coded trace reconstruction as a function of the redundancy of the code. We suspect that our current construction of efficient codes are suboptimal;
	
	\item The main contributors to the redundancy in all but one of our constructions are the markers placed between blocks. It is of interest to find other constructions that either do not use markers, or that use shorter markers that can still be integrated into the trace reconstruction process.
\end{itemize}

\bibliographystyle{IEEEtran}
\bibliography{trace-refs}

\appendix

\section{Tradeoff between redundancy and relative Hamming distance for efficiently encodable/decodable binary codes}\label{app:hamming}

We establish next the existence of efficiently encodable and decodable codes $\cC_{\mathsf{Ham}}\subseteq\bits^n$ with encoder $\Enc_{\mathsf{Ham}}:\bits^{n_0}\to\bits^n$, relative Hamming distance at least $30/\log^2 n_0$, and redundancy $n-n_0=O\left(n_0\frac{\log\log n_0}{\log n_0}\right)$. We make use of the Zyablov bound~\cite[Sections 10.2]{GRS18}, which states that an efficiently encodable and decodable binary code with rate $R$ and relative Hamming distance $\delta$ exists for $R$ and $\delta$ satisfying
\begin{equation*}
	R\geq \max_{0<r<1-h(\delta+\eps)} r\left(1-\frac{\delta}{h^{-1}(1-r)-\eps}\right),
\end{equation*}
where $\eps>0$ is arbitrary. This bound is achieved by concatenating a Reed-Solomon outer code with an inner linear code lying on the Gilbert-Varshamov bound. The relevant generator matrices can be constructed in time polynomial in the block length of the concatenated code, and one can efficiently correct substitution errors up to half of its designed distance via generalized minimum distance decoding~\cite[Section 11.3]{GRS18}.

We set $\delta=30/\log^2 n_0$ and $\eps=1/\log^2 n_0$. Then,
\begin{equation*}
	h(\delta+\eps)=h\left(31/\log^2 n_0\right)\leq 62\cdot \frac{\log\log n_0}{\log^2 n_0}
\end{equation*}
for $n_0$ large enough. The inequality follows from the fact that $h(p)\leq -2p\log p$ for $p$ small enough. We set $r=1-\frac{\log\log n_0}{\log n_0}$ and observe that $r<1-h(\delta+\eps)$ for $n_0$ large enough. Moreover, we have
\begin{equation*}
	h^{-1}(1-r)=h^{-1}\left(\frac{\log\log n_0}{\log n_0}\right)\geq \frac{1}{2\log n_0}.
\end{equation*}
As a result, 
\begin{align}\label{eq:zyablov1}
	1-\frac{\delta}{h^{-1}(1-r)-\eps}&\geq 1-\frac{\delta}{1/2\log n_0-1/\log^2 n_0}\nonumber\\
	&=1-\frac{10/\log^2 n_0}{1/2\log n_0-1/\log^2 n_0}\nonumber\\
	&\geq 1-\frac{40}{\log n_0}.
\end{align}
Combining~\eqref{eq:zyablov1} with the previously described choice for $r$, it follows that
\begin{equation*}
	R\geq r\left(1-\frac{40}{\log n_0}\right)>1-\frac{2\log\log n_0}{\log n_0},
\end{equation*}
and hence the corresponding redundancy is $O\left(n_0\frac{\log\log n_0}{\log n_0}\right)$, as desired.
\end{document}